\renewcommand{\P}{\mathbb{P}}
\newcommand{\Q}{\mathbb{Q}}
\newcommand{\R}{\mathbb{R}}
\newcommand{\W}{\mathcal{W}}
\newcommand{\lrparen}[1]{\left(#1\right)}
\newcommand{\lrsquare}[1]{\left[#1\right]}
\newcommand{\lsquare}[1]{\left[#1\right.}
\newcommand{\rsquare}[1]{\left.#1\right]}
\newcommand{\lrcurly}[1]{\left\{#1\right\}}
\newcommand{\lcurly}[1]{\left\{#1\right.}
\newcommand{\rcurly}[1]{\left.#1\right\}}
\newcommand{\Ft}[1]{\mathcal{F}_{#1}}
\newcommand{\Lnp}[3]{L_{#1}^{#2}(#3)}
\newcommand{\Ldp}[2]{\Lnp{d}{#1}{#2}}
\newcommand{\LdpF}[1]{\Ldp{p}{\Ft{#1}}}
\newcommand{\LdqF}[1]{\Ldp{q}{\Ft{#1}}}
\newcommand{\Ldi}[1]{\Ldp{\infty}{#1}}
\newcommand{\LdiF}[1]{\Ldi{\Ft{#1}}}
\newcommand{\Ldo}[1]{\Ldp{1}{#1}}
\newcommand{\LdoF}[1]{\Ldo{\Ft{#1}}}
\newcommand{\Ldz}[1]{\Ldp{0}{#1}}
\newcommand{\LdzF}[1]{\Ldz{\Ft{#1}}}
\newcommand{\LdpK}[3]{\Ldp{#1}{\Ft{#2};#3}}
\newcommand{\Lp}[2]{\Lnp{}{#1}{#2}}
\newcommand{\LpF}[1]{\Lp{p}{\Ft{#1}}}
\newcommand{\LiF}[1]{\Lp{\infty}{\Ft{#1}}}
\newcommand{\LoF}[1]{\Lp{1}{\Ft{#1}}}
\newcommand{\E}[1]{\mathbb{E}\lrsquare{#1}}
\newcommand{\EQ}[1]{\mathbb{E}^{\mathbb{Q}}\lrsquare{#1}}
\newcommand{\Et}[2]{\E{\left.#1 \right| \mathcal{F}_{#2}}}
\newcommand{\EQt}[2]{\EQ{\left.#1 \right| \mathcal{F}_{#2}}}
\newcommand{\dQdP}{\frac{d\mathbb{Q}}{d\mathbb{P}}}
\newcommand{\dQndP}[1]{\frac{d\mathbb{Q}_{#1}}{d\mathbb{P}}}
\newcommand{\dQidP}{\dQndP{i}}
\newcommand{\genseq}[4]{\lrparen{#1_#4}_{#4=#2}^{#3}}
\newcommand{\seq}[1]{\genseq{#1}{0}{T}{t}}
\newcommand{\FYvblank}{F_{(Y,v)}^{M_t}}
\newcommand{\FYv}[1]{\FYvblank\lrsquare{#1}}
\newcommand{\FQwblank}{\tilde{F}_{(\mathbb{Q},w)}^{M_t}}
\newcommand{\FQw}[1]{\tilde{F}_{(\mathbb{Q},w)}^{M_t}\lrsquare{#1}}
\newcommand{\trans}[1]{#1^{\text{T}}}
\newcommand{\transp}[1]{\trans{\lrparen{#1}}}
\newcommand{\prp}[1]{#1^{\perp}}
\newcommand{\plus}[1]{#1^+}
\newcommand{\plusp}[1]{\plus{(#1)}}
\newcommand{\diag}[1]{\operatorname{diag}\lrparen{#1}}
\newcommand{\cl}{\operatorname{cl}}
\newcommand{\co}{\operatorname{co}}
\newcommand{\Int}{{\rm int\,}}
\newcommand{\as}{\text{a.s.}}
\newcommand{\st}{\text{ s.t. }}
\newcommand{\1}{\mathbf{1}}
\DeclareMathOperator{\esssup}{ess\,sup}
\title{Time consistency of dynamic risk measures in markets with transaction costs}
\author{Zachary Feinstein$\dag$ and Birgit
    Rudloff$^*$$\dag$\thanks{$^*$Corresponding author. Email: brudloff@princeton.edu}\\
    \vspace{12pt}  \normalfont{$\dag$Princeton University, Department of Operations Research and Financial Engineering; and Bendheim Center for Finance, Princeton, NJ 08544, USA}}
\begin{document}
\maketitle

\begin{abstract}
Set-valued dynamic risk measures are defined on $\LdpF{T}$ with $0 \leq p \leq \infty$ and with an image space in the power set of $\LdpF{t}$. Primal and dual representations of dynamic risk measures are deduced.  Definitions of different time consistency properties in the set-valued framework are given.  It is shown that the recursive form for multivariate risk measures as well as an additive property for the acceptance sets is equivalent to a stronger time consistency property called multi-portfolio time consistency.
\begin{keywords}Dynamic risk measures; Transaction costs; Set-valued risk measures; Time consistency; Convex risk measures
\end{keywords}
\end{abstract}

\section{Introduction}
Scalar risk measures, defined axiomatically by~\cite{AD99} and further studied in~\cite{D02,FS02,FS11}, provide the minimal capital required (in some num\'{e}raire) needed to cover the risk of a contingent claim.
In markets with transaction costs multivariate risks, that is $X\in\LdpF{T}$, need to be considered if one does not want to assume that $X$ needs to be liquidated into a given num\'{e}raire asset first, or if one does not want to choose one particular asset as a num\'{e}raire. In such a num\'{e}raire free model initiated by \cite{K99} the value of the risk measure is a collection of portfolio \emph{vectors} describing, for example, the amount of money in different currencies that compensate for the risk of $X$. Then, `minimal capital requirements' correspond to the set of efficient points (the efficient frontier) of the set of risk compensating portfolio vectors. Mathematically it is easier to work with the set of all risk compensating portfolio vectors than with the set of its minimal elements (nevertheless, in both cases the value of the risk measure would be a set!). For example, the whole set is convex for convex risk measures, whereas the set of minimal elements is rarely a convex set. Thus, considering multivariate risks in a market with transaction costs leads naturally to set-valued risk measures.
In a static, one-period framework set-valued risk measures and their dual representations have been studied in \cite{JMT04,HR08,HH10,HHR10} and their values and minimal elements have been computed in \cite{LR11, HRY12}.

In this paper we want to explore set-valued risk measures in a dynamic set-up, when the risk of $X$ is reevaluated at discrete time points $t\in\{0,1,\dots,T\}$. We will define dynamic set-valued risk measures and study their dual representation. Most importantly, we will study time consistency.
In the scalar case, those results are well understood. We refer to
\cite{R04,BN04,DS05,RS05,CDK06,FP06,AD07,CS09,CK10,AP10,FS11}
for the discrete time case and \cite{FG04,D06,DPRG10}
for scalar dynamic risk measures in continuous time.  Time consistency means that risk relations between portfolios are constant (backwards) in time.  It is a well known result in the scalar case that time consistency is equivalent to a recursive representation which allows for calculations via Bellman's principle. Is something similar true in the set-valued case?

This question will be answered in section~\ref{sec_time_consistent}. The difficulty lies in the fact that the time consistency property for scalar risk measures can be generalized to set-valued risk measures in different ways. The most intuitive generalization we will call time consistency. We will show that the equivalence between a recursive form of the risk measure and time consistency, which is a central result in the scalar case, does not hold in the set-valued framework.  Instead, we propose an alternative generalization, which we will call multi-portfolio time consistency and show that this property is indeed equivalent to the recursive form as well as to an additive property for the acceptance sets. Multi-portfolio time consistency is a stronger property than time consistency, however in the scalar framework both notions coincide. In section~\ref{subsec composition}, the idea described in the scalar case in \cite{CS09} to generate a new time-consistent risk measure from any dynamic risk measure by composing them backwards in time is studied in the set-valued case.
Section~\ref{sec_dual_rep} studies the dual representation of convex and coherent dynamic set-valued risk measures using the set-valued duality approach by~\cite{H09}.  Surprisingly, the conditional expectations we would expect in the dual representation of dynamic risk measures (having the scalar case in mind) appear only after a transformation of dual variables from the classical dual pairs in the set-valued theory of \cite{H09} to dual pairs involving vector probability measures. This leads to dual representation in the spirit of the well known scalar case (theorem~\ref{thm_dual} and corollary~\ref{cor_dual}).
The paper concludes with section~\ref{sec_examples} discussing two examples of dynamic risk measures.  First, we show that the set of superhedging portfolios in markets with proportional transaction costs is a multi-portfolio time consistent dynamic coherent risk measure. We show that the recursive form leads to and simplifies the proof of the recursive algorithm given in \cite{LR11}. This result gives a hint that the set-valued recursive form of multi-portfolio time consistent risk measures is very useful in practice and might lead to a set-valued analog of Bellman's principle.  The second example is the dynamic set-valued version of the average value at risk, which is not multi-portfolio time consistent, but a multi-portfolio time consistent version can be obtained by backward composition.

Dynamic risk measures for markets with transaction costs appear in various settings in the very recent literature. In~\cite{BC12}, dynamic coherent risk measures are considered with transaction costs, accomplishing this by considering a family of scalar risk measures via dynamic coherent acceptability indices (studied in~\cite{BCZ11}). In~\cite{JB08} scalar dynamic coherent risk measures with respect to a set of numeraires are considered and time consistency is studied. In~\cite{TL12} dynamic set-valued coherent risk measures are defined for bounded (w.r.t. the solvency cone) random vectors. They provide a corresponding duality result w.r.t  measurable selectors, which does not involve conditional expectation representations in general. Time-consistency issues are not discussed in \cite{TL12}. Compared to these approaches in the literature, we can treat dynamic convex risk measures as well, we are able to provide a dual representation with a clear relation to the conditional expectation representation of the scalar case. Furthermore, the connection to linear vector optimization enables the computation of (dynamic) set-valued risk measures, see the two examples discussed in \cite{LR11, HRY12}.

%%%%%%%%%%%%%%%%%%%
%%%%%%%%%%%%%%%%%%%

\section{Dynamic risk measures}
\label{sec_dynamic_defn}
Consider a filtered probability space $(\Omega,\Ft{T},\seq{\mathcal{F}},\mathbb{P})$ satisfying the usual conditions with $\Ft{0}$ the trivial sigma algebra.  Let $|\cdot|$ be an arbitrary norm in $\mathbb{R}^d$. Denote by $\LdpF{t} = \Ldp{p}{\Omega,\Ft{t},\mathbb{P}}$, $p \in \lrsquare{0,\infty}$.  If $p = 0$ then $\LdzF{t}$ is the linear space of the equivalence classes of $\Ft{t}$-measurable functions $X: \Omega \to \mathbb{R}^d$.  For $p > 0$, $\LdpF{t}$ denotes the linear space of $\Ft{t}$-measurable functions $X: \Omega \to \mathbb{R}^d$ such that $||X||_p^p = \int_{\Omega}|X(\omega)|^p d\mathbb{P} < +\infty$ for $p \in (0,\infty)$, and $||X||_{\infty} = \esssup_{\omega \in \Omega} |X(\omega)| < +\infty$ for $p = \infty$.
We write $\LdpF{t}_+ = \lrcurly{X \in \LdpF{t}: \; X \in \mathbb{R}^d_+ \; \mathbb{P}-\as}$ for the closed convex cone of $\mathbb{R}^d$-valued $\Ft{t}$-measurable random vectors with non-negative components.  Note that an element $X \in \LdpF{t}$ has components $X_1,\dots,X_d$ in $\LpF{t}=L^p_1(\mathcal F_{t})$. We denote by $\LdpK{p}{t}{D_t}$ those random vectors in $\LdpF{t}$ that take $\P$-a.s. values in $D_t$.
(In-)equalities between random vectors
are always understood componentwise
in the $\mathbb{P}$-a.s. sense.
The multiplication between a random variable $\lambda\in \LiF{t}$ and a set of random vectors $D\subseteq\LdpF{T}$ is understood as $\lambda D=\{\lambda Y:Y\in D\}\subseteq\LdpF{T}$ with $(\lambda Y)(\omega)=\lambda(\omega)Y(\omega)$.

Let $d$ be the number of assets in a discrete time market with $t\in\{0,1,\dots,T\}$. As in \cite{K99} and discussed in \cite{S04,KS09}, the portfolios in this paper are in ``physical units" of an asset rather than the value in a fixed num\'{e}raire.  That is, for a portfolio $X \in \LdpF{t}$, the values of $X_i$ (for $1 \leq i \leq d$) are the number of units of asset $i$ in the portfolio at time $t$.

We will define set-valued risk measures as a function mapping a contingent claim into a set of portfolios which can be used to cover the risk of that claim.  In this way we must introduce the set of potential risk covering portfolios.  We will let $\tilde{M}_t$ be an $\Ft{t}$-measurable set such that $\tilde{M}_t(\omega)$ is a subspace of $\R^d$ for almost every $\omega \in \Omega$.  Then $M_t := \LdpK{p}{t}{\tilde{M}_t}$ is a closed (weak* closed if $p = +\infty$) linear subspace of $\LdpF{t}$, see section~5.4 and proposition~5.5.1 in \cite{KS09}.  We call the portfolios in $M_t$ the eligible portfolios.
The set of eligible portfolios includes possibilities such as: a single asset or currency, a basket of assets, or a specified ratio of assets.
\begin{example}
\label{ex_eligible}
In the scalar framework (see e.g. \cite{AD07,R04,DS05,CDK06,RS05,BN04,FP06,CS09,CK10}) the risk is covered by the num\'{e}raire only.  This corresponds to $\tilde{M}_t(\omega) = \lrcurly{m \in \R^d: \forall j \in \{2,3,\dots,d\}: m_j = 0}$ for almost every $\omega \in \Omega$; thus the set of eligible assets is \[M_t = \lrcurly{m \in \LdpF{t}: \forall j \in \{2,3,\dots,d\}: m_j = 0}\]
when the num\'{e}raire is taken to be the first asset.
\end{example}
It is possible that the set of eligible portfolios may change through time and even depend on the state of the world, however a more common situation is for a constant underlying set to be used.  That is, $\tilde{M}_t = M_0$ almost surely for every time $t \in \{1,\dots,T\}$ where $M_0$ is a subspace of $\R^d$.
This is the case for the eligible portfolios for scalar risk measures, as described in example~\ref{ex_eligible}.  Notice that a constant set of eligible assets in particular implies $M_t \subseteq M_{t+1}$ for all times $t$,
which we will see in section~\ref{sec_time_consistent} to be a necessary property for the equivalence of the recursive form and a property of the acceptance sets.

We will denote the set of nonnegative eligible portfolios by $\lrparen{M_t}_+= M_t \cap \LdpF{t}_+$ and the set of strictly positive eligible portfolios by $\lrparen{M_t}_{++}= M_t \cap \LdpF{t}_{++}$. We can analogously define $\lrparen{M_t}_-$ and $\lrparen{M_t}_{--}$.  We will assume $(M_t)_+$ is nontrivial, i.e. $(M_t)_+ \neq \{0\}$.  This property is satisfied for scalar risk measures, as given in example~\ref{ex_eligible}.

\subsection{Dynamic risk measures}
\label{subsec dynRM}
A conditional risk measure $R_t$ is a function which maps a $d$-dimensional random variable $X$ into a subset of the power set of the space $M_t$ of eligible portfolios at time $t$. The set $R_t(X)$ contains those portfolio vectors that can be used to cover the risk of the input portfolio.  Consider $\LdpF{T}$ with $p \in \lrsquare{0,+\infty}$ as the pre-image space and recall that $M_t\subseteq\LdpF{t}$.  In this paper we study risk measures
\[
R_t: \LdpF{T} \to \mathcal{P}\lrparen{M_t;(M_t)_+} := \lrcurly{D \subseteq M_t: D = D + (M_t)_+}.
\]
Conceptually the image space $\mathcal{P}\lrparen{M_t;(M_t)_+}$ contains the proper sets to consider.  This is since if some portfolio $m_t \in M_t$ covers the risk of a portfolio then so should any eligible portfolio that is (almost surely) greater than $m_t$, i.e. an element of $m_t + (M_t)_+$. The choice of the image space is further discussed in remark~\ref{rem image space}.

\begin{definition}
\label{defn_conditional}
A function $R_t: \LdpF{T} \to \mathcal{P}\lrparen{M_t;(M_t)_+}$ is a \textbf{\emph{conditional risk measure}} at time $t$ if it is
\begin{enumerate}
\item $M_t$-translative: $\forall m_t \in M_t: R_t\lrparen{X + m_t } = R_t(X) - m_t$;
\item $\LdpF{T}_{+}$-monotone: $Y-X \in \LdpF{T}_{+} \Rightarrow R_t(Y) \supseteq R_t(X)$;
\item finite at zero: $\emptyset \neq R_t(0) \neq M_t$.
\end{enumerate}
\end{definition}

Conceptually, $R_t(X)$ is the set of eligible portfolios which compensate for the risk of the portfolio $X$ at time $t$.  In this sense, $M_t$-translativity implies that if part of the risk is covered, only the remaining risk needs to be considered. This property in the scalar framework is sometimes called cash-invariance.  $\LdpF{T}_{+}$-monotonicity also has a clear interpretation: if a random vector $Y\in\LdpF{T}$ dominates another random vector $X\in\LdpF{T}$, then there should be more possibilities to compensate the risk of $Y$ (in particular cheaper ones) than for $X$.  Finiteness at zero means that there is an eligible portfolio at time $t$ which covers the risk of the zero payoff, but not all portfolios compensate for it.

Beyond these bare minimum of properties given in definition~\ref{defn_conditional}, there are additional properties that are desirable and are described in definition~\ref{defn_properties_conditional}.  Notable among these is $K_t$-compatibility (defined below).  This property allows for a portfolio manager to take advantage of trading opportunities
when assessing the risk of a position.  For such a concept we need a market model. Let us consider a market with proportional transaction costs as in \cite{K99,S04,KS09}, which is modeled by a sequence of solvency cones $\seq{K}$ describing a conical market model. $K_t$ is a solvency cone at time $t$ if it is an $\Ft{t}$-measurable cone such that for every $\omega \in \Omega$, $K_t(\omega)$ is a closed convex cone with $\mathbb{R}_+^d \subseteq K_t(\omega) \subsetneq \mathbb{R}^d$. $K_t$ can be generated  by the time $t$ bid-ask exchange rates between any two assets, for details see \cite{K99,S04,KS09} and examples~\ref{ex_solvency_no transaction} and~\ref{ex_solvency_transaction} below. The financial interpretation of the solvency cone at time $t$ is the set of positions which can be exchanged into a nonnegative portfolio at time $t$ by trading according to the prevailing bid-ask exchange rates. Let us denote by $K_t^{M_t}:=M_t\cap \LdpK{p}{t}{K_t}\subseteq \LdpF{t}$ the cone of solvent eligible portfolios only.
\begin{example}
\label{ex_solvency_no transaction}
In frictionless markets, the solvency cone $K_t(\omega)$ at time $t$ and in state $\omega \in \Omega$ is the positive halfspace defined by \[K_t(\omega) = \lrcurly{k \in \R^d:  \trans{S_t(\omega)}k\geq 0}\] where $S_t$ is the (random) vector of prices in some num\'{e}raire.
\end{example}
\begin{example}
\label{ex_solvency_transaction}
Assume that we have a sequence of (random) bid-ask matrices $\seq{\Pi}$, as in~\cite{S04}, where $\pi_t^{ij}$ is the (random) number of units of asset $i$ needed to purchase one unit of asset $j$ at time $t$ (where $i,j \in \{1,2,...,d\}$). Then the corresponding market model is given by the sequence of solvency cones $\seq{K}$ defined such that $K_t(\omega) \subseteq \R^d$ is the convex cone spanned by the unit vectors $e^i$ for every $i \in \{1,2,...,d\}$ and the vectors $\pi_t^{ij}(\omega) e^i - e^j$ for every $i,j \in \{1,2,...,d\}$.
\end{example}

\begin{definition}
\label{defn_properties_conditional}
A conditional risk measure $R_t: \LdpF{T} \to \mathcal{P}\lrparen{M_t;(M_t)_+}$ is called
\begin{itemize}
\item \textbf{\emph{convex}} if for all $X,Y \in \LdpF{T}$, for all $\lambda \in [0,1]$
\[R_t(\lambda X + (1-\lambda)Y) \supseteq \lambda R_t(X) + (1-\lambda)R_t(Y);\]
\item \textbf{\emph{conditionally convex}} if for all $X,Y \in \LdpF{T}$, for all $\lambda \in \LiF{t} \st 0 \leq\lambda\leq 1$
\[R_t(\lambda X + (1-\lambda)Y) \supseteq \lambda R_t(X) + (1-\lambda)R_t(Y);\]
\item \textbf{\emph{positive homogeneous}} if for all $X \in \LdpF{T}$, for all $\lambda \in \R_{++}$
\[R_t(\lambda X) = \lambda R_t(X);\]
\item \textbf{\emph{conditionally positive homogeneous}} if for all $X \in \LdpF{T}$, for all $\lambda \in \LiF{t}_{++}$
\[R_t(\lambda X) = \lambda R_t(X);\]
\item \textbf{\emph{subadditive}} if for all $X,Y \in \LdpF{T}$
\[R_t(X + Y) \supseteq R_t(X) + R_t(Y);\]
\item \textbf{\emph{coherent}} if it is convex and positive homogeneous, or subadditive and positive homogeneous;
\item \textbf{\emph{conditionally coherent}} if it is conditionally convex and conditionally positive homogeneous, or subadditive and conditionally positive homogeneous;
\item \textbf{\emph{normalized}} if $R_t(X) = R_t(X) + R_t(0)$
for every $X \in \LdpF{T}$;
\item \textbf{\emph{closed}} if the graph of $R_t$, defined by
\begin{equation}
\label{graph}
\operatorname{graph}R_t = \lrcurly{(X,u) \in \LdpF{T} \times M_t: \; u \in R_t(X)},
\end{equation}
is closed ($\sigma\lrparen{\LdiF{T},\LdoF{T}}$-closed if $p = \infty$);
\item \textbf{\emph{$K_t$-compatible}} (i.e. allowing trading at time $t$) if $R_t(X)=R_t(X)+K_t^{M_t}$ where $K_t$ is a solvency cone at time $t$;
\item \textbf{\emph{local}} if $1_A R_t\lrparen{1_A X} = 1_A R_t(X) $ for all $X \in \LdpF{T}$ and $A \in \Ft{t}$, where the stochastic indicator function is denoted by $1_A(\omega)$ being $1$ if $\omega\in A$ and $0$ if $\omega\notin A$.
\end{itemize}
\end{definition}

Before discussing the properties mentioned above, we will define dynamic risk measures.  A dynamic risk measure is a series of conditional risk measures.  In this way we can construct the process of risk compensating portfolios.
\begin{definition}
\label{defn_dynamic}
$\seq{R}$ is a \textbf{\emph{dynamic risk measure}} if $R_t$ is a conditional risk measure for every $t \in \{0,1,...,T\}$.
\end{definition}

\begin{definition}
\label{defn_properties_dynRM}
A dynamic risk measure $\seq{R}$ is said to have one of the properties given in definition~\ref{defn_properties_conditional} if $R_t$ has this property for every $t \in \{0,1,...,T\}$.

A dynamic risk measure $\seq{R}$ is called \textbf{\emph{market-compatible}} (i.e. trading is allowed at any time point $t \in \{0,1,...,T\}$) if $R_t$ is $K_t$-compatible for every $t \in \{0,1,...,T\}$.
\end{definition}

We now discuss the properties from definition~\ref{defn_properties_conditional} in the order given.

(Conditional) convexity is regarded as a useful property for dynamic risk measures because it defines a regulatory scheme which promotes diversification as discussed in \cite{FS02} in the scalar static case and \cite{HHR10} in the set-valued static case.  The values of a (conditionally) convex set-valued conditional risk measure $R_t$ are (conditionally) convex. The distinction between convexity and conditional convexity is whether the combination of portfolios can depend on the state of the market.  Trivially it can be seen that conditional convexity is a stronger property than convexity.  Typically, in the scalar framework (see e.g.~\cite{DS05,FP06,CDK06}), only conditional convexity is considered; though~\cite{KS05,BN04,BN08,BN09} consider convex local risk measures instead.

If $R_t$ is positive homogeneous then $R_t(0)$ is a cone and if $R_t$ is sublinear (positive homogeneous and subadditive), then $R_t(0)$ is a convex cone which is included in the recession cone of $R_t(X)$
for all $X \in \LdpF{T}$.
Therefore, as in the scalar framework, any closed coherent risk measure is normalized.

Let us give an interpretation of the normalization property. A normalized closed conditional risk measure $R_t$ satisfies $\lrparen{M_t}_+ \subseteq R_t(0)$ and $R_t(0) \cap \lrparen{M_t}_{--} = \emptyset$. Thus, nonnegative portfolios cover the risk of the zero payoff, but strictly negative portfolios cannot cover that risk. This clearly is the set-valued analog of the scalar normalization property given by $\rho(0)=0$.
 As it is typical in the set-valued framework, there is not a unique generalization of the normalization property for scalar risk measures to the set-valued case. Note that for set-valued risk measures normalization could be defined also in a `weaker' sense, as it was done in \cite{HH10,HR08}, by directly imposing $\lrparen{M_t}_+ \subseteq R_t(0)$ and $R_t(0) \cap \lrparen{M_t}_{--} = \emptyset$. In this paper, we will use the definition given in definition~\ref{defn_properties_conditional} for several reasons. First, it turns out to be the appropriate property when discussing time consistency of risk measures. Also, in the closed coherent set-valued case both notions coincide, see property~3.1 in \cite{JMT04}.  Second, any conditional risk measure (that is finite at zero) can be normalized in the following way.
The normalized version $\tilde{R}_t$ of a conditional risk measure $R_t$ is defined by
\[
\tilde{R}_t(X) = R_t(X) -^. R_t(0) := \lrcurly{u \in M_t: u +R_t(0) \subseteq R_t(X)}
\]
for every $X \in \LdpF{T}$.  The operation $A -^. B$ for sets $A,B$ is sometimes called the Minkowski difference (\cite{H50}) or geometric difference (\cite{P81}). This difference notation trivially shows how the normalization procedure introduced above relates to the normalized version for scalar risk measures given by $\rho_t(X) - \rho_t(0)$.  The normalized version $\tilde{R}_t(X)$ will also satisfy $(M_t)_+ \subseteq \tilde{R}_t(0)$ and $\tilde{R}_t(0) \cap \lrparen{M_t}_{--} = \emptyset$.

For convex duality results a closedness property is necessary.
Any set-valued function $R_t$ with a closed graph has closed values, i.e. $R_t(X)$ is a closed set for every $X \in \LdpF{T}$.

As previously mentioned, $K_t$-compatibility means that trading at time $t$ is allowed. If $u \in R_t(X)$ is a risk compensating portfolio for $X$ at time $t$, then also every portfolio in $u+K_t^{M_t}$ compensates for the risk of $X$.  This is accomplished by trading the new risk covering portfolio $v \in u + K_t^{M_t}$ into an element in $u+\LdpF{t}_+$, and thus into $u$ as well. Market-compatibility is the corresponding property for dynamic risk measures, when trading at each time point $t \in \{0,1,...,T\}$ is allowed.

The local property is a desirable property for dynamic set-valued risk measures. It has been studied for scalar dynamic risk measures for example in \cite{DS05,CK10,FKV11}. The local property ensures that the output of a risk measure at time $t$ at a specific state in $\Ft{t}$ only depends on the payoff in scenarios that can still be reached from this state, as we would expect of a risk compensating portfolio.

\begin{proposition}
Any conditionally convex risk measure $R_t: \LdpF{T} \to \mathcal{P}\lrparen{M_t;(M_t)_+}$ is local.
\end{proposition}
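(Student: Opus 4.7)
The plan is to apply the conditional convexity axiom with the random weight $\lambda = 1_A$, which lies in $\LiF{t}$ because $A \in \Ft{t}$ and $0 \leq 1_A \leq 1$ almost surely. The two inclusions that together constitute locality will each come from a single application of conditional convexity, triggered by a suitable convex decomposition of the argument, followed by multiplication of both sides by $1_A$ and routine indicator-function arithmetic. The key algebraic facts I will lean on are $1_A \cdot 1_A = 1_A$, $1_A \cdot 1_{A^c} = 0$, the distributivity $1_A(S_1 + S_2) = 1_A S_1 + 1_A S_2$ for Minkowski sums of subsets of $\LdpF{t}$, and the convention $0 \cdot S = \{0\}$ whenever $S$ is nonempty.

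For the inclusion $1_A R_t(1_A X) \supseteq 1_A R_t(X)$, I would write the trivial identity $1_A X = 1_A \cdot X + (1 - 1_A) \cdot 0$ and read off from conditional convexity
\[
R_t(1_A X) \;\supseteq\; 1_A R_t(X) + 1_{A^c} R_t(0).
\]
Multiplying by $1_A$ and distributing over the Minkowski sum, the second summand becomes $(1_A 1_{A^c})\, R_t(0) = 0 \cdot R_t(0) = \{0\}$, where nonemptiness of $R_t(0)$, which is what keeps this equal to $\{0\}$ rather than $\emptyset$, is guaranteed by the `finite at zero' axiom.

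For the reverse inclusion $1_A R_t(X) \supseteq 1_A R_t(1_A X)$, I would apply the same idea with the other natural decomposition $X = 1_A \cdot (1_A X) + (1 - 1_A) \cdot X$, yielding
\[
R_t(X) \;\supseteq\; 1_A R_t(1_A X) + 1_{A^c} R_t(X),
\]
and once again multiply by $1_A$. Assuming $R_t(X)$ is nonempty (the only interesting case, for otherwise both sides of the locality identity vanish trivially), the $1_{A^c}$ summand collapses to $\{0\}$ by the same mechanism as above, producing the stated inclusion.

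There is no serious obstacle here: once the two convex decompositions of $1_A X$ and of $X$ are identified, the proof reduces to indicator-function bookkeeping. The only point requiring a moment's care is the collapse of Minkowski sums against a zero scalar factor, which is why the nonemptiness guarantee from `finite at zero' (and the implicit restriction to arguments $X$ with $R_t(X) \neq \emptyset$) matters; aside from that, everything is immediate from the axioms.
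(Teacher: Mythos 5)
Your proof is correct and follows essentially the same route as the paper's: the same two decompositions $1_A X = 1_A\cdot X + 1_{A^c}\cdot 0$ and $X = 1_A\cdot(1_A X) + 1_{A^c}\cdot X$, each fed into conditional convexity with $\lambda = 1_A$ and then multiplied through by $1_A$. The only cosmetic difference is the convention used for the degenerate product ($0\cdot D = \{0\}$ for nonempty $D$ in your version versus $0 \cdot R_t(X) = (M_t)_+$ in the paper); since both sets contain $0$, which is all the argument needs, this does not change anything of substance.
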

The proof is an adaption from the proof of lemma~3.4 in \cite{FKV11} with the convention that $0 \cdot R_t(X) = (M_t)_+$ for any $X \in \LdpF{T}$.
\begin{proof}
Let $X\in\LdpF{T}$ and $A\in\Ft{t}$. By convexity it is obvious that
\begin{align*}
R_t(1_A X) &\supseteq 1_A R_t(X) + 1_{A^c} R_t(0)
\\
&= 1_A R_t(1_A(1_AX) + 1_{A^c}X) + 1_{A^c} R_t(0)\supseteq 1_A R_t(1_A X) + 1_{A^c} R_t(0).
\end{align*}
  Multiplying through by $1_A$, then the left and right sides are equal, therefore $1_A R_t(1_A X) = 1_A R_t(X)$.
\end{proof}

\begin{remark}
\label{rem image space}
From $M_t$-translativity and monotonicity it follows that $R_t(X) = R_t(X) + \lrparen{M_t}_+$, which mathematically justifies the choice of the image space of a conditional risk measure to be $\mathcal{P}\lrparen{M_t;(M_t)_+}$ instead of the full powerset $2^{M_t}$.
The image space of a closed convex conditional risk measure is
\begin{equation*}
\mathcal{G}(M_t;(M_t)_+) = \lrcurly{D \subseteq M_t: D = \operatorname{cl}\operatorname{co}\lrparen{D + \lrparen{M_t}_+}}.
\end{equation*}
If $R_t$ is additionally $K_t$-compatible, then $R_t$ maps into
\begin{equation*}
\mathcal{G}(M_t;K_t^{M_t}) = \lrcurly{D \subseteq M_t: D = \operatorname{cl}\operatorname{co}\lrparen{D + K_t^{M_t}}}.
\end{equation*}
\end{remark}

\subsection{Dynamic acceptance sets}
Acceptance sets are intrinsically linked to risk measures.
A set-valued risk measure provides the portfolios which compensate for the risk of a contingent claim, whereas a portfolio is an element of the acceptance set if its risk does not need to be covered.  Typically, a conditional acceptance set at time $t$ is given by a regulator or risk manager.  We will show below in remark~\ref{rem_Rt_At} that there exists a bijective relation between risk measures and acceptance sets.
Expanding upon the definition given in \cite{HHR10} for the acceptance set of a static set-valued risk measures, we now define acceptance sets for dynamic set-valued risk measures.
\begin{definition}
\label{defn_acceptance}
$A_t \subseteq \LdpF{T}$ is a \textbf{\emph{conditional acceptance set}} at time $t$ if it satisfies the following:
\begin{enumerate}
\item \label{A1a} $M_t  \cap A_t \neq \emptyset $,
\item \label{A1b} $M_t  \cap \lrparen{\LdpF{T}\backslash A_t} \neq \emptyset $, and
\item $A_t + \LdpF{T}_{+} \subseteq A_t$.
\end{enumerate}
\end{definition}
The properties given in definition \ref{defn_acceptance} are the minimal requirements for a conditional acceptance set.  As noted in \cite{HHR10} these are conditions that any rational regulator would agree upon.  They imply that there is an eligible portfolio at time $t$ that is accepted by the regulator, but there exists an eligible portfolio which would itself require compensation to make it acceptable.  The last condition means that if a portfolio is acceptable to the regulator then any portfolio with at least as much of each asset will also be acceptable.
\begin{remark}
\label{rem_Rt_At}
It can be seen that there is a one-to-one relationship between acceptance sets and risk measures in the dynamic setting.  In particular, if we define the set
\[
    A_{R_t} := \lrcurly{X \in \LdpF{T}: 0 \in R_t(X) }
\]
for a given conditional risk measure $R_t$, then $A_{R_t}$ satisfies the properties of definition~\ref{defn_acceptance} and thus is a conditional acceptance set.
 On the other hand, defining
\begin{equation}
\label{RtAt}
    R_t^{A_t}(X) := \lrcurly{u \in M_t: X + u  \in A_t }
\end{equation}
for a given conditional acceptance set $A_t$ ensures that $R_t^{A_t}$ is a conditional risk measure.  Further, it can trivially be shown that $A_t = A_{R_t^{A_t}} $ and $R_t(\cdot) = R_t^{A_{R_t}}(\cdot)$. Equation \eqref{RtAt} is often referred to as the primal representation for the risk measure and provides the capital requirement interpretation of a risk measure.
\end{remark}

\begin{example}
\label{ex_worst_case}
The worst case risk measure is given by $R_t^{WC}(X) = \lrcurly{u \in M_t: X + u \in \LdpF{T}_+}$ with the acceptance set $\LdpF{T}_+$.
\end{example}

The following proposition is a list of corresponding properties between classes of conditional risk measures and classes of conditional acceptance sets.  If a risk measure $R_t$ has the (risk measure) property then $A_{R_t}$ has the corresponding (acceptance set) property, and vice versa: if an acceptance set $A_t$ has the (acceptance set) property then $R_t^{A_t}$ has the corresponding (risk measure) property.

\begin{proposition}
\label{propo A_t-R_t}
The following properties are in a one-to-one relationship for a conditional risk measure $R_t: \LdpF{T} \to \mathcal{P}\lrparen{M_t;(M_t)_+}$ and an acceptance set $A_t \subseteq \LdpF{T}$ at time $t$:
\begin{enumerate}
\item $R_t$ is $B$-monotone, and $A_t + B \subseteq A_t $, where $B \subseteq \LdpF{T}$;
\item \label{At-Rt_recession} $R_t$ maps into the set
\begin{equation*}
\mathcal{P}\lrparen{M_t;C} = \lrcurly{D \subseteq M_t: D = D + C },
\end{equation*}
and $A_t + C \subseteq A_t $, where $C \subseteq M_t$ with $0 \in \cl C$;
\item $R_t$ is (conditionally) convex, and $A_t$ is (conditionally) convex;
\item $R_t$ is (conditionally) positive homogeneous, and $A_t$ is a (conditional) cone;
\item $R_t$ is subadditive, and $A_t + A_t \subseteq A_t$;
\item $R_t$ is sublinear, and $A_t$ is a convex cone;
\item $R_t$ has a closed graph, and $A_t$ is closed;
\item $R_t(X) \neq \emptyset $ for all $X \in \LdpF{T}$, and $\LdpF{T} = A_t + M_t $;
\item $R_t(X) \neq M_t $ for all $X \in \LdpF{T}$, and $\LdpF{T} = \lrparen{\LdpF{T} \backslash A_t} + M_t $.
\end{enumerate}
\end{proposition}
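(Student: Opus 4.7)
The plan is to exploit the bijection from remark~\ref{rem_Rt_At} between conditional risk measures and acceptance sets, using the explicit primal representations $A_{R_t}=\lrcurly{X\in\LdpF{T}:0\in R_t(X)}$ and $R_t^{A_t}(X)=\lrcurly{u\in M_t:X+u\in A_t}$. For each of the nine items the task splits into two implications: starting from a risk measure with the stated property, deduce the corresponding acceptance-set property of $A_{R_t}$; and starting from an acceptance set with the listed property, deduce the corresponding risk-measure property for $R_t^{A_t}$. Throughout, one may freely invoke $M_t$-translativity, $\LdpF{T}_+$-monotonicity, and finiteness at zero, which are built into the definition of a conditional risk measure.

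For item~1, if $R_t$ is $B$-monotone and $X\in A_{R_t}$, $b\in B$, then $R_t(X+b)\supseteq R_t(X)\ni 0$ yields $X+b\in A_{R_t}$; conversely, if $Y-X\in B$ and $u\in R_t^{A_t}(X)$ then $Y+u=(X+u)+(Y-X)\in A_t+B\subseteq A_t$. Item~2 combines this idea with $M_t$-translativity: from $R_t(X)=R_t(X)+C$ and $0\in R_t(X)$ one obtains $c\in R_t(X)$ for every $c\in C$, so translativity gives $0\in R_t(X+c)$ and hence $X+c\in A_{R_t}$. The reverse direction is analogous, with the condition $0\in\cl C$ being used to secure the inclusion $R_t^{A_t}(X)\subseteq R_t^{A_t}(X)+C$ via a closure argument.

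Items 3--6 reduce to direct algebraic checks. For (conditional) convexity one uses $R_t(\lambda X+(1-\lambda)Y)\supseteq\lambda R_t(X)+(1-\lambda)R_t(Y)\ni 0$ when $X,Y\in A_t$; conversely, $\lambda(X+u)+(1-\lambda)(Y+v)\in A_t$ yields $\lambda u+(1-\lambda)v\in R_t^{A_t}(\lambda X+(1-\lambda)Y)$. (Conditional) positive homogeneity (item~4) and subadditivity (item~5) follow from completely parallel substitutions with $\lambda X$ or $X+Y$ in place of the convex combination, and item~6 follows by combining items~3 and~4.

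Finally, item~7 uses that $A_{R_t}$ is the preimage of $\operatorname{graph}R_t$ under the continuous embedding $X\mapsto(X,0)$, while $\operatorname{graph}R_t^{A_t}$ is the preimage of $A_t$ under the jointly continuous addition map $(X,u)\mapsto X+u$; this argument extends to the weak* topology when $p=\infty$ since addition is jointly continuous in any topological vector space. Items~8 and~9 are set-theoretic: $R_t^{A_t}(X)\neq\emptyset$ iff $X\in A_t-M_t=A_t+M_t$ (as $M_t$ is a linear subspace), and analogously $R_t^{A_t}(X)\neq M_t$ iff $X\in(\LdpF{T}\setminus A_t)+M_t$. The main subtlety lies in item~2, where the topological condition $0\in\cl C$ is needed to recover $D=D+C$ from mere stability $D+C\subseteq D$, and in item~7 one must take care with weak* continuity of addition in the $p=\infty$ setting; both can be handled once set up carefully, but they are where the most thought is required.
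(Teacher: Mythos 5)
Your overall strategy is exactly the one the paper has in mind: the paper's ``proof'' is a one-line citation of proposition~6.5 in \cite{HHR10}, and the intended adaptation is precisely the direct verification through the bijection $A_{R_t}\leftrightarrow R_t^{A_t}$ of remark~\ref{rem_Rt_At} that you carry out. Items 1, 3--6, 8, 9 and the forward direction of 2 are correct as you sketch them, and your item~7 argument (preimage of the graph under $X\mapsto(X,0)$, resp.\ preimage of $A_t$ under $(X,u)\mapsto X+u$, with joint continuity of addition holding in any topological vector space, hence also for $\sigma(\LdiF{T},\LdoF{T})$) is fine.

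The one genuine gap is the step you defer in item~2: recovering $R_t^{A_t}(X)\subseteq R_t^{A_t}(X)+C$ from $A_t+C\subseteq A_t$ and $0\in\cl C$. No ``closure argument'' can do this, because the implication $\bigl(D+C\subseteq D$ and $0\in\cl C\bigr)\Rightarrow D\subseteq D+C$ is false even for closed $D$: take $D=[0,\infty)\subseteq\R$ and $C=(0,1]$, so that $D+C=(0,\infty)\subsetneq D$ and $0\notin D+C$. This is realized by a bona fide conditional risk measure, e.g.\ $d=1$, $t=0$, $M_0=\LpF{0}$, $A_0=\lrcurly{X\in\LpF{T}:X>0\sas}$ and $C$ the constants in $(0,1]$: then $A_0+C\subseteq A_0$, but $R_0^{A_0}(0)=\lrcurly{u\in M_0:u>0\sas}$ is not equal to $R_0^{A_0}(0)+C$. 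So as literally stated, the acceptance-set-to-risk-measure direction of item~2 needs the stronger hypothesis $0\in C$ (under which $D\subseteq D+C$ is trivial), or a reformulation with closures. Note that in the only two places the paper invokes item~2 --- normalization with $C=R_t(0)$ for a closed normalized risk measure, and $K_t$-compatibility with $C=K_t^{M_t}$ --- one does have $0\in C$, so the gap does not propagate; but you should either impose $0\in C$ or exhibit the extra structure you intend to use, rather than assert that $0\in\cl C$ suffices.
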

\begin{proof}
All these properties follow from adaptations from proposition 6.5 in \cite{HHR10}.
\end{proof}
It can be seen that property~\ref{At-Rt_recession} in proposition~\ref{propo A_t-R_t} corresponds to normalization ($C = R_t(0) = A_t \cap M_t$) and $K_t$-compatibility ($C = K_t^{M_t}$).

\begin{example}[{[Example~\ref{ex_worst_case} continued]} ]
The worst case risk measure is a closed coherent risk measure.  In fact, the worst case acceptance set is the smallest closed normalized acceptance set.
\end{example}

\section{Time consistency}
\label{sec_time_consistent}
In this section we will study two different time consistency properties for set-valued dynamic risk measures.  Most generally, time consistency properties concern how risk relates through time.  A risk manager would want the risk compensating portfolios to not conflict with each other as time progresses.  In the scalar framework time consistency is defined by
\[
	\rho_{t+1}(X) \geq \rho_{t+1}(Y)  \Rightarrow \rho_t(X) \geq \rho_t(Y)
\]
for all times $t$ and portfolios $X,Y \in \LpF{T}$.  This means that if it is known, a priori, that one portfolio is less risky than another in (almost) every state of the world, this relation holds backwards in time.  In particular, by compensating for the risks of $X$ then the risks of $Y$ could also be covered.

A key result in the scalar framework is the equivalence between time consistency and a recursive representation given by $\rho_t(X) = \rho_t(-\rho_{t+1}(X))$ for all portfolios $X$ and times $t$.  This recursive form relates to Bellman's principle which provides an efficient method for calculation.  Time consistency, the recursive formulation, and other equivalent properties has been studied in great detail in the scalar case in papers such as \cite{AD07,R04,DS05,CDK06,RS05,BN08,FP06,CS09,CK10}.

When generalizing to the set-valued framework we present two different properties which are analogous to time consistency in the scalar framework.  The first property we present, which we call `time consistency' is the intuitive generalization of scalar time consistency.  The other property, which we call `multi-portfolio time consistency,' is shown to be equivalent to the recursive form for set-valued risk measures.  We will demonstrate how both notions are related to each other and under which conditions they coincide.

\begin{definition}
\label{defn_tc}
A dynamic risk measure $\seq{R}$ is called \textbf{\emph{time consistent}} if for all times $t \in \{0,1,...,T-1\}$ and $X, Y \in \LdpF{T}$ it holds
\begin{equation*}
R_{t+1}(X) \subseteq R_{t+1}(Y) \Rightarrow R_t(X) \subseteq R_t(Y).
\end{equation*}
\end{definition}

As in the scalar case described previously, time consistency implies that if it is known at some future time that every eligible portfolio which covers the risk of a claim $X$ also would cover the claim $Y$, then any prior time if a portfolio covers the risk of $X$ it would also do so for $Y$.

The recursive form for set-valued risk measures is understood pointwise.  That is, \[R_t(-R_{t+1}(X)) := \bigcup_{Z \in R_{t+1}(X)} R_t(-Z).\]  Then the generalization of the recursion from the scalar case is given by $R_t(X) = R_t(-R_{t+1}(X))$.  This can be seen as a set-valued Bellman's principle.

\subsection{Multi-portfolio time consistency}
We introduce multi-portfolio time consistency in definition~\ref{defn_mptc} below.  In theorem~\ref{thm_equiv_tc} we show that multi-portfolio time consistency is equivalent to the recursive form for normalized risk measures.  In example~\ref{ex_tc}, we show that time consistency is a weaker property than multi-portfolio time consistency. Furthermore, we will show in lemma~\ref{lemma_tc2recursive} and remark~\ref{rem tc=mptc scalar} below that both concepts coincide in the scalar case.

\begin{definition}
\label{defn_mptc}
A dynamic risk measure $\seq{R}$ is called \textbf{\emph{multi-portfolio time consistent}} if for all times $t \in \{0,1,...,T-1\}$ and all sets $A,B \subseteq \LdpF{T}$ the implication
\begin{equation}
\bigcup_{X \in A} R_{t+1}(X) \subseteq \bigcup_{Y \in B} R_{t+1}(Y) \Rightarrow \bigcup_{X \in A} R_t(X) \subseteq \bigcup_{Y \in B} R_t(Y)
\end{equation}
is satisfied.
\end{definition}

Conceptually multi-portfolio time consistency of a dynamic risk measure means that if a market sector, i.e. a collection of portfolios, is more risky than another sector at some future time, then the same must be true for prior times.  A market sector $A$ is considered more risky than sector $B$ if any portfolio which compensates for the risk of a claim in $A$ would also cover the risk of a claim in $B$.  Trivially a risk measure which is multi-portfolio time consistent is also time consistent, but the converse is not true in general (see example~\ref{ex_tc}).

We prove now that the recursive structure is equivalent to multi-portfolio time consistency for normalized risk measures.  Additionally, given extra assumptions on the sets of eligible assets, we demonstrate the equivalence between these properties and a property on the acceptance sets.  We will later show, in section~\ref{subsec composition}, how to use these properties to construct multi-portfolio time consistent risk measures.

For the remainder of this section the convention that $A_t = A_{R_t}$ for a conditional risk measure $R_t$ at time $t$ will be used.  The $s$-stepped acceptance set at time $t$ is defined as below.
\begin{definition}
The set $A_{t,t+s} \subseteq \LdpF{t+s}$ defined by
\begin{equation*}
A_{t,t+s} = \lrcurly{X \in \LdpF{t+s}: 0 \in R_t(X )}=A_t\cap\LdpF{t+s}
\end{equation*}
is called \textbf{\emph{$s$-stepped acceptance set}} at time $t$.
\end{definition}
Furthermore, denote the intersection of the $s$-stepped acceptance set at time $t$ and the set of time $t+s$ eligible portfolios by $A_{t,t+s}^{M_{t+s}}$, that is,
\[
    A_{t,t+s}^{M_{t+s}} = A_{t,t+s} \cap M_{t+s}=A_t\cap M_{t+s}.
\]

\begin{theorem}
\label{thm_equiv_tc}
For a normalized dynamic risk measure $\seq{R}$ the following are equivalent:
\begin{enumerate}
\item \label{thm_equiv_tctc}$\seq{R}$ is multi-portfolio time consistent;
\item \label{thm_equiv_eq} for every time $t \in \{0,1,...,T-1\}$ and $A,B \subseteq \LdpF{T}$
\begin{equation*}
\bigcup_{X \in A} R_{t+1}(X) = \bigcup_{Y \in B} R_{t+1}(Y) \Rightarrow \bigcup_{X \in A} R_t(X) = \bigcup_{Y \in B} R_t(Y);
\end{equation*}
\item \label{thm_equiv_recursive} $R_t$ is recursive, that is for every time $t \in \{0,1,...,T-1\}$
\begin{equation}
\label{recursive}
R_t(X) = \bigcup_{Z \in R_{t+1}(X)} R_t(-Z)=:R_t(-R_{t+1}(X)).
\end{equation}
\end{enumerate}
If additionally $M_t \subseteq M_{t+1}$ for every time $t \in \{0,1,...,T-1\}$ then
all of the above is also equivalent to
\begin{enumerate}
\item[(iv)] for every time $t \in \{0,1,...,T-1\}$
\[A_t = A_{t+1} + A_{t,t+1}^{M_{t+1}}.\]
\end{enumerate}
\end{theorem}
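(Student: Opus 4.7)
The plan is to establish the circular chain (i) $\Rightarrow$ (ii) $\Rightarrow$ (iii) $\Rightarrow$ (i) using only $M_{t+1}$-translativity and normalization, and then prove (iii) $\Leftrightarrow$ (iv) separately using the extra hypothesis $M_t \subseteq M_{t+1}$ by translating everything through the bijection of remark~\ref{rem_Rt_At} between conditional risk measures and their acceptance sets.

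The equivalence (i) $\Leftrightarrow$ (ii) is essentially formal: (i) $\Rightarrow$ (ii) follows by applying the subset implication of (i) in both directions once an equality is split, and the converse by applying (ii) to $A' := A \cup B$ and $B$ and reading off the inclusion. The key step is (ii) $\Rightarrow$ (iii). For a fixed $X \in \LdpF{T}$, I take $A := \lrcurly{X}$ and $B := \lrcurly{-Z : Z \in R_{t+1}(X)}$. Since each $Z \in R_{t+1}(X)$ lies in $M_{t+1}$, $M_{t+1}$-translativity gives $R_{t+1}(-Z) = R_{t+1}(0) + Z$, and normalization of $R_{t+1}$ collapses the union $\bigcup_{Y \in B} R_{t+1}(Y) = R_{t+1}(0) + R_{t+1}(X) = R_{t+1}(X)$. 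Hence (ii) applies and yields $R_t(X) = \bigcup_{Z \in R_{t+1}(X)} R_t(-Z)$. For (iii) $\Rightarrow$ (i) I just substitute the recursive identity into $\bigcup_{X \in A} R_t(X)$ and interchange unions, reducing the claimed inclusion of $R_t$-values to the assumed inclusion of $R_{t+1}$-values.

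For (iii) $\Leftrightarrow$ (iv), I work at the level of acceptance sets via $A_t = A_{R_t}$. To prove (iii) $\Rightarrow$ (iv): given $X \in A_t$, the recursion yields some $Z \in R_{t+1}(X)$ with $0 \in R_t(-Z)$; $M_{t+1}$-translativity converts these into $X + Z \in A_{t+1}$ and $-Z \in A_t \cap M_{t+1} = A_{t,t+1}^{M_{t+1}}$, so the decomposition $X = (X+Z) + (-Z)$ establishes $A_t \subseteq A_{t+1} + A_{t,t+1}^{M_{t+1}}$; the opposite inclusion reverses this argument. To prove (iv) $\Rightarrow$ (iii), take $u \in R_t(X)$ and decompose $X + u = Y + m$ with $Y \in A_{t+1}$ and $m \in A_{t,t+1}^{M_{t+1}}$. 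Since $M_t \subseteq M_{t+1}$, one has $u - m \in M_{t+1}$, and $M_{t+1}$-translativity promotes $Y \in A_{t+1}$ to $Z := u - m \in R_{t+1}(X)$; a second application of $M_t$-translativity together with $0 \in R_t(m)$ gives $u \in R_t(-Z) \subseteq R_t(-R_{t+1}(X))$. The reverse inclusion $R_t(-R_{t+1}(X)) \subseteq R_t(X)$ is handled symmetrically using the $\supseteq$ direction of (iv).

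The main obstacle is the careful bookkeeping of which translativity applies to which risk measure. The assumption $M_t \subseteq M_{t+1}$ enters essentially only in (iv) $\Rightarrow$ (iii): passing from a decomposition in $A_{t+1} + A_{t,t+1}^{M_{t+1}}$ back to a risk-compensating portfolio in $R_t(-R_{t+1}(X))$ forces the $M_t$-valued compensation $u$ to be movable through $R_{t+1}$, which is possible exactly when $u$ is also $M_{t+1}$-measurable. The rest of the argument is a disciplined translation between ``$0 \in R_t(\cdot)$'' and ``$\cdot \in A_t$'' combined with the translativity identity $R_{t+1}(X) + Z = R_{t+1}(X-Z)$ for $Z \in M_{t+1}$.
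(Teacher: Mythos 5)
Your proposal is correct and follows essentially the same route as the paper: the cycle (i) $\Rightarrow$ (ii) $\Rightarrow$ (iii) $\Rightarrow$ (i) with the identical normalization computation $\bigcup_{Z \in R_{t+1}(X)} R_{t+1}(-Z) = R_{t+1}(0) + R_{t+1}(X) = R_{t+1}(X)$ applied to $A=\{X\}$, $B=-R_{t+1}(X)$, and the equivalence (iii) $\Leftrightarrow$ (iv) via exactly the acceptance-set translations that the paper isolates as lemma~\ref{lemma_stepped}, with the hypothesis $M_t \subseteq M_{t+1}$ entering only in the direction (iv) $\Rightarrow$ (iii), just as in the paper. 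The only cosmetic differences are that you inline the lemma rather than stating it separately and add a direct (ii) $\Rightarrow$ (i) argument via $A \cup B$, which is redundant given the cycle.
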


\begin{proof}
It can trivially be seen that property \ref{thm_equiv_tctc} implies property \ref{thm_equiv_eq}.  By $\seq{R}$ normalized it follows that for every $X \in \LdpF{T}$ and $t \in \{0,1,...,T-1\}$ it holds
\begin{align*}
\bigcup_{Z \in R_{t+1}(X)} R_{t+1}(-Z) & = \bigcup_{Z \in R_{t+1}(X)} (R_{t+1}(0) + Z)\\
& = R_{t+1}(0) + R_{t+1}(X) = R_{t+1}(X).
\end{align*}
Thus by property \ref{thm_equiv_eq} and setting $A = \{X\}$ and $B = -R_{t+1}(X) $, the recursive form defined in equation~\eqref{recursive} is derived and thus property \ref{thm_equiv_recursive}.  It remains to show that \ref{thm_equiv_recursive} implies multi-portfolio time consistency as defined in definition~\ref{defn_mptc}, i.e. property~\ref{thm_equiv_tctc}.  If $A,B \subseteq \LdpF{T}$ such that $\bigcup_{X \in A} R_{t+1}(X) \subseteq \bigcup_{Y \in B} R_{t+1}(Y)$ and let $\seq{R}$ be recursive (as defined in equation~\eqref{recursive}) then
\begin{align*}
\bigcup_{X \in A} R_t(X) & = \bigcup_{X \in A} \bigcup_{Z \in R_{t+1}(X)} R_t(-Z) = \bigcup_{Z \in \cup_{X \in A} R_{t+1}(X)} R_t(-Z ) \\
& \subseteq \bigcup_{Z \in \cup_{Y \in B} R_{t+1}(Y)} R_t(-Z)  = \bigcup_{Y \in B} R_t(Y).
\end{align*}

Finally if $M_t \subseteq M_{t+1}$ for every time $t \in \{0,1,...,T-1\}$ then by lemma \ref{lemma_stepped} below property~(iv) is equivalent to the recursive form, i.e. is equivalent to $\seq{R}$ being multi-portfolio time consistent.
\end{proof}

As mentioned at the beginning of section~\ref{sec_dynamic_defn}, the extra condition needed for property~(iv) in theorem~\ref{thm_equiv_tc}, that is $M_t \subseteq M_{t+s}$ for $s,t \geq 0$ with $t,t+s \in \{0,1,...,T\}$, is satisfied in many situations that are usually considered as this is met if $M_t = M_{0}$ $\P$-almost surely.  In particular, in the scalar framework this condition is always satisfied, see example~\ref{ex_eligible}.

\begin{remark}
As it is trivially noticed using the acceptance set definition (and also is implicitly understood in the other definitions) for multi-portfolio time consistency, the choice of eligible assets would impact whether a risk measure is multi-portfolio time consistent.  Therefore, it is possible that under one choice of eligible portfolios the risk measure is recursive, but under another choice of eligible portfolios that same risk measure is not recursive.
\end{remark}

\begin{example}[{[Example~\ref{ex_worst_case} continued]} ]
For the worst case risk measure, the acceptance sets are given by $A_t = \LdpF{T}_+$ for all times $t$, therefore $A_t = A_{t,t+1}^{M_{t+1}} + A_{t+1}$ for all times and any choice of eligible set $M_{t+1}$.  Thus, if $M_t \subseteq M_{t+1}$ for all times then the worst case risk measure is recursive and multi-portfolio time consistent.
\end{example}

The following lemma is used in the proof of theorem~\ref{thm_equiv_tc}. It is the set-valued generalization of
lemma~4.3 in \cite{FP06} or lemma~11.14 in \cite{FS11} and gives a relationship between the sum of conditional acceptance sets and properties of the corresponding risk measure.

\begin{lemma}
\label{lemma_stepped}
Let $\seq{R}$ be a dynamic risk measure.  Let $s,t \geq 0$ such that $t,t+s \in \{0,1,...,T\}$ and let $X \in \LdpF{T}$ and $D \subseteq \LdpF{T}$. It holds
\begin{enumerate}
\item \label{stepped_acceptance_1} $X \in A_{t+s} + D \cap M_{t+s}  \Leftrightarrow -R_{t+s}(X) \cap D \neq \emptyset$;
\item \label{stepped_acceptance_2} \begin{enumerate}
\item $R_t(X) \subseteq \bigcup_{Z \in R_{t+s}(X)} R_t(-Z ) \Rightarrow A_t \subseteq A_{t+s} + A_{t,t+s}^{M_{t+s}} $;
\item If additionally $M_t \subseteq M_{t+s}$ then $A_t \subseteq A_{t+s} + A_{t,t+s}^{M_{t+s}}  \Rightarrow R_t(X) \subseteq \bigcup_{Z \in R_{t+s}(X)} R_t(-Z )$;
\end{enumerate}
\item \label{stepped_acceptance_3} \begin{enumerate}
\item $R_t(X) \supseteq \bigcup_{Z \in R_{t+s}(X)} R_t(-Z ) \Rightarrow A_t \supseteq A_{t+s} + A_{t,t+s}^{M_{t+s}} $;
\item If additionally $M_t \subseteq M_{t+s}$ then $A_t \supseteq A_{t+s} + A_{t,t+s}^{M_{t+s}}  \Rightarrow R_t(X) \supseteq \bigcup_{Z \in R_{t+s}(X)} R_t(-Z )$.
\end{enumerate}
\end{enumerate}
\end{lemma}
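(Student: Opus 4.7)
The plan is to establish part (i) first and then use it as the workhorse for both (ii) and (iii). For (i), decompose any $X \in A_{t+s} + D \cap M_{t+s}$ as $X = Y + m$ with $Y \in A_{t+s}$ and $m \in D \cap M_{t+s}$; applying $M_{t+s}$-translativity to $0 \in R_{t+s}(Y) = R_{t+s}(X) + m$ shows $-m \in R_{t+s}(X)$, i.e.\ $m \in -R_{t+s}(X) \cap D$. The reverse direction is symmetric, using the automatic inclusion $R_{t+s}(X) \subseteq M_{t+s}$ (which upgrades any element of $-R_{t+s}(X) \cap D$ to an element of $D \cap M_{t+s}$) and then translating $0 \in R_{t+s}(X) + m$ back into $X - m \in A_{t+s}$.

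For the ``easy'' directions (ii)(a) and (iii)(a), no extra hypothesis on $(M_t)$ is needed. In (ii)(a), take $X \in A_t$, pick $Z \in R_{t+s}(X)$ with $0 \in R_t(-Z)$ using the recursive inclusion; then $-Z \in A_t \cap M_{t+s} = A_{t,t+s}^{M_{t+s}}$, and applying (i) with $D = \{-Z\}$ yields $X + Z \in A_{t+s}$, so $X = (X+Z) + (-Z)$ sits in the required sum. In (iii)(a), any decomposition $X = Y + m$ of the required form gives $-m \in R_{t+s}(X)$ by the translativity argument of (i), whence the hypothesized inclusion $R_t(m) \subseteq R_t(-(-m)) \subseteq R_t(X)$ combined with $0 \in R_t(m)$ (because $m \in A_t$) forces $0 \in R_t(X)$, i.e.\ $X \in A_t$.

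The hypothesis $M_t \subseteq M_{t+s}$ enters in parts (b) precisely so that $M_{t+s}$-translativity may be applied to shifts originating in $M_t$. In (ii)(b), start from $u \in R_t(X)$, so $u \in M_t$ and $X + u \in A_t$; decompose $X + u = Y + m$ with $Y \in A_{t+s}$ and $m \in A_t \cap M_{t+s}$. Since both $u$ and $m$ lie in $M_{t+s}$, translativity at level $t+s$ yields $u - m \in R_{t+s}(X)$; setting $Z := u - m$, a second translativity step at level $t$ (legitimate because $u \in M_t$) gives $R_t(-Z) = R_t(m - u) = R_t(m) + u \ni u$. In (iii)(b), start from $u \in R_t(-Z)$ with $Z \in R_{t+s}(X)$; translativity of $R_t$ at $u \in M_t$ gives $u - Z \in A_t$, translativity of $R_{t+s}$ at $Z$ gives $X + Z \in A_{t+s}$, and the assumption $M_t \subseteq M_{t+s}$ places $u - Z \in M_{t+s}$, hence $u - Z \in A_{t,t+s}^{M_{t+s}}$. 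Then $X + u = (X+Z) + (u-Z) \in A_{t+s} + A_{t,t+s}^{M_{t+s}} \subseteq A_t$ by hypothesis, so $u \in R_t(X)$.

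The main obstacle is essentially bookkeeping: one must track carefully which translativity is being invoked at each step (level $t$ versus $t+s$) and verify that the random vector being translated lies in the appropriate eligible subspace. This is exactly where the inclusion $M_t \subseteq M_{t+s}$ becomes decisive for the converse directions (b), and why it plays no role in (a), where the translativity shifts operate only at their native level.
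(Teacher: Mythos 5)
Your proof is correct and follows essentially the same route as the paper's: part (i) via translativity at level $t+s$, the (a) parts without any assumption on the eligible spaces, and the (b) parts using $M_t\subseteq M_{t+s}$ exactly where translativity at level $t+s$ must be applied to shifts coming from $M_t$. The only cosmetic difference is that you sometimes unwind the decomposition directly where the paper formally invokes part (i); the content is identical.
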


\begin{proof}
\begin{enumerate}
\item \begin{enumerate}
\item[($\Rightarrow$)] Given that $X \in A_{t+s} + D \cap M_{t+s} $ then $X = X_{t+s} + X^{D} $ such that $X_{t+s} \in A_{t+s}$ and $X^{D} \in D \cap M_{t+s}$.  Therefore it can
    be seen that $R_{t+s}(X) = R_{t+s}\lrparen{X_{t+s} + X^{D} }$, and by $X^{D} \in M_{t+s}$ and the $M_{t+s}$-translative property, $-X^{D} \in R_{t+s}\lrparen{X_{t+s}} - X^{D} = R_{t+s}(X)$.  Then $X^{D} \in -R_{t+s}(X)$ and by assumption $X^{D} \in D \cap M_{t+s}$.
\item[($\Leftarrow$)] Given that there exists a $Y \in -R_{t+s}(X)$ such that $Y \in D \cap M_{t+s}$ and trivially $X = X - Y + Y$, then $R_{t+s}(X - Y ) = R_{t+s}(X) + Y \ni 0$ since $-Y \in R_{t+s}(X)$.  Therefore $X-Y  \in A_{t+s}$ and $X \in A_{t+s} + D \cap M_{t+s} $.
\end{enumerate}

\item \begin{enumerate}
\item Let $X \in A_t$ then $0 \in R_t(X) \subseteq \bigcup_{Z \in R_{t+s}(X)} R_t(-Z)$.  This implies that there exists as $Y \in -R_{t+s}(X)$ such that $0 \in R_t(Y )$, and thus $Y \in A_{t,t+s}^{M_{t+s}}$.  Therefore $X \in A_{t+s} + A_{t,t+s}^{M_{t+s}} $ by condition \ref{stepped_acceptance_1}.
\item Let $X \in \LdpF{T}$ and $Y \in R_t(X)$ then $X+Y  \in A_t \subseteq A_{t+s} + A_{t,t+s}^{M_{t+s}} $.  By condition \ref{stepped_acceptance_1}, there exists a $\hat{Z} \in -R_{t+s}(X+Y)$ such that $\hat{Z} \in A_{t,t+s}^{M_{t+s}}$.  But then there exists a $Z \in R_{t+s}(X)$ such that $\hat{Z} = -\lrparen{Z - Y}$ using translative property of $R_{t+s}$ and because $M_t \subseteq M_{t+s}$.  Since $\hat{Z} \in A_{t,t+s}^{M_{t+s}}$ it holds $0 \in R_t(\hat{Z}) = R_t\lrparen{\lrparen{-Z+Y}} = R_t\lrparen{-Z}-Y$.  Therefore $Y \in R_t\lrparen{-Z}$.  Thus for all $Y \in R_t(X)$ there exists a $Z \in R_{t+s}(X)$ such that $Y \in R_t(-Z)$.  This can be rewritten as $R_t(X) \subseteq \bigcup_{Z \in R_{t+s}(X)} R_t(-Z )$.
\end{enumerate}

\item \begin{enumerate}
\item Let $X \in A_{t+s} + A_{t,t+s}^{M_{t+s}} $, then there exists a $Y \in -R_{t+s}(X)$ such that $Y \in A_{t,t+s}^{M_{t+s}}$ by condition \ref{stepped_acceptance_1}.  Therefore, $R_t(X) \supseteq \bigcup_{Z \in R_{t+s}(X)} R_t(-Z ) \supseteq R_t(Y ) \ni 0$ and thus $X \in A_t$.
\item Let $X \in \LdpF{T}$, any $Z \in R_{t+s}(X)$, and any $Y \in R_t(-Z )$, then $Y +X = Y  -Z +Z +X$.  In particular, $X+Z \in A_{t+s}$ since $Z \in R_{t+s}(X)$ if and only if $0 \in R_{t+s}(X)-Z = R_{t+s}(X+Z )$, and $Y-Z \in A_{t,t+s}^{M_{t+s}}$ since $Y \in R_t(-Z)$ if and only if $0 \in R_t(-Z)-Y = R_t(Y-Z)$ and $Y-Z \in M_{t+s}$ by $Y \in M_t$ and $-Z \in M_{t+s}$ with $M_t + M_{t+s} \subseteq M_{t+s}$ (by $M_t \subseteq M_{t+s}$).  Therefore, $Y+X \in A_{t+s} + A_{t,t+s}^{M_{t+s}} \subseteq A_t$.  This implies $0 \in R_t(Y+X) = R_t(X)-Y$.  Therefore $Y \in R_t(X)$ and for every $Z \in R_{t+s}(X)$ it follows that $R_t(-Z) \subseteq R_t(X)$.  Therefore $\bigcup_{Z \in R_{t+s}(X)} R_t(-Z) \subseteq R_t(X)$.
\end{enumerate}
\end{enumerate}
\end{proof}

\begin{remark}
\label{rem_recursiv_mptc}
As can be seen from lemma~\ref{lemma_stepped}, the normalization property is not used for the equivalences between properties~(iii) and (iv) in theorem~\ref{thm_equiv_tc} under assumption $M_t \subseteq M_{t+1}$ for all $t$. Furthermore, if a dynamic risk measure $\seq{R}$ that is not normalized follows the recursive form defined in equation~\eqref{recursive}, then $\seq{R}$ is multi-portfolio time consistent (but not necessarily vice versa).
\end{remark}

Now that we have given the main result on multi-portfolio time consistency in this paper, we are interested in
how multi-portfolio time consistency and time consistency relate to each other.
First, we show that the two properties are not equivalent in general (see example~\ref{ex_tc}).  Second, we give sufficient conditions for these properties to coincide in lemma~\ref{lemma_tc2recursive}.  Remark~\ref{rem tc=mptc scalar} demonstrates that these sufficient conditions are always satisfied in the scalar framework.

\begin{example}
\label{ex_tc}
 Consider a one-period model with $t \in \lrcurly{0,T}$. Let $A_0 = \lrcurly{X \in \LdpF{T}: X_1 \in \LpF{T}_+}$ and $A_{T} = \lrcurly{X \in \LdpF{T}: X_1 \in \LpF{T}_{++}}$. Let $M_{t} = \lrcurly{X \in \LdpF{t}: \forall j \in \lrcurly{2,3,...,d}: X_j = 0}$ for $t \in \lrcurly{0,T}$. Clearly, $A_0$ and $A_T$ satisfy the properties of a normalized acceptance set (definition~\ref{defn_acceptance} and proposition~\ref{propo A_t-R_t}~(ii) for $C=A_t\cap M_t$, $t \in \lrcurly{0,T}$) and denote the corresponding dynamic risk measure $R_t(X) := \lrcurly{u \in M_t: X + u  \in A_t }$ for $t \in \lrcurly{0,T}$. Then, it holds $A_{0}\supsetneq A_{T} + A_{0,T}^{M_{T}} = A_{T}$ since $A_{0,T}^{M_{T}} = \lrcurly{X \in M_{T}: X_1 \geq 0}$. Thus, the risk measure $\seq{R}$ is not multi-portfolio time consistent by theorem~\ref{thm_equiv_tc}.
 However, $\seq{R}$ is time consistent since $R_{T}(X) \subseteq R_{T}(Y)$ if and only if $Y_1 - X_1 \in \LpF{T}_+$.  This implies $R_0(X) = R_0\lrparen{\lrsquare{X_1,0,...,0}} \subseteq R_0\lrparen{\lrsquare{Y_1,0,...,0}} = R_0(Y)$ by monotonicity and the definition of $A_0$.
\end{example}

As we have shown with this example, the recursive form and time consistency are not equivalent in general.  We now consider sufficient conditions for the two properties to be equivalent.

\begin{lemma}
\label{lemma_tc2recursive}
Let $\seq{R}$ be a normalized time consistent dynamic risk measure such that for all times $t \in \{0,1,...,T-1\}$ and every $X \in \LdpF{T}$ there exists a  $\hat{Z} \in R_{t+1}(X)$ such that
$R_{t+1}(-\hat{Z}) \supseteq R_{t+1}(X)$ (or, equivalently $R_{t+1}(X)=\hat{Z}+R_{t+1}(0)$).
Then, $\seq{R}$ is multi-portfolio time consistent.
\end{lemma}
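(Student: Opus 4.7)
The plan is to verify the recursive form $R_t(X) = \bigcup_{Z \in R_{t+1}(X)} R_t(-Z)$ for every $t$ and every $X$; by the equivalence of (i) and (iii) in theorem~\ref{thm_equiv_tc}, multi-portfolio time consistency for a normalized dynamic risk measure follows immediately.

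The first step is to upgrade the hypothesis $R_{t+1}(-\hat{Z}) \supseteq R_{t+1}(X)$ to an equality. The $\supseteq$ inclusion is exactly the hypothesis; for the reverse inclusion, $M_{t+1}$-translativity gives $R_{t+1}(-\hat{Z}) = R_{t+1}(0) + \hat{Z}$, and the normalization identity $R_{t+1}(X) = R_{t+1}(X) + R_{t+1}(0)$ together with $\hat{Z} \in R_{t+1}(X)$ yields $R_{t+1}(X) \supseteq \hat{Z} + R_{t+1}(0) = R_{t+1}(-\hat{Z})$. Applying time consistency in both directions then produces $R_t(X) = R_t(-\hat{Z})$, which in particular gives the inclusion $R_t(X) \subseteq \bigcup_{Z \in R_{t+1}(X)} R_t(-Z)$ since $\hat{Z}$ is a particular element of $R_{t+1}(X)$.

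For the reverse inclusion, I would take an arbitrary $Z \in R_{t+1}(X)$. Using the equivalent form $R_{t+1}(X) = \hat{Z} + R_{t+1}(0)$ of the hypothesis, I would write $Z = \hat{Z} + W$ with $W \in R_{t+1}(0)$, and apply translativity to obtain $R_{t+1}(-Z) = R_{t+1}(0) + Z = R_{t+1}(X) + W$. Normalization at $X = 0$ gives $R_{t+1}(0) + R_{t+1}(0) = R_{t+1}(0)$, so
\[
R_{t+1}(X) + W \;=\; \hat{Z} + R_{t+1}(0) + W \;\subseteq\; \hat{Z} + R_{t+1}(0) \;=\; R_{t+1}(X).
\]
Time consistency then gives $R_t(-Z) \subseteq R_t(X)$, and taking the union over $Z \in R_{t+1}(X)$ completes the recursion. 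The main subtlety is that normalization cannot be used in the convenient form $0 \in R_{t+1}(0)$ (which would require closedness), so the argument must be organized entirely around the algebraic identity $R_{t+1}(0) + R_{t+1}(0) = R_{t+1}(0)$; once one recognises that the existence hypothesis turns $R_{t+1}(X)$ into a translate of $R_{t+1}(0)$, the proof reduces to routine set arithmetic together with one application of time consistency in each direction.
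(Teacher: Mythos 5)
Your proposal is correct and takes essentially the same route as the paper: establish the recursive form $R_t(X) = \bigcup_{Z \in R_{t+1}(X)} R_t(-Z)$ using $M_{t+1}$-translativity, normalization and one application of time consistency per inclusion, then invoke theorem~\ref{thm_equiv_tc}. The only (cosmetic) difference is in the inclusion $\bigcup_{Z \in R_{t+1}(X)} R_t(-Z) \subseteq R_t(X)$, where the paper applies normalization at $X$ directly, via $R_{t+1}(-Z) = R_{t+1}(0) + Z \subseteq R_{t+1}(0) + R_{t+1}(X) = R_{t+1}(X)$, so your detour through the decomposition $Z = \hat{Z} + W$ and normalization at $0$ is not needed.
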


\begin{proof}
Let $X \in \LdpF{T}$, $t \in \{0,1,...,T-1\}$ and $Z\in R_{t+1}(X)$ arbitrarily chosen. Then, $M_{t+1}$-translativity and normalization of $R_{t+1}$ imply $R_{t+1}(-Z)=R_{t+1}(0)+Z\subseteq R_{t+1}(X)$, hence $R_t(-Z) \subseteq R_t(X)$ by time consistency. Thus,
$ \bigcup_{Z \in R_{t+1}(X)} R_t(-Z)\subseteq R_t(X)$.
On the other hand, by assumption there exists a  $\hat{Z} \in R_{t+1}(X)$ such that
$R_{t+1}(-\hat{Z}) \supseteq R_{t+1}(X)$. Hence, $R_{t}(-\hat{Z}) \supseteq R_{t}(X)$ by time consistency and
$ \bigcup_{Z \in R_{t+1}(X)} R_t(-Z)\supseteq R_t(X)$.
\end{proof}

\begin{remark}
\label{rem tc=mptc scalar}
For a scalar normalized time consistent dynamic risk measure $\seq{\rho}$ with $\rho_t: \LpF{T} \to \LpF{t}$, the corresponding set-valued dynamic risk measure $\seq{R^{\rho}}$ defined on $\LpF{T}$ is given by $R^{\rho}_t(X)=\rho_t(X)+\LpF{t}_+$ and thus automatically satisfies the assumptions of lemma \ref{lemma_tc2recursive}.  This shows that in the scalar case time consistency is equivalent to multi-portfolio time consistency.
\end{remark}

Now we will take a more in depth look at market-compatibility for dynamic risk measures. For static risk measures as discussed in~\cite{HHR10, HRY12}, the right concept of market-compatibility is given by  $A_0 = A_0 + \sum_{t=0}^T \LdpK{p}{t}{K_t}$.  This might look different from the definition of market-compatibility for dynamic risk measures in the present paper, but it turns out that for multi-portfolio time consistent risk measures both notions coincides, which justified the use of the same name.
\begin{lemma}
\label{lemma_market_comp}
Let $\seq{R}$ be a normalized multi-portfolio time consistent dynamic risk measure with $M_t \subseteq M_{t+1}$ for every time $t \in \lrcurly{0,1,...,T-1}$. Then, $\seq{R}$ is market-compatible if and only if
$A_t = A_t + \sum_{\tau = t}^{T}K_{\tau}^{M_{\tau}}$ at each time $t$.
\end{lemma}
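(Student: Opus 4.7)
My plan is to translate everything into the language of acceptance sets and use two tools: the acceptance-set characterization $A_t = A_{t+1} + A_{t,t+1}^{M_{t+1}}$ of multi-portfolio time consistency provided by theorem~\ref{thm_equiv_tc}(iv) (which is where the hypothesis $M_t \subseteq M_{t+1}$ is needed), and the acceptance-set form of $K_t$-compatibility from proposition~\ref{propo A_t-R_t}, namely that $R_t$ is $K_t$-compatible if and only if $A_t + K_t^{M_t} \subseteq A_t$.

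The implication $(\Leftarrow)$ is immediate: since each cone $K_\tau^{M_\tau}$ contains $0$, isolating the $\tau=t$ summand in the hypothesis gives $A_t + K_t^{M_t} \subseteq A_t$, which is $K_t$-compatibility at every $t$, hence market-compatibility.

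For $(\Rightarrow)$, I would prove by backward induction on $t$, from $T$ down to $0$, the stronger pointwise statement
\[
A_t + K_\tau^{M_\tau} \subseteq A_t \quad \text{for every } \tau \in \{t, t+1, \ldots, T\}.
\]
The case $\tau = t$ is just market-compatibility translated through proposition~\ref{propo A_t-R_t}. For $\tau > t$ (and $t<T$), substitute $A_t = A_{t+1} + A_{t,t+1}^{M_{t+1}}$ and regroup as
\[
A_t + K_\tau^{M_\tau} = \bigl(A_{t+1} + K_\tau^{M_\tau}\bigr) + A_{t,t+1}^{M_{t+1}} \subseteq A_{t+1} + A_{t,t+1}^{M_{t+1}} = A_t,
\]
using the induction hypothesis at time $t+1$ to absorb $K_\tau^{M_\tau}$ into $A_{t+1}$. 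Having established pointwise absorption, I then fold the summands of $\sum_{\tau=t}^T K_\tau^{M_\tau}$ into $A_t$ one at a time to obtain $A_t + \sum_{\tau=t}^T K_\tau^{M_\tau} \subseteq A_t$; the reverse inclusion is trivial because $0 \in \sum_{\tau=t}^T K_\tau^{M_\tau}$.

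The only real obstacle is keeping the induction bookkeeping straight: one needs absorption at time $t+1$ for every future $\tau$ before deducing it at time $t$, which is why the induction runs backwards from $T$ rather than forwards. No topology or analysis is required; the argument is purely a Minkowski-sum manipulation built on the two acceptance-set equivalences cited above.
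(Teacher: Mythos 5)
Your proof is correct. The forward implication is essentially the paper's argument in different bookkeeping: the paper unfolds $A_t = A_{t,t+1}^{M_{t+1}} + A_{t+1}$ telescopically down to time $T$, absorbing each $K_\tau^{M_\tau}$ along the way and then refolding, whereas you package the same absorption as a backward induction on the pointwise statement $A_t + K_\tau^{M_\tau} \subseteq A_t$ for all $\tau \geq t$; both rest on exactly the two ingredients you cite, proposition~\ref{propo A_t-R_t} and theorem~\ref{thm_equiv_tc}(iv). Where you genuinely diverge is the reverse implication: the paper again routes through the decomposition $A_t = A_{t,t+1}^{M_{t+1}} + A_{t+1}$ together with the hypothesis at time $t+1$ to conclude $A_t = A_t + K_t^{M_t}$, while you observe that this is immediate from $0$ lying in every cone, via $A_t \subseteq A_t + K_t^{M_t} \subseteq A_t + \sum_{\tau=t}^{T} K_\tau^{M_\tau} = A_t$. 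Your version of that direction is shorter and makes visible that it needs neither multi-portfolio time consistency nor normalization nor $M_t \subseteq M_{t+1}$; the paper's treatment keeps the two directions structurally parallel but invokes strictly more hypotheses than necessary there.
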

\begin{proof}
If $\seq{R}$ is market-compatible then by proposition~\ref{propo A_t-R_t} and since $0\in K_t^{M_t}$ it holds $A_t = A_t + K_t^{M_t}$ for every $t$.  By multi-portfolio time consistency $A_t = A_{t,t+1}^{M_{t+1}} + A_{t+1}$, therefore, $A_t =  A_{t,t+1}^{M_{t+1}} + A_{t+1} + K_t^{M_t}$.  Doing this argument for each time $t+1$ through $T-1$ and using $A_T = A_T + K_T^{M_T}$, it follows that $A_t = \sum_{\tau = t}^{T-1} A_{\tau,\tau+1}^{M_{\tau+1}} + A_{T} + \sum_{\tau = t}^{T} K_{\tau}^{M_{\tau}} = A_t + \sum_{\tau = t}^{T} K_{\tau}^{M_{\tau}}$ for any time $t$.

For the reverse implication let us assume $A_t = A_t + \sum_{\tau = t}^{T}K_{\tau}^{M_{\tau}}$ for each $t$. Fix some time $t < T$. It follows that
\begin{align*}
A_t & = A_t + \sum_{\tau = t}^{T}K_{\tau}^{M_{\tau}}
 =  A_{t,t+1}^{M_{t+1}} + A_{t+1} + K_{t}^{M_{t}} + \sum_{\tau = t+1}^{T}K_{\tau}^{M_{\tau}}\\
& = A_{t,t+1}^{M_{t+1}} + A_{t+1} + K_{t}^{M_{t}}
 =  A_t + K_{t}^{M_{t}}.
\end{align*}
And trivially at time $T$ this implies that $A_T = A_T + K_T^{M_T}$.  Therefore $\seq{R}$ is market-compatible.
\end{proof}

\begin{remark}
Lemma~\ref{lemma_market_comp} can be generalized in the following way. For $t \in \{0,1,...,T\}$ let $A_t,C_t,D_t \subseteq \LdpF{T}$ with $A_t = A_{t+1} + D_t$ for all $t \in \{0,1,...,T-1\}$. Then the following three statements are equivalent.
\begin{enumerate}
\item $A_t = A_t + C_t$ for all times $t \in \{0,1,...,T\}$,
\item $A_t = A_t + C_s$ for all times $t,s \in \{0,1,...,T\}$ with $s \geq t$,
\item $A_t = A_t + \sum_{\tau = t}^{T} C_{\tau}$ for all times $t \in \{0,1,...,T\}$.
\end{enumerate}
\end{remark}

\subsection{Composition of one-step risk measures}
\label{subsec composition}
Since multi-portfolio time consistency is a restrictive property for risk measures (in the scalar framework both value at risk and average value at risk are not time consistent, and in section~\ref{section_avar} we show that the set-valued average value at risk is not multi-portfolio time consistent either), we would like to have a way to construct multi-portfolio time consistent versions of any risk measure.  As in section~2.1 in \cite{CK10} and section~4 in \cite{CS09}, a (multi-portfolio) time consistent version of any scalar dynamic risk measure can be created through backwards recursion.
The same is true for set-valued risk measures as discussed in the following proposition.
\begin{proposition}
\label{lemma_gen_tc}
Let $\seq{R}$ be a dynamic risk measure on $\LdpF{T}$ and let $M_t \subseteq M_{t+1}$ for every time $t \in \lrcurly{0,1,...,T-1}$, then $(\tilde{R}_t)_{t=0}^T$ defined for all $X \in \LdpF{T}$ by
\begin{align}
\label{eqn_composed_final} \tilde{R}_{T}(X) & = R_{T}(X),\\
\label{eqn_composed} \forall t \in \lrcurly{0,1,...,T-1}: \; \tilde{R}_t(X) & = \bigcup_{Z \in \tilde{R}_{t+1}(X)} R_t(-Z)
\end{align}
is multi-portfolio time consistent. Furthermore, $(\tilde{R}_t)_{t=0}^T$ satisfies properties (i) and (ii) in definition~\ref{defn_conditional} of dynamic risk measures, but may fail to be finite at zero.  Additionally, if $\seq{R}$ is (conditionally) convex (coherent) then $(\tilde{R}_t)_{t=0}^T$ is (conditionally) convex (coherent).
\end{proposition}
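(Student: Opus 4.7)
The plan is to proceed by reverse induction on $t$, with the base case $t=T$ being immediate since $\tilde{R}_T = R_T$ inherits every claimed property directly from the hypothesis on $\seq{R}$. For $t<T$, I would assume the claimed properties for $\tilde{R}_{t+1}$ and then derive each property of $\tilde{R}_t$ by unpacking the defining identity
\[
\tilde{R}_t(X) = \bigcup_{Z \in \tilde{R}_{t+1}(X)} R_t(-Z)
\]
and combining the inductive hypothesis on $\tilde{R}_{t+1}$ with the corresponding property of the outer one-step measure $R_t$. I would verify in order that $\tilde{R}_t$ maps into $\mathcal{P}\lrparen{M_t;(M_t)_+}$ (immediate since each $R_t(-Z)$ is an $(M_t)_+$-monotone subset of $M_t$, and the image space is preserved by arbitrary unions), then $M_t$-translativity, $\LdpF{T}_+$-monotonicity, (conditional) convexity, (conditional) positive homogeneity, and finally multi-portfolio time consistency.

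For $M_t$-translativity, given $m_t \in M_t \subseteq M_{t+1}$ I would apply translativity of $\tilde{R}_{t+1}$ to write $\tilde{R}_{t+1}(X+m_t) = \tilde{R}_{t+1}(X) - m_t$, reindex the union by $Z' = Z + m_t$, and then invoke $M_t$-translativity of $R_t$ to pull the shift $-m_t$ outside; this is the unique step that requires $M_t \subseteq M_{t+1}$. For monotonicity, $Y-X \in \LdpF{T}_+$ gives $\tilde{R}_{t+1}(Y) \supseteq \tilde{R}_{t+1}(X)$ by induction, and enlarging the index set of the outer union immediately yields $\tilde{R}_t(Y) \supseteq \tilde{R}_t(X)$. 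For (conditional) convexity, combine the inductive inclusion $\tilde{R}_{t+1}\lrparen{\lambda X + (1-\lambda)Y} \supseteq \lambda \tilde{R}_{t+1}(X) + (1-\lambda)\tilde{R}_{t+1}(Y)$ with (conditional) convexity of $R_t$ and the distributive identity $\lambda \bigcup_\alpha A_\alpha + (1-\lambda)\bigcup_\beta B_\beta = \bigcup_{\alpha,\beta} \lrparen{\lambda A_\alpha + (1-\lambda)B_\beta}$; note that $\LiF{t} \subseteq \LiF{t+1}$ so the same $\lambda$ invoking conditional convexity of $R_t$ is admissible for $\tilde{R}_{t+1}$. (Conditional) positive homogeneity is analogous with simpler bookkeeping, and coherence follows by combining convexity and positive homogeneity.

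For multi-portfolio time consistency, the argument is direct and does not need the recursive characterization of theorem~\ref{thm_equiv_tc}. Given $A,B \subseteq \LdpF{T}$ with $\bigcup_{X \in A}\tilde{R}_{t+1}(X) \subseteq \bigcup_{Y \in B}\tilde{R}_{t+1}(Y)$, swap the order of the two nested unions in the definition to obtain
\[
\bigcup_{X \in A} \tilde{R}_t(X) = \bigcup_{Z \in \cup_{X \in A}\tilde{R}_{t+1}(X)} R_t(-Z),
\]
and analogously for $B$; monotonicity of the outer union in its index set then yields $\bigcup_{X \in A}\tilde{R}_t(X) \subseteq \bigcup_{Y \in B}\tilde{R}_t(Y)$. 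The finiteness at zero is not claimed in the proposition; indeed the union $\bigcup_{Z \in \tilde{R}_{t+1}(0)} R_t(-Z)$ may in principle collapse to $\emptyset$ or swell to all of $M_t$ even when each $R_t(0)$ is nontrivial, so no proof of this property is attempted.

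The main obstacle is not any single hard step but rather keeping the set-valued bookkeeping clean: justifying the reindexing of unions, the commutation of Minkowski sums and scalar multiples with unions, and tracking where $M_t \subseteq M_{t+1}$ is used (for translativity, and through it for every property that builds on translativity). Once these manipulations are organized, each verification reduces to a one-line application of the corresponding property of $R_t$ together with the inductive hypothesis on $\tilde{R}_{t+1}$.
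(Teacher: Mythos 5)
Your proposal is correct and follows essentially the same route as the paper: multi-portfolio time consistency is proved by the identical swap of nested unions, and the remaining properties (translativity via $M_t\subseteq M_{t+1}$, monotonicity, (conditional) convexity and positive homogeneity) are obtained by the same backwards induction the paper invokes, with your write-up merely filling in the set-algebra details the paper leaves implicit. No gaps.
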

\begin{proof}
Let $t \in \lrcurly{0,1,...,T-1}$ and $A,B \subseteq \LdpF{T}$ such that $\bigcup_{X \in A} \tilde{R}_{t+1}(X) \subseteq \bigcup_{Y \in B} \tilde{R}_{t+1}(Y)$, then
\begin{align*}
\bigcup_{X \in A} \tilde{R}_t(X) & = \bigcup_{X \in A} \bigcup_{Z \in \tilde{R}_{t+1}(X)} R_t(-Z)= \bigcup_{Z \in \bigcup_{X \in A} \tilde{R}_{t+1}(X)} R_t(-Z)\\
& \subseteq \bigcup_{Z \in \bigcup_{Y \in B} \tilde{R}_{t+1}(Y)} R_t(-Z) = \bigcup_{Y \in B} \tilde{R}_t(Y).
\end{align*}
Thus, $(\tilde{R}_t)_{t=0}^T$ is multi-portfolio time consistent.
The assumption $M_t \subseteq M_{t+1}$ ensures $\tilde{R}_t$ to be $M_t$-translative for all $t$. $\LdpF{T}_{+}$-monotonicity follows from the corresponding property for $R_t$.

If $\seq{R}$ is (conditionally) convex (positive homogeneous) then by backwards induction $(\tilde{R}_t)_{t=0}^T$ is (conditionally) convex (positive homogeneous) .
\end{proof}

$ (\tilde{R}_t)_{t=0}^T$ defined as in equations~\eqref{eqn_composed_final}, \eqref{eqn_composed} is multi-portfolio time consistent, but not necessarily  normalized or finite at zero.
If $ (\tilde{R}_t)_{t=0}^T$ is normalized, then $ (\tilde{R}_t)_{t=0}^T$ is recursive itself, see also remark~\ref{rem_recursiv_mptc}. Thus, we are interested to find conditions under which $ (\tilde{R}_t)_{t=0}^T$ is normalized, or  finite at zero.
\begin{proposition}
\label{cor_composed_normalized}
Let $\seq{R}$ and $\seq{M}$ be as in proposition~\ref{lemma_gen_tc} and let $ (\tilde{R}_t)_{t=0}^T$ be defined as in \eqref{eqn_composed_final}, \eqref{eqn_composed}. Then, $(\tilde{R}_t)_{t=0}^T$ is normalized and finite at zero if $\seq{R}$ is normalized and either of the following are true:
\begin{enumerate}
\item $R_t(0) = (M_t)_+$ for every time $t$, or
\item \label{cor_composed_tc} $\seq{R}$ is a closed and time consistent risk measure.
\end{enumerate}
If $\seq{R}$ is a normalized closed coherent risk measure then $(\tilde{R}_t)_{t=0}^T$ is normalized, coherent and for $t\in\{0,1,\dots,T\}$ either finite at zero or $\tilde{R}_t(X) \in \{\emptyset,M_t\}$ for every $X \in \LdpF{T}$.
\end{proposition}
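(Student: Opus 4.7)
My plan is to argue each case of Proposition~\ref{cor_composed_normalized} by backward induction on $t$, starting from the base case $\tilde{R}_T = R_T$ and using the recursion \eqref{eqn_composed} to descend from $t+1$ to $t$.

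For case (i), I would prove the stronger statement $\tilde{R}_t(0) = (M_t)_+$ for every $t$. In the inductive step, the inclusion $(M_t)_+ \subseteq \tilde{R}_t(0)$ follows by taking $Z = 0 \in \tilde{R}_{t+1}(0) = (M_{t+1})_+$ in the union; the reverse inclusion follows from $\LdpF{T}_+$-monotonicity, since for any $Z \in (M_{t+1})_+$ we have $-Z \le 0$, hence $R_t(-Z) \subseteq R_t(0) = (M_t)_+$. Finiteness at zero is then immediate, because $(M_t)_+$ is nontrivial and a proper subset of $M_t$, and normalization follows since each $R_t(-Z)$ already absorbs $(M_t)_+$, so $\tilde{R}_t(X) + (M_t)_+ = \bigcup_{Z \in \tilde{R}_{t+1}(X)}(R_t(-Z)+(M_t)_+) = \tilde{R}_t(X)$.

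For case (ii), the central observation is that time consistency combined with normalization of $R_{t+1}$ forces $\tilde{R}_t(X) \subseteq R_t(X)$: for $Z \in \tilde{R}_{t+1}(X) \subseteq R_{t+1}(X)$, $M_{t+1}$-translativity and normalization give $R_{t+1}(-Z) = R_{t+1}(0) + Z \subseteq R_{t+1}(X)$, and time consistency upgrades this to $R_t(-Z) \subseteq R_t(X)$. Thus $\tilde{R}_t(0) \subseteq R_t(0) \neq M_t$. Closedness and normalization of $R_s$ yield $(M_s)_+ \subseteq R_s(0)$, so $0 \in R_s(0)$, and a backward induction taking $Z = 0$ in the recursion propagates $0 \in \tilde{R}_t(0)$, giving finiteness at zero. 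For the nontrivial inclusion $\tilde{R}_t(X) + \tilde{R}_t(0) \subseteq \tilde{R}_t(X)$ of normalization, the crucial step is to upgrade every $v \in \tilde{R}_t(0)$ to $v \in R_t(0)$: such $v$ has the form $v \in R_t(-Z_2)$ with $Z_2 \in \tilde{R}_{t+1}(0) \subseteq R_{t+1}(0)$, and normalization together with $M_{t+1}$-translativity give $R_{t+1}(-Z_2) = R_{t+1}(0) + Z_2 \subseteq R_{t+1}(0)$, so time consistency delivers $R_t(-Z_2) \subseteq R_t(0)$, i.e., $v \in R_t(0)$. For any $u \in R_t(-Z_1)$ with $Z_1 \in \tilde{R}_{t+1}(X)$, normalization of $R_t$ then closes the argument: $u + v \in R_t(-Z_1) + R_t(0) = R_t(-Z_1) \subseteq \tilde{R}_t(X)$.

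For the coherent case, Proposition~\ref{lemma_gen_tc} already delivers coherence of $(\tilde{R}_t)_{t=0}^T$, and the same backward induction as above (using only $0 \in R_s(0)$) gives $0 \in \tilde{R}_t(0)$, so $\tilde{R}_t(0)$ is always nonempty. Normalization then follows immediately from subadditivity: $\tilde{R}_t(X) + \tilde{R}_t(0) \subseteq \tilde{R}_t(X+0) = \tilde{R}_t(X)$, with the reverse inclusion from $0 \in \tilde{R}_t(0)$. The dichotomy then splits into two subcases: if $\tilde{R}_t(0) \neq M_t$, then $0 \in \tilde{R}_t(0) \neq M_t$ yields finiteness at zero directly; if $\tilde{R}_t(0) = M_t$, then for any $X$ with $\tilde{R}_t(X) \neq \emptyset$ and any $u \in \tilde{R}_t(X)$, normalization forces $u + M_t = u + \tilde{R}_t(0) \subseteq \tilde{R}_t(X)$, and so $\tilde{R}_t(X) = M_t$. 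The main obstacle I expect is the normalization step in case (ii): without any subadditivity on $R_t$ the naive decomposition of $u+v$ through the recursion fails, and one must instead exploit time consistency to show that $\tilde{R}_t(0)$ cannot escape $R_t(0)$, which is precisely where time consistency and the one-step normalization of $R_{t+1}$ are used in an essential way.
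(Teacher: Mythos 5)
Your proposal is correct and follows essentially the same route as the paper: a backward induction showing $\tilde{R}_t(0)=(M_t)_+$ in case (i), and in case (ii) showing $\tilde{R}_t(0)\subseteq R_t(0)$ via time consistency together with one-step normalization and translativity (which is exactly the content of the paper's lemma~\ref{tc_normalized}, here inlined rather than cited), then absorbing $R_t(0)$ by normalization of $R_t$, with the coherent case handled by subadditivity, $0\in\tilde{R}_t(0)$, and the same $\lrcurly{\emptyset,M_t}$ dichotomy. The only cosmetic difference is that you carry the slightly stronger induction invariant $\tilde{R}_t(X)\subseteq R_t(X)$ instead of the paper's equality $\tilde{R}_t(0)=R_t(0)$, which changes nothing of substance.
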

\begin{proof}
If $R_T$ is normalized then it immediately follows that $\tilde{R}_T$ is normalized as well, indeed $\tilde{R}_T(0) = R_T(0)$.  Thus, using backwards induction, we want to show that $\tilde{R}_t(X) = \tilde{R}_t(X) + \tilde{R}_t(0)$.
\begin{enumerate}
\item Let $\tilde{R}_{t+1}(0) = R_{t+1}(0) = (M_{t+1})_+$.  Therefore $Z \in \tilde{R}_{t+1}(0)$ implies $R_t(-Z) \subseteq R_t(0)$ by $\LdpF{T}_{+}$-monotonicity.  Therefore \[\tilde{R}_t(0) = \bigcup_{Z \in \tilde{R}_{t+1}(0)} R_t(-Z) = R_t(0) = (M_t)_+.\]  Trivially it then holds that $\tilde{R}_t(X) = \tilde{R}_t(X) + \tilde{R}_t(0)$.

\item Let $\tilde{R}_{t+1}(0) = R_{t+1}(0)$.  By lemma~\ref{tc_normalized} below $R_t$ is $R_{t+1}(0)$-monotone.  Therefore it immediately follows that $R_t(-Z) \subseteq R_t(0)$ for every $Z \in \tilde{R}_{t+1}(0)$ (with $0 \in \tilde{R}_{t+1}(0)$ by $R_{t+1}$ closed and normalized), and \[\tilde{R}_t(0) = \bigcup_{Z \in \tilde{R}_{t+1}(0)} R_t(-Z) = R_t(0).\]  $\tilde{R}_t$ is normalized since $\tilde{R}_t(X) + \tilde{R}_t(0) = \bigcup_{Z \in \tilde{R}_{t+1}(X)} [R_t(-Z) + R_t(0)] = \bigcup_{Z \in \tilde{R}_{t+1}(X)} R_t(-Z) = \tilde{R}_t(X)$ by $R_t$ normalized.
\end{enumerate}

If $\seq{R}$ is a normalized closed coherent risk measure then $(\tilde{R}_t)_{t=0}^T$ is coherent by proposition~\ref{lemma_gen_tc}, and thus $\tilde{R}_t(X) \supseteq \tilde{R}_t(X) + \tilde{R}_t(0)$ by subadditivity.  To show the other direction assume that $0 \in \tilde{R}_{t+1}(0)$ (by $0 \in \tilde{R}_T(0)$ and backwards induction).  It follows that $\tilde{R}_t(0) = \bigcup_{Z \in \tilde{R}_{t+1}(0)} R_t(-Z) \supseteq R_t(0) \ni 0$ since $0 \in R_t(0)$ by closed and normalized. This implies $\tilde{R}_t(X) + \tilde{R}_t(0) \supseteq \tilde{R}_t(X)$.  Therefore $\tilde{R}_t$ is a normalized risk measure and $\tilde{R}_t(0) \neq \emptyset$.  However, it still may be the case that $\tilde{R}_t(0) = M_t$.  By normalization if $\tilde{R}_t(0) = M_t$ then $\tilde{R}_t(X) \in \{\emptyset,M_t\}$ for any $X \in \LdpF{T}$.
\end{proof}

Property~\ref{cor_composed_tc} in proposition~\ref{cor_composed_normalized} above shows that in the set-valued framework, even though time-consistency is not equivalent to multi-portfolio time consistency, it can be a useful property for the creation of multi-portfolio time consistent risk measures.

\begin{lemma}
\label{tc_normalized}
If $\seq{R}$ is a time consistent risk measure and $R_{\tau}$ is normalized for some time $\tau$, then $R_t$ is $R_{\tau}(0)$-monotone for any time $t \leq \tau$.
\end{lemma}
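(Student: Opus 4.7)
The plan is to fix $X, Y \in \LdpF{T}$ with $Y - X \in R_\tau(0)$ and prove the target inclusion $R_t(X) \subseteq R_t(Y)$ in two stages: first establish it at level $\tau$, then transport it down to level $t$ via iterated time consistency. The latter transport is automatic, since time consistency in definition~\ref{defn_tc} supplies the one-step implication $R_{s+1}(X) \subseteq R_{s+1}(Y) \Rightarrow R_s(X) \subseteq R_s(Y)$; applying this at $s = \tau-1, \tau-2, \ldots, t$ in succession yields $R_t(X) \subseteq R_t(Y)$ from $R_\tau(X) \subseteq R_\tau(Y)$. So the entire argument reduces to showing $R_\tau(X) \subseteq R_\tau(Y)$ whenever $Y-X \in R_\tau(0)$, which is exactly $R_\tau(0)$-monotonicity at time $\tau$ itself.

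To prove the $\tau$-level inclusion, I would set $Z := Y-X$. Since the image space of $R_\tau$ is $\mathcal{P}(M_\tau;(M_\tau)_+)$, we have $R_\tau(0) \subseteq M_\tau$, so $Z \in M_\tau$. The $M_\tau$-translativity axiom then gives
\[
R_\tau(Y) \;=\; R_\tau(X + Z) \;=\; R_\tau(X) - Z,
\]
so the desired inclusion $R_\tau(X) \subseteq R_\tau(Y)$ is equivalent to $R_\tau(X) + Z \subseteq R_\tau(X)$. The latter now follows in one line from normalization of $R_\tau$: for any $u \in R_\tau(X)$, since $Z \in R_\tau(0)$,
\[
u + Z \;\in\; R_\tau(X) + R_\tau(0) \;=\; R_\tau(X).
\]

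The argument is short, and the only subtle point is making sure the direction of the normalization identity $R_\tau(X) = R_\tau(X) + R_\tau(0)$ is used correctly — it is the $\supseteq$ inclusion that carries the content here, asserting that translating $R_\tau(X)$ by any element of $R_\tau(0)$ leaves it inside $R_\tau(X)$. Notably, we do not need to assume $0 \in R_\tau(0)$ (i.e., closedness or any stronger normalization); only the set identity matters. No property of $R_t$ for $t < \tau$ is used beyond time consistency, and nothing beyond $M_\tau$-translativity plus normalization of $R_\tau$ is used at level $\tau$.
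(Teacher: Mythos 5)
Your proof is correct and takes essentially the same route as the paper's: translativity at time $\tau$ together with the inclusion $R_\tau(X) + R_\tau(0) \subseteq R_\tau(X)$ from normalization yields $R_\tau(X) \subseteq R_\tau(Y)$, and (iterated) time consistency transports this down to any $t \leq \tau$. The paper merely phrases the $\tau$-level step as $R_\tau(Y) = R_\tau(X) + R_\tau(0) - (Y-X) \supseteq R_\tau(X)$, which is the same computation as yours.
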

\begin{proof}
Let $X,Y \in \LdpF{T}$ such that $Y - X \in R_{\tau}(0)$, then $R_{\tau}(Y) = R_{\tau}(Y - X + X) = R_{\tau}(X) - (Y - X) = R_{\tau}(X) + R_{\tau}(0) - (Y - X) \supseteq R_{\tau}(X)$ by $Y-X \in M_{\tau}$, $R_{\tau}$ normalized, and $0 \in R_{\tau}(0) - (Y - X)$.  Then by time consistency, $R_{\tau}(Y) \supseteq R_{\tau}(X)$ implies $R_t(Y) \supseteq R_t(X)$ for any $t \leq \tau$, and therefore $R_t$ is $R_{\tau}(0)$-monotone.
\end{proof}

The following corollary provides a possibility to construct multi-portfolio time consistent risk measures by backward composition using the (one step) acceptance sets of any dynamic risk measure.

\begin{corollary}
\label{cor_composed}
Let $\seq{R}$ be a dynamic risk measure on $\LdpF{T}$ and $\seq{A}$ its dynamic acceptance sets. Let $M_t \subseteq M_{t+1}$ for every time $t \in \lrcurly{0,1,...,T-1}$. Then, the following are equivalent:
\begin{enumerate}
\item $ (\tilde{R}_t)_{t=0}^T$ is defined as in \eqref{eqn_composed_final} and \eqref{eqn_composed};
\item $\tilde{A}_{T} =  A_{T}$ and $\tilde{A}_t = \tilde{A}_{t+1}+ A_{t,t+1}^{M_{t+1}}$
for each time $t \in \lrcurly{0,1,...,T-1}$, where $\tilde{A}_{s}$ is the acceptance set of $\tilde{R}_{s}$ for all $s$.
\end{enumerate}
\end{corollary}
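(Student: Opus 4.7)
The plan is to exploit the bijective correspondence between conditional risk measures and acceptance sets from remark~\ref{rem_Rt_At}, combined with a suitable adaptation of condition~\ref{stepped_acceptance_1} of lemma~\ref{lemma_stepped} from a single dynamic risk measure to the two-risk-measure composition setting. The whole corollary then reduces to a single one-step equivalence, read off in two directions. At the terminal time, $\tilde{R}_T = R_T$ is equivalent to $\tilde{A}_T = A_T$ directly from the bijection, so the terminal conditions in (i) and (ii) carry the same information and need not be revisited.

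The core step is to establish, for every $t\in\{0,1,\dots,T-1\}$, $X\in\LdpF{T}$ and $u\in M_t$, the one-step equivalence
\[
u\in\bigcup_{Z\in\tilde{R}_{t+1}(X)}R_t(-Z)\quad\Longleftrightarrow\quad X+u\in\tilde{A}_{t+1}+A_{t,t+1}^{M_{t+1}}.
\]
For the forward direction, if $u\in R_t(-Z_0)$ for some $Z_0\in\tilde{R}_{t+1}(X)$, then $u-Z_0\in A_t$; since $u\in M_t\subseteq M_{t+1}$ and $Z_0\in M_{t+1}$ one has $u-Z_0\in A_t\cap M_{t+1}=A_{t,t+1}^{M_{t+1}}$, while $X+Z_0\in\tilde{A}_{t+1}$ is exactly the definition of $Z_0\in\tilde{R}_{t+1}(X)$, and together $X+u=(X+Z_0)+(u-Z_0)\in\tilde{A}_{t+1}+A_{t,t+1}^{M_{t+1}}$. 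For the reverse direction, I would write $X+u=Y+V$ with $Y\in\tilde{A}_{t+1}$, $V\in A_{t,t+1}^{M_{t+1}}$, and set $Z_0:=u-V\in M_{t+1}$ (again using $M_t\subseteq M_{t+1}$); then $X+Z_0=Y\in\tilde{A}_{t+1}$ yields $Z_0\in\tilde{R}_{t+1}(X)$, and $u-Z_0=V\in A_t$ yields $u\in R_t(-Z_0)$. Structurally this is condition~\ref{stepped_acceptance_1} of lemma~\ref{lemma_stepped} applied to $\tilde{R}_{t+1}$ with $D=A_t$, combined with $M_{t+1}$-translativity of $\tilde{R}_{t+1}$.

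With the one-step equivalence in hand, both implications are immediate. For $(\mathrm{i})\Rightarrow(\mathrm{ii})$ I would specialize to $u=0$: the composition \eqref{eqn_composed} gives $\tilde{A}_t=\{X:0\in\tilde{R}_t(X)\}=\{X:X\in\tilde{A}_{t+1}+A_{t,t+1}^{M_{t+1}}\}=\tilde{A}_{t+1}+A_{t,t+1}^{M_{t+1}}$. For $(\mathrm{ii})\Rightarrow(\mathrm{i})$ I would read the equivalence at arbitrary $u\in M_t$: by the bijection $u\in\tilde{R}_t(X)$ iff $X+u\in\tilde{A}_t=\tilde{A}_{t+1}+A_{t,t+1}^{M_{t+1}}$ iff $u\in\bigcup_{Z\in\tilde{R}_{t+1}(X)}R_t(-Z)$, which is exactly \eqref{eqn_composed}. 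The step I expect to be the main obstacle is purely bookkeeping, namely ensuring the hypothesis $M_t\subseteq M_{t+1}$ is used exactly where needed: to let $u\in M_t$ be translated through $\tilde{R}_{t+1}$ and to keep the witnesses $u-Z_0$ and $Z_0=u-V$ inside $M_{t+1}$, since otherwise neither the membership $u-Z_0\in A_{t,t+1}^{M_{t+1}}$ nor the validity of $Z_0\in\tilde{R}_{t+1}(X)$ could be asserted.
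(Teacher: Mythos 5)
Your proof is correct and takes essentially the same route as the paper, which simply reruns the proof of lemma~\ref{lemma_stepped} with $R$ and $A$ replaced by $\tilde{R}$ and $\tilde{A}$ at the appropriate places: your pointwise one-step equivalence is exactly that adaptation (condition~(i) of the lemma plus $M_t$- and $M_{t+1}$-translativity, with $M_t \subseteq M_{t+1}$ invoked at the same spots). The only difference is cosmetic, in that you package both inclusions into a single membership equivalence for $u \in M_t$ instead of proving them separately as in parts~(ii) and~(iii) of the lemma.
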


\begin{proof}
The proof is analogous to the proof of lemma~\ref{lemma_stepped}, where $R$ and $A$ is replaced by $\widetilde{R}$ and $\widetilde{A}$ at the appropriate places.
\end{proof}

\section{Dual representation}
\label{sec_dual_rep}
In section~\ref{sec_dynamic_defn}, we discussed the primal representation for conditional risk measures.  In this section, we will develop a dual representation by a direct application of the set-valued duality developed by~\cite{H09}.
This representation provides a probability based representation for finding the set of risk compensating portfolios.  In particular, we will demonstrate that, as in the scalar framework, closed convex and coherent risk measures have a representation as the supremum of penalized conditional expectations.

There are multiple approaches that have been used to obtain duality results for scalar dynamic risk measures.  In~\cite{D06,AD07} a dual representation is given as the logical extension of the static case and shown to have the desired properties without directly involving a duality theory for the conditional risk measure.  In~\cite{RS05} an omega-wise approach is used, as it reduces to an omega-wise application of biconjugation of (extended) real-valued functions.  A popular approach, used in~\cite{DS05,FP06,CK10,CDK06,KS05,BN09}, proves the dual representation for dynamic risk measures through a mathematical trick using the static dual representation.  Another approach is given by a direct application of vector-valued duality to the conditional risk measure as in \cite{KPf09,K11} and the first part of \cite{FKV11}. But this approach needs additional strong assumptions (non-emptiness of the subdifferential), and as in the setting of \cite{KPf09,K11}, does not allow for local properties. A further possibility is to use the module approach introduced and applied in \cite{FKV09,FKV11}.
For set-valued dynamic risk measures, \cite{TL12} defines the dual form by the intersection of supporting hyperplanes, but it is not shown how this is related to the traditional scalar dual representation with conditional expectations.  In the present paper, we will apply a new approach, the set-valued approach based on \cite{H09}, which has not been applied to dynamic risk measures so far (but could also be used in the scalar case). This is the most intuitive approach for us as the risk measures under consideration are by nature set-valued functions.

In the scalar framework, most of the papers consider conditionally convex dynamic risk measures (see~\cite{DS05,FP06,CDK06}). In~\cite{KS05,BN04,BN09} the dual representation is deduced for convex and local dynamic risk measures, and in~\cite{CDK06} it was shown that any risk measure on $L^{\infty}$ satisfies the local property.  Using the set valued approach we are able to provide a dual representation for any convex dynamic risk measure for any $p \in [1,+\infty]$, and thus extend, as a byproduct, also the scalar case.

Since we will be considering conjugate duality, we will need to assume from now on that $p \in \lrsquare{1,+\infty}$ and $q$ is such that $\frac{1}{p}+\frac{1}{q} = 1$.  That means we consider the dual pair $\lrparen{\LdpF{T}, \LdqF{T}}$  and endow it with the norm topology, respectively the $\sigma \lrparen{\LdiF{T},\LdoF{T}}$-topology on $\LdiF{T}$ in the case $p = +\infty$.  As before let the set of eligible portfolios $M_t$ be a closed subspace of $\LdpF{t}$ for all times $t$.

Recall that $K_t^{M_t}:=M_t\cap \LdpK{p}{t}{K_t}\subseteq \LdpF{t}$. For all $t \in \{0,1,...,T\}$, we will denote the positive dual cones of $K_t$ and $K_t^{M_t}$ by $\plus{K_t}$ and $\plusp{K_t^{M_t}}$ respectively.  We should note that $\plusp{\LdpK{p}{t}{K_t}} = \LdpK{q}{t}{\plus{K_t}}$, see section 6.3 in \cite{HHR10}.  It holds
\begin{align*}
\plusp{K_t^{M_t}} & = \lrcurly{v \in \LdqF{t}: \forall u \in K_t^{M_t}: \E{\trans{v}u} \geq 0} = \lrparen{\LdpK{q}{t}{\plus{K_t}} + \prp{M_t}} \subseteq \LdqF{t},
\end{align*}
where
$\prp{M_t} = \lrcurly{v \in \LdqF{t}: \forall u \in M_t: \E{\trans{v}u} = 0}$.
Denote \[\plusp{\lrparen{M_t}_+}=\lrcurly{v \in \LdqF{t}: \forall u \in \lrparen{M_t}_+: \E{\trans{v}u} \geq 0}.\]
Recall from remark~\ref{rem image space} that $R_t(X) = R_t(X) + \lrparen{M_t}_+$ by translativity and monotonicity and that a closed convex conditional risk measure $R_t$ maps into the set
$\mathcal{G}(M_t;(M_t)_+)= \lrcurly{D \subseteq M_t: D = \operatorname{cl}\operatorname{co}\lrparen{D + \lrparen{M_t}_+}}$.
Let us denote the positive halfspace with respect to $v \in \LdqF{t}$ by \[G_t(v) = \lrcurly{x \in \LdpF{t}: 0 \leq \E{\trans{v}x}}.\]

\subsection{Set-valued duality}
\label{sec_dual_theory}
Before we begin describing the dual representation for dynamic risk measures we will give a quick introduction and review to set-valued duality results developed by~\cite{H09}.  For the duration of this subsection we will consider the topological dual pairs $(X,X^*)$ and $(Z,Z^*)$ with the (partial) ordering on $Z$ defined by the cone $C \subseteq Z$.  Let us consider the set-valued function $f: X \to \mathcal{G}(Z;C)$.

Set-valued duality uses two dual variables, which we will denote $x^* \in X^*$ and $z^* \in \plus{C} = \lrcurly{z^* \in Z^*: \forall z \in C: \langle z^*,z \rangle \geq 0}$.  The first dual variable, $x^*$ behaves the same as the dual element in the scalar case.  The second dual variable, $z^*$, reflects the order relation in the image space as it is an element in the positive dual cone of the ordering cone.

Using these dual variables one can define a class of set-valued functions that will serve as a (set-valued) replacement for continuous linear functions used in the scalar duality theory, and continuous linear operators used in vector-valued theories, as follows. Let $2^Z$ denote the power set of $Z$, including the empty set.
\begin{definition}
\label{deflin}[\cite{H09} example 2 and proposition 6]
Given $x^* \in X^*$ and $z^* \in Z^*$, the function $F_{(x^*,z^*)}^Z: X \to 2^Z$ is defined for any $x \in X$ by \[F_{(x^*,z^*)}^Z(x) := \lrcurly{z \in Z: \langle x^*,x \rangle \leq \langle z^*,z \rangle}.\]
\end{definition}

Before we can define the convex conjugate, we must define the set-valued infimum and supremum.  As in~\cite{H09,L11}, the set-valued infimum and supremum on $\mathcal{G}(Z;C)$ are given by
\begin{align*}
\inf_{x \in A} f(x) := \cl \co \bigcup_{x \in A} f(x);\quad\quad\quad\quad
\sup_{x \in A} f(x)  := \bigcap_{x \in A} f(x)
\end{align*}
for any $A \subseteq X$ and $f: X \to \mathcal{G}(Z;C)$.
In this way we can then define the (negative) set-valued convex conjugate as
\begin{equation}
-f^*(x^*,z^*) = \inf_{x \in X} \lrsquare{F_{(x^*,z^*)}^Z(-x) + f(x)} = \cl \co \bigcup_{x \in X} \lrsquare{F_{(x^*,z^*)}^Z(-x) + f(x)}
\end{equation}
and the set-valued biconjugate as
\begin{align}
\nonumber f^{**}(x) & = \sup_{(x^*,z^*) \in X^* \times \plus{C} \backslash \{0\}} \lrsquare{-f^*(x^*,z^*) + F_{(x^*,z^*)}^Z(x)}\\
& = \bigcap_{(x^*,z^*) \in X^* \times \plus{C} \backslash \{0\}} \lrsquare{-f^*(x^*,z^*) + F_{(x^*,z^*)}^Z(x)}.
\end{align}
From this form it is clear to see how the set-valued functions $F_{(x^*,z^*)}^Z$ replace the linear functionals in convex analysis in the scalar framework.

\begin{remark}
If $f: X \to \mathcal{G}(Z;C)$ is convex then the (negative) conjugate is equivalent to \[-f^*(x^*,z^*) = \cl \bigcup_{x \in X} \lrsquare{F_{(x^*,z^*)}^Z(-x) + f(x)}.\]
\end{remark}

Finally, the Fenchel-Moreau theorem for set-valued functions reads as follows.
\begin{theorem}
\label{Fenchel-Moreau}[Theorem 2 in \cite{H09}]
A proper function $f: X \to \mathcal{G}(Z;C)$ (i.e. $f(x) \neq Z$ for every $x \in X$ and $f(x) \neq \emptyset$ for some $x \in X$) is closed and convex if and only if $f(x) = f^{**}(x)$ for every $x \in X$.
\end{theorem}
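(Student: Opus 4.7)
The statement is the Fenchel--Moreau theorem for set-valued convex functions with values in the complete lattice $\mathcal{G}(Z;C)$ (ordered by $\supseteq$), where the role of the continuous affine minorants $x\mapsto\langle x^*,x\rangle+\alpha$ is played by the set-valued affine maps $x\mapsto -f^*(x^*,z^*)+F_{(x^*,z^*)}^Z(x)$. The plan is to follow the classical scalar argument: the direction $f=f^{**}\Rightarrow f$ closed and convex is immediate from the representation as an intersection of closed convex sets, while the direction $f$ closed and convex $\Rightarrow f=f^{**}$ requires a Hahn--Banach separation applied to the graph of $f$ in the product space $X\times Z$.

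For ($\Leftarrow$), assuming $f=f^{**}$, each $f(x)$ is an intersection of closed convex sets and hence lies in $\mathcal{G}(Z;C)$. Set-valued convexity $f(\lambda x_1+(1-\lambda)x_2)\supseteq\lambda f(x_1)+(1-\lambda) f(x_2)$ reduces to the corresponding inclusion for each $F_{(x^*,z^*)}^Z$, which is immediate from the linearity of $\langle x^*,\cdot\rangle$ and $\langle z^*,\cdot\rangle$. Closedness of the graph is inherited from the closedness of each slice $\{(x,z):z\in -f^*(x^*,z^*)+F_{(x^*,z^*)}^Z(x)\}$.

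For ($\Rightarrow$), first check $f(x)\subseteq f^{**}(x)$: for any admissible $(x^*,z^*)$ with $z^*\neq 0$ and any $z\in f(x)$, pick $w\in Z$ with $\langle z^*,w\rangle=-\langle x^*,x\rangle$, so that $w\in F_{(x^*,z^*)}^Z(-x)$ and $-w\in F_{(x^*,z^*)}^Z(x)$; then $z+w\in f(x)+F_{(x^*,z^*)}^Z(-x)\subseteq -f^*(x^*,z^*)$, and $z=(z+w)+(-w)$ belongs to $-f^*(x^*,z^*)+F_{(x^*,z^*)}^Z(x)$. For the reverse inclusion, suppose for contradiction that $z_0\in f^{**}(x_0)\setminus f(x_0)$. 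Set-valued convexity and closedness of $f$ imply that $\operatorname{gr}f=\{(x,z)\in X\times Z:z\in f(x)\}$ is a closed convex subset of $X\times Z$, so Hahn--Banach yields $(x^*,z^*)\in X^*\times Z^*\setminus\{(0,0)\}$ and $\alpha\in\R$ strictly separating $(x_0,z_0)$ from $\operatorname{gr}f$. Since $f(x)\in\mathcal{G}(Z;C)$ implies $f(x)+C\subseteq f(x)$, letting $c\in C$ range through positive scalings in the separation inequality forces $\langle z^*,c\rangle\leq 0$, i.e.\ $-z^*\in \plus{C}$. After reversing signs, the separation translates to $z_0\notin -f^*(x^*,z^*)+F_{(x^*,z^*)}^Z(x_0)$, contradicting $z_0\in f^{**}(x_0)$.

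The main obstacle is the degenerate case $z^*=0$ in the separation step (a ``vertical'' hyperplane), where the argument above fails. This is handled by the standard trick: properness of $f$ provides at least one non-vertical affine minorant, i.e.\ some $(x^{*\prime},z^{*\prime})$ with $z^{*\prime}\neq 0$ and $-f^*(x^{*\prime},z^{*\prime})\neq Z$, and a sufficiently small convex combination of the vertical separator with this minorant yields a strict separator with $z^*\neq 0$. A minor technical point is that Hahn--Banach must be applied in the appropriate topology ($\|\cdot\|$ for $p<\infty$ and $\sigma(\LdiF{T},\LdoF{T})$ for $p=\infty$), which is standard in the dual pairings used throughout the paper.
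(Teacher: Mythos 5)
The paper does not prove this statement at all: it is imported verbatim as Theorem~2 of \cite{H09}, so there is no in-paper argument to compare against. Your blind proof is, in substance, a correct direct proof and follows the natural set-valued analogue of the classical scalar argument: the easy direction from the intersection representation, the inclusion $f(x)\subseteq f^{**}(x)$ via the shift $w$ with $\langle z^*,w\rangle=-\langle x^*,x\rangle$ (which indeed needs $z^*\neq 0$, guaranteed because the biconjugate only ranges over $z^*\in C^+\setminus\{0\}$), and the reverse inclusion by Hahn--Banach separation of $(x_0,z_0)$ from the closed convex graph, using $f(x)+C\subseteq f(x)$ to force the $Z$-component of the separator into $C^+$. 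This differs in organization from Hamel's own proof, which proceeds by scalarization --- applying the scalar Fenchel--Moreau theorem to the support-type functions $x\mapsto\inf_{z\in f(x)}\langle z^*,z\rangle$ for each $z^*\in C^+\setminus\{0\}$ and then reassembling $f(x)$ as an intersection of halfspaces --- but both routes rest on the same separation theorem, and yours has the advantage of being self-contained and visibly parallel to the scalar case.

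One spot in your plan deserves to be made explicit, though it is not a gap: the assertion that properness supplies a non-vertical minorant. The clean justification is that properness gives a single point $x_1$ with $\emptyset\neq f(x_1)\neq Z$ (since $f(x)\neq Z$ holds for \emph{every} $x$); separating some $z_1\notin f(x_1)$ from $\operatorname{graph}f$ then cannot produce a vertical hyperplane, because a separator with $z^*=0$ would have to satisfy $\langle x^*,x_1\rangle<\alpha\leq\langle x^*,x_1\rangle$. With that non-vertical pair $(\bar{x}^*,\bar{z}^*)$ in hand, your perturbation $(\bar{x}^*+n x^*,\bar{z}^*)$ for large $n$ does strictly separate $(x_0,z_0)$ in the degenerate case, since $\alpha-\langle x^*,x_0\rangle>0$ makes the margin grow linearly in $n$; the resulting $z^*$-component $\bar{z}^*$ remains in $C^+\setminus\{0\}$, so the pair is admissible in the biconjugate. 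With that detail filled in, the argument is complete.
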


\subsection{Dual variables}
\label{subsec dual variables}
Turning now back to risk measures, we will construct the biconjugate for closed convex and coherent risk measures.  As noted in Fact 3 of section 6.3 in~\cite{HHR10}, any closed convex risk measure is a proper function by finiteness at zero.  Therefore by theorem~\ref{Fenchel-Moreau}, any closed convex or coherent risk measure is equivalent to its biconjugate, which we refer to as the dual representation.

The set-valued duality, given in subsection~\ref{sec_dual_theory}, greatly reduces the work for finding the dual representation for dynamic risk measures as compared to the scalar framework.  This is because the set-valued duality theory works with the same type of image space that (dynamic) set-valued risk measures map into (see remark~\ref{rem image space}), and not necessarily just the extended reals.

In contrast to the static risk measure case discussed in \cite{HH10,HHR10}, we need to consider functions mapping into the power set of $\LdpF{t}$ and thus will generalize definition 3.1 and proposition 3.2 in \cite{HHR10} to this more general case.

Then, the set-valued functionals of definition~\ref{deflin} (and as discussed in subsection~\ref{sec_dual_theory}) are given as follows
\begin{definition}
\label{defn_FYv}
Given $Y \in \LdqF{T}$, $v \in \LdqF{t}$, then the function $\FYvblank: \LdpF{T} \to 2^{M_t}$  is defined by
\begin{equation*}
\FYv{X} = \lrcurly{u \in M_t: \E{\trans{X}Y} \leq \E{\trans{v}u}}.
\end{equation*}
\end{definition}

In the following proposition we consider the relation between properties of these functionals and conditions on the sets of dual variables.
\begin{proposition}
\label{propo properties FYv}
Let $R_t(X) = \FYv{-X}$ for some $Y \in \LdqF{T}$, $v \in \LdqF{t}$, then $R_t$:
\begin{enumerate}
\item is additive and positive homogeneous with $\FYv{0} = G_t(v) \cap M_t = \lrcurly{x \in M_t: 0 \leq \E{\trans{v}x}}$,
\item has a closed graph, and hence closed values, namely closed half spaces,
\item is finite at 0 if and only if it is finite everywhere if and only if $v \in \LdqF{t} \backslash \prp{M_t}$, moreover $R_t(X) \in \lrcurly{M_t, \emptyset}$ for all $X \in \LdpF{T}$ if and only if $v \in \prp{M_t}$,
\item satisfies $R_t(X) = R_t(X) + \lrparen{M_t}_+$ for all $X \in \LdpF{T}$ if and only if $v \in \plusp{\lrparen{M_t}_+}$,
\item is $\LdpF{T}_+$-monotone if and only if $Y \in \LdqF{T}_+$,
\item is $M_t$-translative if and only if $v \in \Et{Y}{t} + \prp{M_t}$,
\item is $K_t$-compatible if and only if $v \in \plusp{K_t^{M_t}}$, and
\item has the corresponding acceptance set given by
\begin{equation*}
A_{R_t} = \lrcurly{X \in \LdpF{T}: 0 \leq \E{\trans{Y}X}}.
\end{equation*}
\end{enumerate}
\end{proposition}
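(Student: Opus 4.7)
The plan is to rewrite $R_t(X) = \lrcurly{u \in M_t: \E{\trans{v}u} + \E{\trans{X}Y} \geq 0}$ as a closed half-space in $M_t$ whose offset $-\E{\trans{X}Y}$ is a linear function of $X$, and then to read off each of the eight assertions from this explicit description together with one elementary dictionary: an upper half-space $\lrcurly{u: \ell(u) \geq c_1}$ contains $\lrcurly{u: \ell(u) \geq c_2}$ iff $c_1 \leq c_2$, and its closure under translation by a cone $C \subseteq M_t$ is equivalent to $\ell(m) \geq 0$ for every $m \in C$.

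For (i), linearity of $X \mapsto \E{\trans{X}Y}$ gives positive homogeneity and additivity, and setting $X = 0$ yields $G_t(v) \cap M_t$. For (ii), the graph of $R_t$ is the non-negativity set of the continuous (resp.\ weak*-continuous when $p = \infty$) functional $(X,u) \mapsto \E{\trans{v}u} + \E{\trans{X}Y}$ on $\LdpF{T} \times M_t$, so it is closed, and each $R_t(X)$ is a closed half-space. For (iii), split on whether $v \in \prp{M_t}$: if so, then $\E{\trans{v}u} \equiv 0$ on $M_t$, forcing $R_t(X) \in \lrcurly{M_t, \emptyset}$ according to the sign of $\E{\trans{X}Y}$; if not, pick $u_0 \in M_t$ with $\E{\trans{v}u_0} > 0$ (negate if necessary) and note that $\pm \lambda u_0$ for large $\lambda > 0$ witnesses both nonemptiness and strict properness of $R_t(X)$ for every $X$. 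Part (viii) follows immediately: $0 \in R_t(X)$ iff $\E{\trans{X}Y} \geq 0$.

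For (iv) and (vii), apply the translation-by-cone dictionary with $C = (M_t)_+$ and $C = K_t^{M_t}$ respectively; whenever $v \notin \prp{M_t}$ the nontrivial functional $u \mapsto \E{\trans{v}u}$ on $M_t$ admits a boundary point $u_0$ with $\E{\trans{v}u_0} = -\E{\trans{X}Y}$, so $u_0 + m \in R_t(X)$ for every $m \in C$ forces $\E{\trans{v}m} \geq 0$, and conversely $v \in \plusp{C}$ makes closure under $C$-translation automatic; the degenerate case $v \in \prp{M_t}$ is vacuous. For (v), $Y' - X \in \LdpF{T}_+$ gives $R_t(Y') \supseteq R_t(X)$ iff $\E{\trans{(Y' - X)}Y} \geq 0$, which holds for every non-negative difference iff $Y \in \plusp{\LdpF{T}_+} = \LdqF{T}_+$. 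For (vi), matching offsets in $R_t(X + m_t) = R_t(X) - m_t$ reduces to $\E{\trans{m_t}Y} = \E{\trans{v}m_t}$ for all $m_t \in M_t$; by the tower property and $\Ft{t}$-measurability of $m_t$ this rewrites as $\E{\trans{m_t}(v - \Et{Y}{t})} = 0$ on $M_t$, i.e.\ $v \in \Et{Y}{t} + \prp{M_t}$. The only step requiring a genuine argument rather than a direct reading is the converse of (iv) and (vii), where exhibiting the boundary point $u_0$ and pushing $u_0 + \lambda m$ out of the half-space along a direction $m$ with $\E{\trans{v}m} < 0$ demands the nontriviality of the functional; everything else is a transparent translation between half-space geometry and dual-cone conditions on $v$ and $Y$.
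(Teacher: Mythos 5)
Your proof is correct in substance, but it takes a genuinely different route from the paper: the paper does not argue directly at all, it simply adapts proposition~3.2 of the cited static reference (itself built on Hamel's example~2 and proposition~6) with $Z = \LdpF{t}$, whereas you give a self-contained verification by writing $R_t(X)$ as the half-space $\lrcurly{u \in M_t: \E{\trans{v}u} \geq -\E{\trans{X}Y}}$ and translating each property into a statement about offsets and dual cones. This buys transparency and makes the role of each dual variable visible ($Y$ controls the offset as a function of $X$, $v$ controls the geometry of the half-space in $M_t$), at the cost of having to handle the degenerate case $v \in \prp{M_t}$ by hand. Two small points deserve explicit care. First, your reading of (i) from ``linearity of the offset'' silently uses the identity $\lrcurly{u \in M_t: \ell(u) \geq c_1} + \lrcurly{u \in M_t: \ell(u) \geq c_2} = \lrcurly{u \in M_t: \ell(u) \geq c_1 + c_2}$, which requires $\ell = \E{\trans{v}\,\cdot\,}$ to be nontrivial on $M_t$; when $v \in \prp{M_t}$ and $Y \neq 0$ the values collapse to $\lrcurly{M_t,\emptyset}$ and additivity as literally stated can fail (e.g.\ $R_t(X_1) = M_t$, $R_t(X_2) = \emptyset$, yet $R_t(X_1+X_2) = M_t$) --- this is an artifact of the statement inherited from the cited static result, so you should at least flag the nondegeneracy hypothesis where you invoke the sum identity. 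Second, in (v) the dictionary ``containment of half-spaces iff offset inequality'' is only an implication in the degenerate case, but the characterization survives: if $Y \notin \LdqF{T}_+ = \plusp{\LdpF{T}_+}$, pick $Z \in \LdpF{T}_+$ with $\E{\trans{Z}Y} < 0$ and $X = 0$, so that $R_t(Z) = \emptyset \not\supseteq R_t(0) \neq \emptyset$ even when $v \in \prp{M_t}$. With those two remarks added, your treatment of (iv) and (vii) --- exhibiting a boundary point $u_0$ and pushing $u_0 + m$ out of the half-space along a direction with $\E{\trans{v}m} < 0$, while noting that $\prp{M_t} \subseteq \plusp{\lrparen{M_t}_+} \cap \plusp{K_t^{M_t}}$ makes the degenerate case consistent on both sides --- and your tower-property reduction in (vi) are exactly right, and the whole argument stands as a complete elementary replacement for the paper's citation.
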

\begin{proof}
This is an adaption of proposition~3.2 in \cite{HHR10} by using example~2 and proposition~6 in \cite{H09} with the linear space $Z$ chosen to be $\LdpF{t}$.
\end{proof}

\begin{remark}
\label{rem_Mplus}
Just as translativity and monotonicity imply that $R_t(\cdot) = R_t(\cdot) + \lrparen{M_t}_+$, it can be seen that $\plusp{\lrparen{M_t}_+} \supseteq \Et{Y}{t} + \prp{M_t}$ if $Y \in \LdqF{T}_+$.
\end{remark}

\begin{remark}
The functions $X \mapsto \FYv{X}, \FYv{-X}$ map into the collection $\mathcal{G}(M_t;(M_t)_+)$ if and only if $v \in \plusp{\lrparen{M_t}_+}$.  By remark~\ref{rem_Mplus}, the image space is $\mathcal{G}(M_t;(M_t)_+)$ and the functions are finite at 0 if the dual variables $(Y,v) \in \LdqF{T}_+ \times \lrsquare{\lrparen{\Et{Y}{t} + \prp{M_t}} \backslash \prp{M_t}}$.
\end{remark}

\begin{remark}
The functions $X \mapsto \FYv{X}, \FYv{-X}$ map into the collection $\mathcal{G}(M_t;K_t^{M_t})$ if and only if $v \in \plusp{K_t^{M_t}}$. This can be concluded because of $K_t$-compatibility.
\end{remark}

Using set-valued biconjugation as discussed in section~\ref{sec_dual_theory} it is possible to give a dual representation for closed convex risk measures already.  However, with the dual representation for scalar dynamic risk measures in mind, we would expect the conditional expectations to appear in the dual representation (i.e. also in definition~\ref{defn_FYv}) for set-valued dynamic risk measures.  We accomplish this by transforming the classical dual variables $(Y,v)$, appearing above, into dual pairs involving vector probability measures $\Q \in \mathcal{M}_d(\P)$.  We denote by $\mathcal{M}_d(\P) := \mathcal{M}_d(\mathbb{P})\lrparen{\Omega,\Ft{T}}$ the set of all vector probability measures with components being absolutely continuous with respect to $\mathbb{P}$.  That is, $\mathbb{Q}_i: \Ft{T} \to [0,1]$ is a probability measure on $\lrparen{\Omega,\Ft{T}}$ such that for all $i \in \{1,2,...,d\}$ we have $\dQidP \in \LoF{T}$.  Using this transformation, defined below in lemma~\ref{lemma_YvQw}, we demonstrate a clear comparison to the dual representation of conditional risk measures in the scalar framework, as given in \cite{AP10,FS11} for example.

For notational purposes, for the rest of the paper we will denote $\diag{w}$ to be the diagonal matrix with the elements of a vector $w$ as the main diagonal.

We will use a $\P$-almost sure version of the $\Q$-conditional expectation defined as follows, see e.g. \cite{CK10}.
Let the sets of one-step transition densities be given by
\[\mathcal{D}_t := \lrcurly{\xi \in \LoF{t}_+: \Et{\xi}{t-1} = 1}\]
for any $t = 1,...,T$.  Then for any $\Q_i \ll \P$ there exists a sequence $(\xi_1^i,...,\xi_T^i) \in \mathcal{D}_1 \times \cdots \times \mathcal{D}_T$ such that $\dQidP = \xi_1^i \cdots \xi_T^i$ by defining
\[\xi_t^i(\omega) := \begin{cases}\frac{\Et{\dQidP}{t}(\omega)}{\Et{\dQidP}{t-1}(\omega)} & \text{if } \Et{\dQidP}{t-1}(\omega) > 0\\ 1 & \text{else} \end{cases}\]
for any $\omega \in \Omega$ and any $t = 1,...,T$.
Conversely, if given a sequence $(\xi_1^i,...,\xi_T^i) \in \mathcal{D}_1 \times \cdots \times \mathcal{D}_T$ then there exists a $\Q_i \ll \P$ such that $\dQidP = \xi_1^i \cdots \xi_T^i$.

Hereafter we will use the convention that for any $\Q \in \mathcal{M}_d(\P)$ and any $t,\tau \in \{0,1,...,T-1\}$ such that $t < \tau$
\[\diag{\Et{\dQdP}{t}}^{-1} \Et{\dQdP}{\tau} := \diag{\xi_{t+1}} \cdots \diag{\xi_{\tau-1}} \xi_{\tau}\]
where $\xi_{s} = \transp{\xi_{s}^1,...,\xi_{s}^d}$ for any time $s$ with $\dQidP = \xi_1^i \cdots \xi_T^i$.  In this way the conditional expectation $\EQt{X}{t}$ is defined $\P$-almost surely as $\EQt{X}{t} := \diag{\Et{\dQdP}{t}}^{-1} \Et{\diag{\dQdP}X}{t}$.

In lemma~\ref{lemma_YvQw} below, a one-to-one correspondence between the dual variables $(Y,v)$ from set-valued duality theory and dual variables based on probability measures $(\Q,w)$ is established.  Then with the probability measure based dual variables, we see that the values of the set-valued functionals $\FQwblank$ are half spaces shifted by the $\Q$-conditional expectation.

\begin{lemma}
\label{lemma_YvQw}
\begin{enumerate}
\item Let $Y \in \LdqF{T}_+$ and $v \in \lrparen{\Et{Y}{t} + \prp{M_t}} \backslash \prp{M_t}$, thus we assume $X\mapsto\FYvblank[-X]$ of definition~\ref{defn_FYv} to be $M_t$-translative, $\LdpF{T}_+$-monotone and to be finite at 0.  Then there exists a $\mathbb{Q} \in \mathcal{M}_d(\mathbb{P})$ and a $w \in \plusp{\lrparen{M_t}_+} \backslash \prp{M_t}$ such that \[\diag{w}\diag{\Et{\dQdP}{t}}^{-1}\dQdP \in \LdqF{T}_+\]  and $\FYvblank = \FQwblank$, where
    \begin{equation}
    \label{FQw}
    \FQw{X} =  \lrcurly{u \in M_t: \E{\trans{w}\EQt{X}{t}} \leq \E{\trans{w}u}} = \lrparen{\EQt{X}{t} + G_t(w)} \cap M_t.
    \end{equation}
\item Vice versa, let $\mathbb{Q} \in \mathcal{M}_d(\mathbb{P})$ and $w \in \plusp{\lrparen{M_t}_+} \backslash \prp{M_t}$  such that the relationship \[\diag{w}\diag{\Et{\dQdP}{t}}^{-1}\dQdP \in \LdqF{T}_+\] holds. Then there exists a $Y \in \LdqF{T}_+$ and $v \in \lrparen{\Et{Y}{t} + \prp{M_t}} \backslash \prp{M_t}$ such that $\FYvblank = \FQwblank$.
\end{enumerate}
\end{lemma}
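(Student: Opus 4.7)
The plan is to observe that the equality $\FYvblank=\FQwblank$ forces two separate identifications: first, the inner products involving $v$ and $w$ must agree on the ``right-hand side'' of the defining inequalities for $u\in M_t$, which amounts to $v-w\in\prp{M_t}$; second, the ``left-hand sides'' must agree for every $X\in\LdpF{T}$, which asks for
\[
\E{\trans{Y}X}=\E{\trans{w}\EQt{X}{t}}.
\]
This second identity is exactly the change-of-measure identity and holds precisely when $Y=\diag{w}\diag{\Et{\dQdP}{t}}^{-1}\dQdP$: using $\Ft{t}$-measurability of $w$ and $\Et{\dQdP}{t}$ together with the tower property,
\[
\E{\trans{w}\EQt{X}{t}}=\E{\trans{w}\diag{\Et{\dQdP}{t}}^{-1}\Et{\diag{\dQdP}X}{t}}=\E{\trans{\lrparen{\diag{w}\diag{\Et{\dQdP}{t}}^{-1}\dQdP}}X}.
\]
These two conditions together also force $\Et{Y}{t}=w$, which pins down the natural correspondence.

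For part~(i), given $(Y,v)$, I would set $w:=\Et{Y}{t}$. Then $v-w\in\prp{M_t}$ by the hypothesis $v\in\Et{Y}{t}+\prp{M_t}$, and $w\in\LdqF{t}_+$ since $Y\in\LdqF{T}_+$. Remark~\ref{rem_Mplus} yields $w\in\plusp{\lrparen{M_t}_+}$, and $w\notin\prp{M_t}$ because otherwise $v=w+(v-w)$ would lie in $\prp{M_t}$, contradicting the assumption. The measure $\Q$ is then built componentwise: set $\dQidP:=Y_i/\mathbb{E}[Y_i]$ whenever $\mathbb{E}[Y_i]>0$ and $\dQidP:=1$ otherwise (in which case $Y_i=0$ a.s., so $w_i=0$ and the choice is immaterial). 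A direct componentwise check, using the convention for $\Et{\dQdP}{t}^{-1}$ via the one-step densities introduced before the lemma, gives $Y=\diag{w}\diag{\Et{\dQdP}{t}}^{-1}\dQdP$ $\P$-a.s., and this vector is $Y\in\LdqF{T}_+$ by assumption.

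For part~(ii), given $(\Q,w)$, I would set $Y:=\diag{w}\diag{\Et{\dQdP}{t}}^{-1}\dQdP$ and $v:=w$. The hypothesis directly gives $Y\in\LdqF{T}_+$. Taking the conditional expectation and pulling the $\Ft{t}$-measurable factors out yields $\Et{Y}{t}=\diag{w}\diag{\Et{\dQdP}{t}}^{-1}\Et{\dQdP}{t}=w$, so $v-\Et{Y}{t}=0\in\prp{M_t}$ and $v=w\notin\prp{M_t}$ by hypothesis. In both directions the equality $\FYvblank=\FQwblank$ then follows from combining the change-of-measure identity with $\E{\trans{v}u}=\E{\trans{w}u}$ for all $u\in M_t$, and the second equality in \eqref{FQw} is immediate by linearity of $u\mapsto\E{\trans{w}u}$ and the definition of $G_t(w)$.

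The main delicate point, and the only real obstacle, is the handling of components on which the conditional densities can degenerate, i.e.\ where $\Et{\dQidP}{t}=0$ on a set of positive measure or where $\mathbb{E}[Y_i]=0$. Here one must invoke the $\P$-a.s. version of the conditional expectation built from the one-step densities $\xi_t^i\in\mathcal{D}_t$ (with the convention $\xi_t^i=1$ when the previous conditional density vanishes) given just before the lemma, so that both sides of $Y=\diag{w}\diag{\Et{\dQdP}{t}}^{-1}\dQdP$ are interpreted consistently and the construction of $\Q$ in part~(i) produces a bona fide element of $\mathcal{M}_d(\P)$.
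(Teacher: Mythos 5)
Your proof is correct and follows essentially the same route as the paper: set $w=\Et{Y}{t}$ and $\dQidP=Y_i/\E{Y_i}$ in one direction, $Y=\diag{w}\diag{\Et{\dQdP}{t}}^{-1}\dQdP$ in the other, with the change-of-measure identity and $v-w\in\prp{M_t}$ giving $\FYvblank=\FQwblank$. The only (harmless) difference is in part~(ii), where you take $v=w$ directly, while the paper first decomposes $w$ into a $\plusp{\lrparen{M_t}_+}$-part and a $\prp{M_t}$-part before choosing $v$; your choice is valid since $0\in\prp{M_t}$ and $w\notin\prp{M_t}$ by hypothesis.
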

\begin{proof}
(i) Let $w = \Et{Y}{t}$ then $w \in \LdqF{t}_+$.  Then since $v \in \lrparen{\Et{Y}{t} + \prp{M_t}} \backslash \prp{M_t}$ we have $v \in w + \prp{M_t}$ or equivalently $w \in v + \prp{M_t}$.  Additionally we have, by remark~\ref{rem_Mplus}, that $v \in \plusp{\lrparen{M_t}_+}$, therefore $w \in \plusp{\lrparen{M_t}_+} + \prp{M_t}$.  And because $v \notin \prp{M_t}$ we have $w \notin \prp{M_t}$, therefore $w \in \plusp{\lrparen{M_t}_+} \backslash \prp{M_t} + \prp{M_t}$.  It can easily be seen that the set $\plusp{\lrparen{M_t}_+} \backslash \prp{M_t} + \prp{M_t}$ is equal to $\plusp{\lrparen{M_t}_+} \backslash \prp{M_t}$.

    Let $\dQidP = \frac{Y_i}{\E{Y_i}}$ if $\E{Y_i} > 0$ and arbitrarily in $\LdqF{T}_+$ such that $\E{\dQidP} = 1$ if $\E{Y_i} = 0$.  Then $Y = \diag{w}\diag{\Et{\dQdP}{t}}^{-1}\dQdP \in \LdqF{T}_+$.
  From the above, we can conclude that $\FYvblank = \FQwblank$ by
    \begin{equation*}
    \E{\trans{X}Y} = \E{\trans{X}\diag{w}\diag{\Et{\dQdP}{t}}^{-1}\dQdP} = \E{\trans{w}\EQt{X}{t}}
    \end{equation*}
    and $\E{\trans{v}u} = \E{\trans{w}u}$ for $u \in M_t$ since $w \in v + \prp{M_t}$.

(ii) Let $Y = \diag{w}\diag{\Et{\dQdP}{t}}^{-1}\dQdP \in \LdqF{T}_+$ then trivially we have \[\Et{Y}{t} = \Et{\diag{w}\diag{\Et{\dQdP}{t}}^{-1}\dQdP}{t} = w\]  and
    $\E{\trans{X}Y} = \E{\trans{X}\diag{w}\diag{\Et{\dQdP}{t}}^{-1}\dQdP}$.
  From the assumption and $\plusp{\lrparen{M_t}_+} \backslash \prp{M_t} = \plusp{\lrparen{M_t}_+} \backslash \prp{M_t} + \prp{M_t}$ it holds $w \in \plusp{\lrparen{M_t}_+} + \prp{M_t}$.  Thus, $w = w_{\plusp{\lrparen{M_t}_+}} + w_{\prp{M_t}}$ where $w_{\plusp{\lrparen{M_t}_+}} \in \plusp{\lrparen{M_t}_+}$ and $w_{\prp{M_t}} \in \prp{M_t}$.  In particular, $w_{\plusp{\lrparen{M_t}_+}} = w - w_{\prp{M_t}} \in \Et{Y}{t} + \prp{M_t} \subseteq \LdqF{t}$.  Set $v = w_{\plusp{\lrparen{M_t}_+}}$.  Furthermore, $w \notin \prp{M_t}$ implies $v \notin \prp{M_t}$.  Thus $v \in \lrparen{\Et{Y}{t} + \prp{M_t}} \backslash \prp{M_t}$.  We have $\E{\trans{w}u} = \E{\trans{v}u}$ for every $u \in M_t$ since $w \in v + \prp{M_t}$.
\end{proof}

\subsection{Convex and coherent risk measures}
Utilizing set-valued duality, proposed in~\cite{H09}, and the transformation of the dual variables as described by lemma~\ref{lemma_YvQw}, we can now give the dual representation for set-valued closed convex and coherent dynamic risk measures.  As we demonstrated in section~\ref{subsec dual variables}, the set of dual variables can be defined by
\begin{align*}
\mathcal{W}_t^q & = \lcurly{(\mathbb{Q},w) \in \mathcal{M}_d(\mathbb{P}) \times \lrparen{\plusp{\lrparen{M_t}_+} \backslash \prp{M_t}}:}\; \rcurly{\diag{w}\diag{\Et{\dQdP}{t}}^{-1}\dQdP \in \LdqF{T}_+}.
\end{align*}
In this way we have two dual variables, the first is a (vector) probability measure and the second contains the ordering of the eligible portfolios.  The additional coupling condition of a pair of dual variables $(\Q,w)$ guarantees the $\Q$-conditional expectation of a ($\P$-a.s.) greater portfolio is dominant in the ordering defined by $w$ as well.

\begin{definition}
\label{def penalty}
A function $-\alpha_t: \mathcal{W}_t^q \to \mathcal{G}(M_t;(M_t)_+)$ is a \textbf{\emph{penalty function}} at time $t$ if it satisfies
\begin{enumerate}
\item $\cap_{(\mathbb{Q},w) \in \mathcal{W}_t^q} -\alpha_t(\mathbb{Q},w) \neq \emptyset$ and $-\alpha_t(\mathbb{Q},w) \neq M_t$ for at least one $(\mathbb{Q},w) \in \mathcal{W}_t^q$ and
\item $-\alpha_t(\mathbb{Q},w) = \operatorname{cl}\lrparen{-\alpha_t(\mathbb{Q},w) + G_t(w)} \cap M_t$ for all $(\mathbb{Q},w) \in \mathcal{W}_t^q$.
\end{enumerate}
\end{definition}

Then, the duality theorem 4.2 from \cite{HHR10} extends to the dynamic case in the following way.
\begin{theorem}
\label{thm_dual}
A function $R_t: \LdpF{T} \to \mathcal{G}(M_t;(M_t)_+)$ is a closed \textbf{\emph{convex conditional risk measure}} if and only if there is a penalty function $-\alpha_t$ at time $t$ such that
\begin{equation}
\label{convex_dual}
R_t(X) = \bigcap_{(\mathbb{Q},w) \in \mathcal{W}_t^q} \lrsquare{-\alpha_t(\mathbb{Q},w) + \lrparen{\EQt{-X}{t} + G_t\lrparen{w}} \cap M_t}.
\end{equation}
In particular, for $R_t$ with the aforementioned properties, equation~\eqref{convex_dual} is satisfied with the minimal penalty function $-\alpha_t^{\min}$ defined by
\begin{equation}
\label{min penalty}
-\alpha_t^{\min}(\mathbb{Q},w) = \operatorname{cl}\bigcup_{Z \in A_{R_t}} \lrparen{\EQt{Z}{t} + G_t(w)} \cap M_t.
\end{equation}
The penalty function $-\alpha_t^{\min}$ has the property that for any penalty function $-\alpha_t$ satisfying equation~\eqref{convex_dual} it holds that $-\alpha_t(\mathbb{Q},w) \supseteq -\alpha_t^{\min}(\mathbb{Q},w)$ for all $(\mathbb{Q},w) \in \mathcal{W}_t^q$.
\end{theorem}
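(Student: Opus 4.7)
The plan is to derive the representation by applying the set-valued Fenchel--Moreau theorem (theorem~\ref{Fenchel-Moreau}) to $R_t$ and then to rewrite the conjugate in terms of vector probability measures via lemma~\ref{lemma_YvQw}. First I would verify that $R_t$ is proper in the sense of theorem~\ref{Fenchel-Moreau}: finiteness at zero gives $\emptyset \neq R_t(0) \neq M_t$, and combining $M_t$-translativity with $\LdpF{T}_+$-monotonicity one propagates this to $\emptyset \neq R_t(X) \neq M_t$ for every $X \in \LdpF{T}$ (matching fact~3 of section~6.3 in \cite{HHR10} that is invoked just before the theorem). Since $R_t$ is additionally closed and convex, theorem~\ref{Fenchel-Moreau} yields $R_t = R_t^{**}$, i.e.\
\[
R_t(X) = \bigcap_{(Y,v) \in \LdqF{T} \times (\plusp{(M_t)_+} \setminus \{0\})} \lrsquare{-R_t^*(Y,v) + F_{(Y,v)}^{M_t}(X)}.
\]

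Next I would reduce the intersection to the relevant dual variables. By proposition~\ref{propo properties FYv}, $F_{(Y,v)}^{M_t}$ is $\LdpF{T}_+$-monotone iff $Y \in \LdqF{T}_+$, is $M_t$-translative iff $v \in \Et{Y}{t} + \prp{M_t}$, and takes values other than $\emptyset$ or $M_t$ iff $v \notin \prp{M_t}$. For any pair $(Y,v)$ failing one of these conditions, I would argue that $-R_t^*(Y,v) = M_t$ by leveraging the corresponding property of $R_t$ to force the Minkowski sum $F_{(Y,v)}^{M_t}(-X) + R_t(X)$ to expand to $M_t$ along a suitable sequence of $X$'s (for example, if $Y \notin \LdqF{T}_+$, pick $X' \in \LdpF{T}_+$ with $\E{\trans{Y}X'} < 0$ and scale, using $R_t(\lambda X') \supseteq R_t(0)$). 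Such terms drop out, so it suffices to intersect over $(Y,v)$ with $Y \in \LdqF{T}_+$ and $v \in \lrparen{\Et{Y}{t} + \prp{M_t}} \setminus \prp{M_t}$. Lemma~\ref{lemma_YvQw} then provides a bijection between these pairs and $(\mathbb{Q},w) \in \mathcal{W}_t^q$; the identities $\E{\trans{Y}X} = \E{\trans{w}\EQt{X}{t}}$ and $\E{\trans{v}u} = \E{\trans{w}u}$ for $u \in M_t$ yield $F_{(Y,v)}^{M_t}(X) = \FQw{X} = \lrparen{\EQt{X}{t} + G_t(w)} \cap M_t$. Defining $-\alpha_t(\mathbb{Q},w) := -R_t^*(Y,v)$ under this correspondence produces \eqref{convex_dual}.

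For the minimal penalty \eqref{min penalty}, I would compute $-R_t^*(Y,v) = \cl \bigcup_X \lrsquare{F_{(Y,v)}^{M_t}(-X) + R_t(X)}$ directly, parameterising pairs $(X,u)$ with $u \in R_t(X)$ through $Z = X + u \in A_{R_t}$ and using the identities above to pass from $(Y,v)$ to $(\mathbb{Q},w)$; the union then reduces to $\cl \bigcup_{Z \in A_{R_t}} \lrparen{\EQt{Z}{t} + G_t(w)} \cap M_t$. Minimality ($-\alpha_t \supseteq -\alpha_t^{\min}$ for every admissible penalty) is immediate, since $-R_t^*$ is by construction the $\mathcal{G}(M_t;(M_t)_+)$-infimum achieving the biconjugate. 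The ``if'' direction (every penalty function satisfying definition~\ref{def penalty} defines a closed convex conditional risk measure via \eqref{convex_dual}) is a routine check: closedness and convexity from intersecting closed convex half-spaces, $M_t$-translativity and $\LdpF{T}_+$-monotonicity from $(\mathbb{Q},w) \in \mathcal{W}_t^q$ combined with the tower property, and finiteness at zero from the two clauses of definition~\ref{def penalty}. The main obstacle I anticipate lies in the reduction step of the second paragraph: showing that each ``bad'' pair $(Y,v)$ yields $-R_t^*(Y,v) = M_t$ requires a careful construction that exploits the precise property of $R_t$ which $F_{(Y,v)}^{M_t}$ fails to inherit, and verifying that the remaining effective dual variables align exactly with $\mathcal{W}_t^q$ (modulo $\prp{M_t}$) via lemma~\ref{lemma_YvQw}.
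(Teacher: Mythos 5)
Your outline is precisely the argument the paper itself defers to: the proof given is a one-line citation of theorem~2 in \cite{H09} (the set-valued Fenchel--Moreau theorem, restated here as theorem~\ref{Fenchel-Moreau}) adapted as in the proof of theorem~4.2 / section~6.3 of \cite{HHR10}, i.e.\ properness, biconjugation, elimination of the dual pairs $(Y,v)$ that fail monotonicity or translativity, and the change of variables to $(\mathbb{Q},w)$ via lemma~\ref{lemma_YvQw}, so your approach is the same and correct in substance. The one detail to fix when writing it out is a sign in the reduction step: with the conjugate written as $\cl\bigcup_{X}\bigl[F^{M_t}_{(Y,v)}(-X)+R_t(X)\bigr]$, taking $X=\lambda X'$ with $X'\in\LdpF{T}_+$ and $\E{\trans{Y}X'}<0$ sends the half-space threshold to $+\infty$ rather than $-\infty$; the raw biconjugate in fact survives only for $Y\in-\LdqF{T}_+$, and it is the substitution $Y\mapsto -Y$ (which is also what turns $F_{(Y,v)}(X)$ into $\FQw{-X}$ and hence produces the $\EQt{-X}{t}$ appearing in \eqref{convex_dual}) after which your scaling construction works verbatim.
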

\begin{proof}
This follows from theorem 2 in \cite{H09} by applying lemma 3.6 and remark 3.7 in a similar way as it was done in theorem 4.2 in \cite{HHR10}, the proof of which is given in section 6.3 of that same paper.
\end{proof}

\begin{corollary}
\label{cor_dual}
A function $R_t: \LdpF{T} \to \mathcal{G}(M_t;(M_t)_+)$ is a closed \textbf{\emph{coherent conditional risk measure}} if and only if there is a nonempty set $\mathcal{W}_{t,R_t}^q \subseteq \mathcal{W}_t^q$ such that
\begin{equation}
\label{coherent_dual}
R_t(X) = \bigcap_{(\mathbb{Q},w) \in \mathcal{W}_{t,R_t}^q} \lrparen{\EQt{-X}{t} + G_t\lrparen{w}} \cap M_t.
\end{equation}
In particular, equation~\eqref{coherent_dual} is satisfied with $\mathcal{W}_{t,R_t}^q$ replaced by $\mathcal{W}_{t,R_t}^{q,\max}$ with
\begin{equation}
\label{max_dualset}
\mathcal{W}_{t,R_t}^{q,\max} = \lrcurly{(\mathbb{Q},w) \in \mathcal{W}_t^q: \diag{w}\diag{\Et{\dQdP}{t}}^{-1}\dQdP \in \plus{A_{R_t}}}
\end{equation}
and if $\mathcal{W}_{t,R_t}^q$ satisfies equation~\eqref{coherent_dual} then the inclusion $\mathcal{W}_{t,R_t}^q \subseteq \mathcal{W}_{t,R_t}^{q,\max}$ holds.
\end{corollary}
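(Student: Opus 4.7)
The plan is to reduce the corollary to theorem~\ref{thm_dual} by exploiting positive homogeneity, which collapses the minimal penalty function to a two-valued object.  Starting from a closed coherent $R_t$, theorem~\ref{thm_dual} yields
\[
R_t(X)=\bigcap_{(\mathbb{Q},w)\in\mathcal{W}_t^q}\lrsquare{-\alpha_t^{\min}(\mathbb{Q},w)+\lrparen{\EQt{-X}{t}+G_t(w)}\cap M_t}.
\]
By proposition~\ref{propo A_t-R_t}, $A_{R_t}$ is a convex cone containing $0$, and by lemma~\ref{lemma_YvQw}, writing $Y=\diag{w}\diag{\Et{\dQdP}{t}}^{-1}\dQdP$, one has $\E{\trans{w}\EQt{Z}{t}}=\E{\trans{Y}Z}$ for every $Z\in\LdpF{T}$.

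The central step is a dichotomy for $-\alpha_t^{\min}(\mathbb{Q},w)$.  If $(\mathbb{Q},w)\in\mathcal{W}_{t,R_t}^{q,\max}$, then $Y\in\plus{A_{R_t}}$, so $\EQt{Z}{t}\in G_t(w)$ for every $Z\in A_{R_t}$ and therefore $\EQt{Z}{t}+G_t(w)\subseteq G_t(w)$, while $0\in A_{R_t}$ gives the reverse inclusion; hence $-\alpha_t^{\min}(\mathbb{Q},w)=G_t(w)\cap M_t$.  If $(\mathbb{Q},w)\notin\mathcal{W}_{t,R_t}^{q,\max}$, choose $Z_0\in A_{R_t}$ with $\E{\trans{Y}Z_0}<0$; then $\lambda Z_0\in A_{R_t}$ for every $\lambda>0$, and the half-space $\EQt{\lambda Z_0}{t}+G_t(w)=\lrcurly{u\in\LdpF{t}:\E{\trans{w}u}\geq\lambda\E{\trans{Y}Z_0}}$ exhausts $\LdpF{t}$ as $\lambda\to\infty$, so $-\alpha_t^{\min}(\mathbb{Q},w)=M_t$.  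Because $G_t(w)+G_t(w)=G_t(w)$ and $0\in G_t(w)\cap M_t$, the nontrivial summands collapse to $\lrparen{\EQt{-X}{t}+G_t(w)}\cap M_t$, while the trivial ones equal $M_t$ and drop out of the intersection.  This yields~\eqref{coherent_dual} with $\mathcal{W}_{t,R_t}^q=\mathcal{W}_{t,R_t}^{q,\max}$, which is nonempty because $R_t(0)\neq M_t$.

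For the converse, each map $X\mapsto\lrparen{\EQt{-X}{t}+G_t(w)}\cap M_t$ is closed, positively homogeneous and convex in $X$, and by proposition~\ref{propo properties FYv} together with lemma~\ref{lemma_YvQw} it is $\LdpF{T}_+$-monotone and $M_t$-translative with values in $\mathcal{G}(M_t;(M_t)_+)$; these properties are preserved under intersection, and finiteness at zero is inherited since each $G_t(w)\cap M_t$ is a proper subset of $M_t$ containing $(M_t)_+$.  For the maximality inclusion, suppose $(\mathbb{Q},w)\in\mathcal{W}_{t,R_t}^q$ but $(\mathbb{Q},w)\notin\mathcal{W}_{t,R_t}^{q,\max}$; pick $Z\in A_{R_t}$ with $\E{\trans{Y}Z}<0$, so $0\in R_t(Z)$ forces $0\in\lrparen{\EQt{-Z}{t}+G_t(w)}\cap M_t$ through the dual representation, yielding $\E{\trans{w}\EQt{Z}{t}}\geq 0$, a contradiction.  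The only real obstacle is the unboundedness argument in the second case of the dichotomy: one must argue that the translates of $G_t(w)$ by $\EQt{\lambda Z_0}{t}$, whose defining levels tend to $-\infty$, genuinely exhaust $\LdpF{t}$; this holds already before taking closures, since every fixed $u\in\LdpF{t}$ satisfies $\E{\trans{w}u}\geq\lambda\E{\trans{Y}Z_0}$ for all sufficiently large $\lambda$.
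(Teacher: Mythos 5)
Your argument is correct, and it is more self-contained than what the paper actually writes: the paper disposes of this corollary in one line by citing the static coherent duality theorem and the maximal-dual-set formula from \cite{HHR10}, whereas you derive it directly from theorem~\ref{thm_dual} via the dichotomy for the minimal penalty function of a sublinear map ($-\alpha_t^{\min}(\mathbb{Q},w)$ equals $G_t(w)\cap M_t$ when the associated $Y=\diag{w}\diag{\Et{\dQdP}{t}}^{-1}\dQdP$ lies in $\plus{A_{R_t}}$, and equals $M_t$ otherwise, because $A_{R_t}$ is a closed convex cone). This is in substance the same mechanism the cited static result rests on, so what your route buys is transparency rather than new mathematics: the reader sees exactly why the coherent case is the ``penalty function is $0$ or $+\infty$'' specialisation of the convex case, and the maximality of $\mathcal{W}_{t,R_t}^{q,\max}$ falls out of the same computation. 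Two small points you should make explicit. First, when you discard the terms with $(\mathbb{Q},w)\notin\mathcal{W}_{t,R_t}^{q,\max}$, you need $M_t+\lrparen{\EQt{-X}{t}+G_t(w)}\cap M_t=M_t$, which requires $\lrparen{\EQt{-X}{t}+G_t(w)}\cap M_t\neq\emptyset$ (a Minkowski sum with the empty set is empty, and such a term would \emph{not} drop out); this holds precisely because $w\notin\prp{M_t}$ for every $(\mathbb{Q},w)\in\mathcal{W}_t^q$, so $u\mapsto\E{\trans{w}u}$ is a nonzero, hence surjective, linear functional on the subspace $M_t$. Second, the inclusion $0\in A_{R_t}$ that you use to get $G_t(w)\cap M_t\subseteq-\alpha_t^{\min}(\mathbb{Q},w)$ follows from $R_t$ being closed and finite at zero, so that $R_t(0)$ is a nonempty closed convex cone containing $0$; it is worth saying so, since a cone need not contain the origin by definition alone.
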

\begin{proof}
This corollary follows from theorem 3.11 and an adaption of equation (6.4) in proposition 6.7 in \cite{HHR10} to the dynamic case.
\end{proof}

Let us turn to the special case of conditional convexity and coherence, which is the usual property imposed on dynamic risk measures in the scalar case.  The corresponding duality theorem in the set-valued case is given in corollary~\ref{cor_conditional_dual} below. It will turn our that the penalty functions will have the following additional property in this case.

Consider a penalty function $-\alpha_t: \mathcal{W}_t^q \to \mathcal{G}(M_t;(M_t)_+)$ at time $t$ such that for any $(\mathbb{Q},w) \in \mathcal{W}_t^q$ and any $\lambda \in \LiF{t} \st 0 < \lambda < 1$ it holds
\begin{equation}
\label{conditional_alpha}
-\alpha_t(\mathbb{Q},w) \supseteq \lambda(-\alpha_t(\mathbb{Q},\lambda w)) + (1-\lambda)(-\alpha_t(\mathbb{Q},(1-\lambda) w)).
\end{equation}

\begin{corollary}
\label{cor_conditional_dual}
A function $R_t: \LdpF{T} \to \mathcal{G}(M_t;(M_t)_+)$ is a closed \textbf{\emph{conditionally convex}} conditional risk measure if and only if there is a penalty function $-\alpha_t$ at time $t$ satisfying \eqref{conditional_alpha} such that equation~\eqref{convex_dual} holds true.  In particular, for $R_t$ with the aforementioned properties, the minimal penalty function $-\alpha_t^{\min}$, defined in equation~\eqref{min penalty}, satisfies \eqref{conditional_alpha}.

Further, a function $R_t: \LdpF{T} \to \mathcal{G}(M_t;(M_t)_+)$ is a closed \textbf{\emph{conditionally coherent}} conditional risk measure if and only if there is a nonempty set $\mathcal{W}_{t,R_t}^q \subseteq \mathcal{W}_t^q$ conditionally conical in the second variable, i.e. for any $(\mathbb{Q},w) \in \mathcal{W}_{t,R_t}^q$ and $\lambda \in \LiF{t}_{++}$ then $(\mathbb{Q},\lambda w) \in \mathcal{W}_{t,R_t}^q$, such that equation~\eqref{coherent_dual} is satisfied.  In particular, for $R_t$ with the aforementioned properties, the maximal dual set $\mathcal{W}_{t,R_t}^{q,\max}$, defined in equation~\eqref{max_dualset}, satisfies this additional condition.
\end{corollary}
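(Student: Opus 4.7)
The plan is to bootstrap on theorem~\ref{thm_dual} and corollary~\ref{cor_dual}, which already give the dual representation in the convex (resp.\ coherent) case; what remains is to translate the extra conditional convexity/coherence property of $R_t$ into the extra property of the dual object. Two elementary facts drive everything. First, $\mathcal{W}_t^q$ is stable under multiplication of the second component by any $\lambda\in\LiF{t}$ with $0<\lambda<1$: both $(\Q,\lambda w)$ and $(\Q,(1-\lambda)w)$ lie in $\mathcal{W}_t^q$ whenever $(\Q,w)$ does, since $\plusp{(M_t)_+}\setminus\prp{M_t}$ is closed under such scaling (using the $\Ft{t}$-module structure of $M_t$), and $\diag{\lambda w}\diag{\Et{d\Q/d\P}{t}}^{-1}d\Q/d\P=\lambda\cdot\diag{w}\diag{\Et{d\Q/d\P}{t}}^{-1}d\Q/d\P$ remains in $\LdqF{T}_+$. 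Second, if $g\in G_t(\lambda w)$ then $\lambda g\in G_t(w)$, because $\E{\trans{w}(\lambda g)}=\E{\transp{\lambda w}g}\geq 0$; this is the mechanism by which a dual decomposition at $(\Q,\lambda w)$ is transferred, after multiplication by $\lambda$, to one at $(\Q,w)$.

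For the conditionally convex statement, the ``if'' direction is a direct computation. Given $u=\lambda u_1+(1-\lambda)u_2\in\lambda R_t(X)+(1-\lambda)R_t(Y)$, fix $(\Q,w)\in\mathcal{W}_t^q$ and apply \eqref{convex_dual} to $u_1$ at $(\Q,\lambda w)$ and to $u_2$ at $(\Q,(1-\lambda)w)$ to write $u_i=a_i+\EQt{-X_i}{t}+g_i$ with $a_i$ in the respective $-\alpha_t$ and $g_i$ in the respective $G_t$. Summing yields
\[
u=[\lambda a_1+(1-\lambda)a_2]+\EQt{-\lambda X-(1-\lambda)Y}{t}+[\lambda g_1+(1-\lambda)g_2],
\]
where the first bracket lies in $-\alpha_t(\Q,w)$ by \eqref{conditional_alpha}, the second in $G_t(w)$ by the second observation above, and $u\in M_t$ by the module structure; so \eqref{convex_dual} gives $u\in R_t(\lambda X+(1-\lambda)Y)$. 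The ``only if'' direction uses the explicit minimal penalty \eqref{min penalty}. Conditional convexity of $R_t$ transfers to conditional convexity of $A_{R_t}$; taking $u_i$ in the respective $-\alpha_t^{\min}(\Q,\cdot)$, approximate by pre-closure representatives $u_{i,n}=\EQt{Z_i^n}{t}+g_i^n\in M_t$ with $Z_i^n\in A_{R_t}$ and $g_i^n\in G_t(\lambda w)$ resp.\ $G_t((1-\lambda)w)$. Then
\[
\lambda u_{1,n}+(1-\lambda)u_{2,n}=\EQt{\lambda Z_1^n+(1-\lambda)Z_2^n}{t}+\lambda g_1^n+(1-\lambda)g_2^n
\]
belongs to the pre-closure union defining $-\alpha_t^{\min}(\Q,w)$ by conditional convexity of $A_{R_t}$ and the second observation above, and passing to the limit places $\lambda u_1+(1-\lambda)u_2\in-\alpha_t^{\min}(\Q,w)$, so \eqref{conditional_alpha} holds for $-\alpha_t^{\min}$.

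For the conditionally coherent statement, the forward direction uses that conditional coherence of $R_t$ makes $A_{R_t}$ a conditional convex cone, so its positive polar $\plus{A_{R_t}}$ is closed under multiplication by $\mu\in\LiF{t}_+$ via the adjoint identity $\E{\transp{\mu Y}Z}=\E{\trans{Y}(\mu Z)}\geq 0$ (with $\mu Z\in A_{R_t}$); combined with $\diag{\lambda w}=\lambda\,\diag{w}$, this shows $\mathcal{W}_{t,R_t}^{q,\max}$ of \eqref{max_dualset} is conditionally conical in $w$. Conversely, if $\mathcal{W}_{t,R_t}^q\subseteq\mathcal{W}_t^q$ is conditionally conical in $w$ and satisfies \eqref{coherent_dual}, then repeating the ``if'' argument of the convex case (with the trivially null penalty) gives conditional convexity, while conditional positive homogeneity $\lambda R_t(X)\subseteq R_t(\lambda X)$ follows by applying \eqref{coherent_dual} to $u\in R_t(X)$ at $(\Q,\lambda w)$ and multiplying by $\lambda$ via the second observation; the reverse inclusion is obtained by the symmetric argument (truncating $\lambda\vee(1/n)$ to bound $\lambda^{-1}$ and passing to the limit against the closedness of $R_t$). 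The main technical obstacle throughout is the limit step in the ``only if'' direction of the convex case, which reduces to continuity of $\Ft{t}$-measurable bounded scalar multiplication in the $L^p$-norm topology (or in the $\sigma(\LdiF{T},\LdoF{T})$-topology when $p=\infty$), a standard property of these spaces.
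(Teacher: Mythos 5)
Your argument is essentially the paper's proof: both reduce to theorem~\ref{thm_dual}, prove the ``if'' direction via the two scaling facts (stability of $\mathcal{W}_t^q$ under $w \mapsto \lambda w$ and $\lambda G_t(\lambda w) \subseteq G_t(w)$) combined with \eqref{conditional_alpha}, and prove the ``only if'' direction by verifying \eqref{conditional_alpha} for $-\alpha_t^{\min}$ directly from the conditional convexity of $A_{R_t}$; your pointwise decomposition is just the elementwise rendering of the paper's chain of set inclusions, and the coherent case is handled the same way in both.

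There is one omission worth fixing. Your ``if'' argument for conditional convexity only covers $\lambda \in \LiF{t}$ with $0 < \lambda < 1$: you evaluate the dual representation at $(\mathbb{Q},\lambda w)$ and $(\mathbb{Q},(1-\lambda)w)$, which requires both pairs to lie in $\mathcal{W}_t^q$ (in particular $\lambda w, (1-\lambda)w \notin \prp{M_t}$), and \eqref{conditional_alpha} is only assumed for such $\lambda$. The definition of conditional convexity, however, requires the inclusion for all $0 \leq \lambda \leq 1$, and a random $\lambda$ may equal $0$ or $1$ on sets of positive measure. The paper closes this by approximating $\lambda$ with $\lambda_n \in \LiF{t}$, $0 < \lambda_n < 1$, and using dominated convergence together with the closedness of $R_t$ (via a limit-inferior-of-sets argument) to pass to the boundary. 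You invoke exactly this kind of truncation-and-closedness step for conditional positive homogeneity in the coherent case, so the tool is already in your proof; it just also needs to be applied to the convexity inclusion.
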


\begin{proof}
Using  theorem~\ref{thm_dual}, only two things remain to show: First, if $-\alpha_t$
is a  penalty function satisfying \eqref{conditional_alpha}, then the risk measure defined by \eqref{convex_dual} is conditionally convex. Second,
the minimal penalty function $-\alpha_t^{\min}$ of a conditionally convex risk measure satisfies \eqref{conditional_alpha}.

 First, if $-\alpha_t$ is a penalty function satisfying \eqref{conditional_alpha}, then for any $\lambda \in \LiF{t} \st 0 < \lambda < 1$
\begin{align*}
R_t(\lambda X + (1-\lambda)Y) &= \bigcap_{(\mathbb{Q},w) \in \mathcal{W}_t^q} \lrsquare{-\alpha_t(\mathbb{Q},w) + \lrparen{\EQt{-(\lambda X + (1-\lambda)Y)}{t} + G_t\lrparen{w}} \cap M_t}\\
&= \bigcap_{(\mathbb{Q},w) \in \mathcal{W}_t^q} \lrsquare{-\alpha_t(\mathbb{Q},w) + \lrparen{\lambda\EQt{-X}{t} + (1-\lambda)\EQt{-Y}{t} + G_t\lrparen{w}} \cap M_t}\\
&\supseteq \bigcap_{(\mathbb{Q},w) \in \mathcal{W}_t^q} \lsquare{-\alpha_t(\mathbb{Q},w) + \lrparen{\lambda\EQt{-X}{t} + G_t(w)} \cap M_t}\\
&\quad\quad \rsquare{+ \lrparen{(1-\lambda)\EQt{-Y}{t} + G_t\lrparen{w}} \cap M_t}\\
&= \bigcap_{(\mathbb{Q},w) \in \mathcal{W}_t^q} \lsquare{-\alpha_t(\mathbb{Q},w) + \lambda \lrparen{\EQt{-X}{t} + G_t(\lambda w)} \cap M_t}\\
&\quad\quad \rsquare{+ (1-\lambda)\lrparen{\EQt{-Y}{t} + G_t\lrparen{(1-\lambda) w}} \cap M_t}
\end{align*}
\begin{align*}
&\supseteq \bigcap_{(\mathbb{Q},w) \in \mathcal{W}_t^q} \lsquare{\lambda(-\alpha_t(\mathbb{Q},\lambda w)) + (1-\lambda)(-\alpha_t(\mathbb{Q},(1-\lambda) w))}\\
&\quad\quad \rsquare{+ \lambda \lrparen{\EQt{-X}{t} + G_t(\lambda w)} \cap M_t + (1-\lambda)\lrparen{\EQt{-Y}{t} + G_t\lrparen{(1-\lambda) w}} \cap M_t}\\
&\supseteq \bigcap_{(\mathbb{Q},w) \in \mathcal{W}_t^q} \lrsquare{\lambda(-\alpha_t(\mathbb{Q},\lambda w)) + \lambda \lrparen{\EQt{-X}{t} + G_t(\lambda w)} \cap M_t}\\
&\quad\quad + \bigcap_{(\mathbb{Q},w) \in \mathcal{W}_t^q} \lrsquare{(1-\lambda)(-\alpha_t(\mathbb{Q},(1-\lambda) w)) + (1-\lambda)\lrparen{\EQt{-Y}{t} + G_t\lrparen{(1-\lambda) w}} \cap M_t}\\
&= \lambda \bigcap_{(\mathbb{Q},w) \in \mathcal{W}_t^q} \lrsquare{-\alpha_t(\mathbb{Q},\lambda w) + \lrparen{\EQt{-X}{t} + G_t(\lambda w)} \cap M_t}\\
&\quad\quad + (1-\lambda) \bigcap_{(\mathbb{Q},w) \in \mathcal{W}_t^q} \lrsquare{(-\alpha_t(\mathbb{Q},(1-\lambda) w)) + \lrparen{\EQt{-Y}{t} + G_t\lrparen{(1-\lambda) w}} \cap M_t}\\
&\supseteq \lambda R_t(X) + (1-\lambda) R_t(Y).
\end{align*}
The last line above follows since if $(\mathbb{Q},w) \in \mathcal{W}_t^q$ then $(\mathbb{Q},\lambda w) \in \mathcal{W}_t^q$.  The conditional convexity of $R_t$ can be extended to $\lambda \in \LiF{t} \st 0 \leq \lambda \leq 1$ by taking a sequence $(\lambda_n)_{n = 0}^{\infty} \subseteq \LiF{t}$ such that $0 < \lambda_n < 1$ for every $n \in \mathbb{N}$ which converges almost surely to $\lambda$.  Then by dominated convergence $\lambda_n X$ converges to $\lambda X$ in the norm topology ($\sigma(\LdiF{T},\LdoF{T})$ topology if $p = \infty$).  Therefore, for any $X,Y \in \LdpF{T}$
\begin{align*}
R_t(\lambda X + (1-\lambda) Y) &= R_t(\lim_{n \to \infty} (\lambda_n X + (1-\lambda_n) Y))\\
&\supseteq \liminf_{n \to \infty} R_t(\lambda_n X + (1-\lambda_n) Y)\\
&\supseteq \liminf_{n \to \infty} [\lambda_n R_t(X) + (1-\lambda_n) R_t(Y)]\\
&\supseteq \lambda R_t(X) + (1-\lambda) R_t(Y)
\end{align*}
by $R_t$ closed (see proposition~2.34 in \cite{L11}) and conditionally convex on the interval $0 < \lambda_n < 1$. Note that we use the convention from~\cite{L11} that the limit inferior of a net of sets $(B_i)_{i \in I}$ is given by $\liminf_{i \in I} B_i = \bigcap_{i\in I} \cl\bigcup_{j \geq i} B_i$.

Conversely, let $R_t$ be a conditionally convex risk measure, then its acceptance set $A_{R_t}$ is conditionally convex as well. Therefore for any $(\mathbb{Q},w) \in \mathcal{W}_t^q$ and any $\lambda \in \LiF{t} \st 0 < \lambda< 1$
\begin{align*}
-\alpha_t^{\min}(\mathbb{Q},w) &= \operatorname{cl}\bigcup_{Z \in A_{R_t}} \lrparen{\EQt{Z}{t} + G_t(w)} \cap M_t\\
&\supseteq \operatorname{cl}\bigcup_{Z \in \lambda A_{R_t} + (1-\lambda) A_{R_t}} \lrparen{\EQt{Z}{t} + G_t(w)} \cap M_t\\
&= \operatorname{cl}\bigcup_{Z_1,Z_2 \in A_{R_t}} \lrparen{\EQt{\lambda Z_1 + (1-\lambda) Z_2}{t} + G_t(w)} \cap M_t\\
&\supseteq \operatorname{cl}\bigcup_{Z \in A_{R_t}} \lambda \lrparen{\EQt{Z}{t} + G_t(\lambda w)} \cap M_t + \operatorname{cl}\bigcup_{Z \in A_{R_t}} (1-\lambda)\lrparen{\EQt{Z}{t} + G_t((1-\lambda) w)} \cap M_t\\
&= \lambda(-\alpha_t^{\min}(\mathbb{Q},\lambda w)) + (1-\lambda)(-\alpha_t^{\min}(\mathbb{Q},(1-\lambda) w)).
\end{align*}
Thus by theorem~\ref{thm_dual} there exists a penalty function with property \eqref{conditional_alpha} satisfying equation~\eqref{convex_dual}.

The proof for the conditionally coherent case follows analogously.
\end{proof}

\begin{example}[{[Example~\ref{ex_worst_case} continued]} ]
As the worst cost risk measure is a closed coherent risk measure with acceptance set $A_t = \LdpF{T}_+$ then the maximal set of dual variables is given by \[\mathcal{W}_{t,R_t^{WC}}^{q,\max} := \lrcurly{(\mathbb{Q},w) \in \mathcal{W}_t^q: \diag{w}\diag{\Et{\dQdP}{t}}^{-1}\dQdP \in \LdqF{T}_+} = \mathcal{W}_t^q.\]
\end{example}

If a dynamic risk measure is additionally market-compatible, i.e. trading opportunities are taken into account for all $t$, then the dual representation can be given with respect to $\mathcal{W}_{t,K}^q \subseteq \mathcal{W}_t^q$.  We will define the set of $K_t$-compatible dual variables by
\begin{align*}
\mathcal{W}_{t,K}^q & = \lcurly{(\mathbb{Q},w) \in \mathcal{M}_d(\mathbb{P}) \times \lrparen{\plusp{K_t^{M_t}} \backslash \prp{M_t}}:}\\
&\quad\rcurly{\diag{w}\diag{\Et{\dQdP}{t}}^{-1}\dQdP \in \LdqF{T}_+}\subseteq \mathcal{W}_t^q.
\end{align*}

In fact, the dual representation for a market-compatible risk measure requires a simple switch in the ordering cone from $(M_t)_+$ to $K_t^{M_t}$ (with corresponding image space $\mathcal{G}\lrparen{M_t;K_t^{M_t}}$ as discussed in remark~\ref{rem image space}).

Theorem~\ref{thm_dual}, applied over the set $W_{t,K}^q$ of dual variables, together with an adaption of remark~6.8 in \cite{HHR10} to the dynamic setting leads to a way to generate market compatible dynamic convex risk measure. If given a sequence of convex conditional acceptance sets, or more generally of nonempty convex sets $\hat{A_t} \subseteq \LdpF{T}$ such that
\begin{equation*}
A_t = \operatorname{cl} \lrparen{\hat{A_t} + K_t^{M_t} }
\end{equation*}
satisfies definition~\ref{defn_acceptance} of a conditional acceptance set, then equation~\eqref{convex_dual} with the penalty function $-\alpha_t$, such that
\begin{equation*}
-\alpha_t(\mathbb{Q},w) = \operatorname{cl} \bigcup_{Z \in A_t} \lrparen{\EQt{Z}{t} + G_t(w)} \cap M_t,
\end{equation*}
produces a sequence of closed $K_t$-compatible convex conditional risk measure, thus a market compatible dynamic convex risk measure for the market defined by the sequence of solvency cones $\seq{K}$.

\section{Examples}
\label{sec_examples}

\subsection{Dynamic superhedging}
\label{section_superhedging}

In this section, we define the dynamic extension of the set of superhedging portfolios in markets with proportional transaction costs as presented in~\cite{K99,S04,KS09,HHR10,LR11}.  We further show that the set of superhedging portfolios yields a set-valued market-compatible coherent dynamic risk measure that is multi-portfolio time consistent.  In \cite{LR11} an algorithm for calculating the set of superhedging prices is presented.  That algorithm relies on a successive calculation of superhedging prices backwards in time and leads to a sequence of linear vector optimization problems that can be solved by Benson's algorithm. We show that the recursive form, which is equivalent to multi-portfolio time consistency, leads to and simplifies the proof of the recursive algorithm given in \cite{LR11}. This result gives a hint that the set-valued recursive form of multi-portfolio time consistent risk measures is very useful in practice and might lead to a set-valued analog of Bellman's principle.

Let the random variable $V_t: \Omega \to \mathbb{R}^d$ be a portfolio vector at time $t$ such that the values of $V_t(\omega)$ are in physical units as described in \cite{K99,S04}.  That is, the $i^{th}$ element of $V_t(\omega)$ is the number of asset $i$ in the portfolio in state $\omega\in\mathcal F_t$ at time $t$.  An $\mathbb{R}^d$-valued adapted process $\seq{V}$ is called a self-financing portfolio process for the market given by the solvency cones $\seq{K}$ if
\begin{equation*}
\forall t = 0,1,..., T: V_t - V_{t-1} \in -K_t
\end{equation*}
where $V_{-1} = 0$.

Let $C_{t,T} \subseteq \LdpF{T}$ be the set of $\LdpF{T}$-valued random vectors $V_T: \Omega \to \mathbb{R}^d$ that are the values of a self-financing portfolio process at terminal time $T$ with endowment $0$ at time $t$.  From this definition it follows that $C_{t,T} = \sum_{s = t}^{T} -\LdpK{p}{s}{K_s} $.

An $\mathbb{R}^d_+$-valued adapted process $Z = \seq{Z}$ is called a
consistent pricing process for the market model $\seq{K}$ if $Z$ is a martingale under the physical measure $\mathbb{P}$ and
\begin{equation*}
\forall t \in \lrcurly{0,1,...,T}: Z_t \in \plus{K_t} \backslash \lrcurly{0}.
\end{equation*}

The market is said to satisfy the robust no arbitrage
property (NA$^\text{r}$) if there exists a market process $(\widetilde{K}_t)_{t=0}^T$
satisfying
\begin{equation}
\label{def K tilde}
 K_t \subseteq \widetilde{K}_t \quad \mbox{and} \quad
    K_t \backslash -K_t \subseteq \Int \widetilde{K}_t\quad \P\mbox{-a.s.}
\end{equation}
for all $t\in\{0,1,\dots,T\}$ such that
\[
\widetilde{C}_{0,T} \cap L^0_d(\mathcal{F}_T,\mathbb{R}^d_+) = \lrcurly{0},
\]
where $\widetilde{C}_{0,T}$ is generated by the self-financing portfolio processes with
\[
\forall t\in\{0,1,\dots,T\} \colon\quad V_{t} - V_{t-1} \in -\widetilde{K}_{t} \quad \P\mbox{-a.s.}
\]

The time $t$ version of theorem~4.1 in \cite{S04}, or theorem 5.2 in \cite{HHR10} reads as follows.

\begin{theorem}
\label{thm_NA}
Assume that the market process $\seq{K}$ satisfies the robust no arbitrage condition (NA$^{\text{r}}$), then the following conditions are equivalent for $X \in \LdpF{T}$ and $u \in \LdpF{t}$:
\begin{enumerate}
\item \label{thm_NA_1} $X - u  \in C_{t,T}$, i.e. there exists a self-financing portfolio process $(V_s)_{s = 0}^T$ with $V_s = 0$ if $s < t$, and $V_s \in \LdpF{s}$ for each time $s \geq t$ such that
\begin{equation}
\label{thm_NA_selffinance}
u + V_T = X.
\end{equation}
\item \label{thm_NA_2} For every consistent pricing process $\seq{Z}$ with $Z_t \in \LdqF{t}$ for each time $t$, it holds that
\begin{equation*}
\E{\trans{X}Z_T} \leq \E{\trans{u}Z_t}.
\end{equation*}
\end{enumerate}
\end{theorem}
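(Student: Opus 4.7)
The plan is to prove the two implications separately, with (i) $\Rightarrow$ (ii) being a short direct calculation and (ii) $\Rightarrow$ (i) going by contrapositive via a Hahn--Banach separation argument.

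For (i) $\Rightarrow$ (ii), write $X - u = \sum_{s=t}^{T}(-k_s)$ with $k_s \in \LdpK{p}{s}{K_s}$. For any consistent pricing process $Z$ with $Z_s \in \LdqF{s}$ for all $s$, pair with $Z_T$: since $Z$ is a $\P$-martingale, the tower property gives $\E{\trans{k_s}Z_T} = \E{\trans{k_s}\Et{Z_T}{s}} = \E{\trans{k_s}Z_s} \geq 0$ because $k_s \in K_s$ and $Z_s \in \plus{K_s}$ a.s. Summing, $\E{\trans{(X-u)}Z_T} \leq 0$, and using $\Et{Z_T}{t} = Z_t$ with $u \in \LdpF{t}$ rewrites the right-hand side as $\E{\trans{u}Z_t}$, yielding (ii).

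For (ii) $\Rightarrow$ (i), I would argue contrapositively: suppose $X - u \notin C_{t,T}$. The crucial ingredient is that under (NA$^{\text{r}}$) the set $C_{t,T} = \sum_{s=t}^{T}-\LdpK{p}{s}{K_s}$ is closed and convex in $\LdpF{T}$; this is precisely the content of Schachermayer's closedness theorem (theorem~2.1 in \cite{S04}) adapted to start at time $t$. Since $C_{t,T}$ is in addition a cone, the Hahn--Banach separation theorem applied in the dual pair $\lrparen{\LdpF{T}, \LdqF{T}}$ (with the $\sigma(\LdiF{T},\LdoF{T})$-topology if $p=\infty$) yields a $Y \in \LdqF{T}$ with $\E{\trans{Y}(X-u)} > 0$ and $\E{\trans{Y}c} \leq 0$ for every $c \in C_{t,T}$. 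Setting $Z_s := \Et{Y}{s}$ produces a $\P$-martingale with $Z_s \in \LdqF{s}$. The cone condition $\E{\trans{Y}(-k_s)} \leq 0$ for all $k_s \in \LdpK{p}{s}{K_s}$ with $s \geq t$, applied with $k_s = \1_A k$ for $A \in \Ft{s}$ and $k$ a deterministic generator of $K_s$, localizes to $\trans{k}Z_s \geq 0$ $\P$-a.s., i.e.\ $Z_s \in \plus{K_s}$ for $s \geq t$.

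The main obstacle is that the separating $Y$ controls $Z_s$ only for $s \geq t$, whereas a consistent pricing process is required to satisfy $Z_s \in \plus{K_s}\setminus\{0\}$ on the whole horizon $\{0,\dots,T\}$. I would resolve this by invoking (NA$^{\text{r}}$) once more through the set-valued FTAP (see~\cite{S04}): under robust no arbitrage there exists a strictly consistent pricing process $\hat Z$ with $\hat Z_s \in \Int\plus{K_s}$ a.s.\ for every $s$, and $\hat Z_s \in \LdqF{s}$ may be assumed by a standard bounded truncation argument. Then I would replace $Y$ by $Y^{\lambda} := Y + \lambda \hat Z_T$ for a suitable $\lambda \geq 0$: additivity of the $\plus{K_s}$-cone preserves the constraint for $s \geq t$, while the interior condition on $\hat Z_s$ allows a large enough $\lambda$ to push $\Et{Y^{\lambda}}{s}$ into $\plus{K_s}$ for the finitely many $s < t$ as well (with a localizing sequence on the exceptional sets where the displacement would be too large, which has vanishing $\P$-measure as $\lambda \to \infty$). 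Since the strict inequality $\E{\trans{Y^{\lambda}}(X-u)} > \lambda\E{\trans{\hat Z_T}(X-u)}$ can be maintained by shifting $\lambda$ along a subsequence (or by first rescaling $Y$), we obtain a consistent pricing process $Z^{\lambda}_s := \Et{Y^{\lambda}}{s}$ violating the inequality in~(ii), contradicting (ii). This technical bookkeeping around the interior-point correction for $s<t$ is the only delicate step; everything else is either Schachermayer's closedness theorem or routine separation and conditional-expectation manipulation.
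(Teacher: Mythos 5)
Your first implication is correct, and the skeleton of your second one --- closedness of $C_{t,T}$ under (NA$^{\text{r}}$), conic separation in the dual pair $\lrparen{\LdpF{T},\LdqF{T}}$, and localization to obtain $\Et{Y}{s}\in\LdpK{q}{s}{\plus{K_s}}$ for $s\geq t$ --- is exactly the right adaptation of theorem~4.1 in \cite{S04} (the paper itself offers no more than this citation). The gap is precisely the step you flag as delicate: extending the separating functional to a consistent pricing process on the \emph{whole} horizon $\{0,\dots,T\}$. That step cannot be repaired, because under the literal full-horizon reading of condition~(ii) that you adopt, the implication (ii)$\Rightarrow$(i) is false. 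Take $d=2$, $T=t=1$, a one-point $\Omega$, $K_1=\R^2_+$ and $K_0$ a genuine bid--ask cone as in example~\ref{ex_solvency_transaction}, so that $\plus{K_0}$ is a proper subcone of $\R^2_+=\plus{K_1}$; (NA$^{\text{r}}$) holds. A full-horizon consistent pricing process must satisfy $Z_1=Z_0\in\plus{K_0}\setminus\{0\}$, so with $X=0$ condition~(ii) asserts only that $\trans{u}z\geq 0$ for all $z\in\plus{K_0}$, i.e.\ $u\in K_0$, whereas condition~(i) asserts $u-X\in K_1=\R^2_+\subsetneq K_0$; any $u\in K_0\setminus\R^2_+$ satisfies (ii) but not (i). Your $\lambda$-perturbation also fails on its own terms: no deterministic $\lambda$ pushes $\Et{Y}{s}+\lambda\hat Z_s$ into $\plus{K_s}$ uniformly in $\omega$ (the required $\lambda$ depends on $|\Et{Y}{s}(\omega)|$ and on the distance of $\hat Z_s(\omega)$ to the boundary of $\plus{K_s}(\omega)$, neither of which is bounded), a ``localizing sequence'' does not produce a single admissible process on all of $\Omega$, and when $\E{\trans{\hat Z_T}(X-u)}<0$ large $\lambda$ destroys the strict separation; rescaling $Y$ by $c$ merely reparametrizes the same one-parameter family, since the cone condition for $s<t$ needs $\lambda/c$ large while the inequality needs $c/\lambda$ large.

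The resolution is not technical but interpretive: the intended dual objects are consistent pricing processes \emph{starting at time $t$}, i.e.\ $\P$-martingales $(Z_s)_{s=t}^T$ with $Z_s\in\LdpK{q}{s}{\plus{K_s}}\setminus\{0\}$ required only for $s\geq t$. This is how the paper itself invokes the theorem in the proof of corollary~\ref{cor_shp} (the set $CPP_t$ there, and the dual set $\mathcal{W}^q_{\lrcurly{t,\dots,T}}$, impose the cone condition only for $s\in\{t,\dots,T\}$), and it is the statement that is actually equivalent to $X-u\in C_{t,T}=\sum_{s=t}^T-\LdpK{p}{s}{K_s}$, whose polar involves only the cones from time $t$ onward. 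Under that reading your separation argument concludes directly: $Y\geq 0$ follows from $-\LdpK{p}{T}{\R^d_+}\subseteq C_{t,T}$, the localization gives $\Et{Y}{s}\in\plus{K_s}$ for $s\geq t$, nondegeneracy follows from $\E{\trans{Y}(X-u)}>0$ (adding $\epsilon\hat Z_T$ for small $\epsilon>0$ if pointwise nonvanishing on $\{t,\dots,T\}$ is wanted, which is harmless there), and the entire machinery you built for $s<t$ can be deleted.
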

\begin{proof}
This is a trivial adaptation of theorem~4.1 in \cite{S04}, or theorem 5.2 in \cite{HHR10}.
\end{proof}

Clearly any element $u \in \LdpF{t}$ satisfying equation \eqref{thm_NA_selffinance} is a superhedging portfolio of $X$ at time $t$.  Thus, as an extension to the static case in \cite{HHR10}, the set of superhedging portfolios defines a closed coherent market-compatible dynamic risk measure on $\LdpF{T}$ as described in the corollary below.

\begin{corollary}
\label{cor_shp}
An element $u \in \LdpF{t}$ is a superhedging portfolio at time $t$ for the claim $X \in \LdpF{T}$ if and only if $u \in SHP_t(X)$ with
\begin{equation}
\label{eq_superhedging_acceptance}
SHP_t(X) := \lrcurly{u \in \LdpF{t}: -X + u  \in -C_{t,T}}.
\end{equation}
If the market process $\seq{K}$ satisfies the robust no arbitrage condition (NA$^{\text{r}}$), then $\seq{R}$ defined by $R_t(X):=SHP_t(-X)$ is a closed conditionally coherent market-compatible dynamic risk measure on $\LdpF{T}$ and has the following dual representation
\begin{equation}
\label{eq_superhedging_dual}
SHP_t(X) = \bigcap_{(\mathbb{Q},w) \in \mathcal{W}^q_{\lrcurly{t,...,T}}} \lrparen{\EQt{X}{t} + G_t(w)},
\end{equation}
where $t\in\{0,1,...,T\}$ and
\begin{align*}
\mathcal{W}^q_{\lrcurly{t,...,T}} & = \lcurly{(\Q,w) \in \mathcal{W}_{t,K}^q: \; \forall s \in \lrcurly{t,...,T}}\\
&\quad \rcurly{\diag{w}\diag{\Et{\dQdP}{t}}^{-1}\Et{\dQdP}{s} \in \LdpK{q}{s}{\plus{K_s}}}.
\end{align*}
\end{corollary}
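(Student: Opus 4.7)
The first claim unwinds the definitions directly. An element $u \in \LdpF{t}$ superhedges $X$ at time $t$ iff there exists a self-financing process $(V_s)_{s=t}^T$, with $V_{t-1}=0$ and $V_s \in \LdpF{s}$ for $s \geq t$, satisfying $u + V_T = X$, iff $X - u \in C_{t,T}$, iff $-X + u \in -C_{t,T}$; this is the definition of $SHP_t$. Setting $R_t(X) := SHP_t(-X)$ therefore gives $R_t(X) = \lrcurly{u \in \LdpF{t} : X + u \in -C_{t,T}}$, whose acceptance set is $A_{R_t} = -C_{t,T} = \sum_{s=t}^{T}\LdpK{p}{s}{K_s}$, a convex cone.

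I would next deduce the risk-measure properties from proposition~\ref{propo A_t-R_t} by checking the corresponding properties for $A_{R_t}$. The cone $-C_{t,T}$ contains $\LdpF{T}_+$ (because $\R^d_+ \subseteq K_T$) and $K_t^{M_t}$, yielding $\LdpF{T}_+$-monotonicity and $K_t$-compatibility. Convex-conicity gives subadditivity and positive homogeneity; translativity is immediate from the definition; and conditional positive homogeneity follows because each summand $\LdpK{p}{s}{K_s}$ with $s \geq t$ is invariant under multiplication by $\lambda \in \LiF{t}_{++}$, using $\mathcal{F}_t \subseteq \mathcal{F}_s$ and the pointwise cone property of $K_s$. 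Finiteness at zero combines $0 \in -C_{t,T}$ with the fact that under NA$^{\text{r}}$ one has $-C_{t,T} \neq \LdpF{T}$. Closedness of the graph of $R_t$ reduces to closedness of $A_{R_t}$, which under NA$^{\text{r}}$ is the content of the set-valued Schachermayer--Kabanov closedness theorem already invoked in theorem~\ref{thm_NA} (see \cite{S04,KS09}).

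For the dual representation I would apply corollary~\ref{cor_dual} in its market-compatible form (dual variables restricted to $\mathcal{W}^q_{t,K}$), so the task reduces to computing $\mathcal{W}^{q,\max}_{t,R_t}$ explicitly. By the cone identity $\plusp{(A+B)}=\plusp{A}\cap\plusp{B}$ and the measurable-selector duality $\plusp{\LdpK{p}{s}{K_s}}=\LdpK{q}{s}{\plus{K_s}}$ (see section~6.3 of \cite{HHR10}),
\[
\plusp{A_{R_t}} = \bigcap_{s=t}^{T}\lrcurly{Y \in \LdqF{T} : \Et{Y}{s} \in \LdpK{q}{s}{\plus{K_s}}}.
\]
Substituting $Y = \diag{w}\diag{\Et{\dQdP}{t}}^{-1}\dQdP$ and invoking the $\mathcal{F}_t$-measurability of $w$ together with the tower property, one obtains for $s \geq t$
\[
\Et{Y}{s} = \diag{w}\diag{\Et{\dQdP}{t}}^{-1}\Et{\dQdP}{s},
\]
so the defining condition for $\mathcal{W}^{q,\max}_{t,R_t}$ matches that of $\mathcal{W}^q_{\lrcurly{t,\ldots,T}}$. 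The case $s=T$, together with $\plus{K_T} \subseteq \R^d_+$ (from $\R^d_+ \subseteq K_T$), recovers the integrability clause of $\mathcal{W}^q_{t,K}$; and the natural choice $M_t = \LdpF{t}$ for superhedging makes the intersection with $M_t$ in \eqref{coherent_dual} superfluous, matching \eqref{eq_superhedging_dual}.

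The principal difficulty is the closedness of $C_{t,T}$ under NA$^{\text{r}}$, the nontrivial input required to conclude that $R_t$ is closed and hence that the duality theorem can be invoked. Everything else is essentially algebraic: the cone polar identities, the measurable-selector duality already available in the static set-valued theory, and the tower-property computation that identifies $\mathcal{W}^{q,\max}_{t,R_t}$ with $\mathcal{W}^q_{\lrcurly{t,\ldots,T}}$.
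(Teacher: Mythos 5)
Your proposal is correct, and the first two parts (the definition-unwinding of $SHP_t$ and the derivation of the risk-measure properties from the acceptance set $-C_{t,T}=\sum_{s=t}^{T}\LdpK{p}{s}{K_s}$ via proposition~\ref{propo A_t-R_t}) coincide with the paper's argument, including the correct identification of the closedness of $C_{t,T}$ under (NA$^{\text{r}}$) as the one nontrivial analytic input. Where you genuinely diverge is the dual representation. The paper does not compute $\mathcal{W}^{q,\max}_{t,R_t}$ from the abstract machinery; instead it invokes theorem~\ref{thm_NA} condition~(ii) — the superhedging duality in terms of consistent pricing processes — to write $SHP_t(X)=\bigcap_{Z\in CPP_t}\lrcurly{u\in\LdpF{t}:\E{\trans{Z_T}X}\leq\E{\trans{Z_t}u}}$, and then exhibits an explicit bijection between consistent pricing processes $Z$ and pairs $(\mathbb{Q},w)\in\mathcal{W}^q_{\lrcurly{t,\ldots,T}}$ via $w:=Z_t$ and $d\mathbb{Q}_i/d\mathbb{P}:=(Z_T)_i/\E{(Z_T)_i}$. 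Your route instead applies corollary~\ref{cor_dual} directly and computes the maximal dual set by the polar identity $\plusp{(A+B)}=\plusp{A}\cap\plusp{B}$, the measurable-selector duality, and the tower property; this re-derives the pricing-process characterization from the set-valued Fenchel--Moreau theorem rather than importing it from~\cite{S04}. Both arguments rest on the same closedness fact; yours is more self-contained within the duality framework of section~\ref{sec_dual_rep} and makes the identification $\mathcal{W}^{q,\max}_{t,R_t}=\mathcal{W}^q_{\lrcurly{t,\ldots,T}}$ explicit, whereas the paper's is shorter because it reuses the classical superhedging theorem wholesale. Two small points to tighten: you should note that $\mathcal{W}^q_{\lrcurly{t,\ldots,T}}\neq\emptyset$ (needed to invoke corollary~\ref{cor_dual}), which under (NA$^{\text{r}}$) follows from the existence of a consistent pricing process; and your conditional positive homogeneity argument via invariance of the acceptance cone under $\lambda\in\LiF{t}_{++}$ is fine but differs from the paper, which obtains conditional coherence a posteriori from corollary~\ref{cor_conditional_dual} and the conditional conicity of $\mathcal{W}^q_{\lrcurly{t,\ldots,T}}$ in $w$.
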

\begin{proof}
Theorem~\ref{thm_NA} condition \ref{thm_NA_1} implies equation \eqref{eq_superhedging_acceptance} immediately.  Setting $M_t = \LdpF{t}$ for all times $t = 0,1,...,T$, and since $-C_{t,T} = \sum_{s = t}^T \LdpK{p}{s}{K_s}$ it follows from $K_s(\omega)$ being a convex cone with $\mathbb{R}^d_+ \subseteq K_s(\omega)$ for all $s \in \lrcurly{t,...,T}$ and for all $\omega \in \Omega$ that the set $-C_{t,T}$ is an acceptance set at time $t$ as it satisfies definition \ref{defn_acceptance}.
This also trivially implies that $-C_{t,T}$ is market-compatible. Furthermore, $-C_{t,T} \subseteq \LdpF{T}$ is a convex cone and closed in $\LdpF{T}$ (follows as in \cite{S04}). Thus, $R_t(X)=SHP_t(-X)$ as defined in equation \eqref{eq_superhedging_acceptance} is by proposition~\ref{propo A_t-R_t} a closed, coherent, and market-compatible conditional risk measure.

By theorem~\ref{thm_NA} condition~\ref{thm_NA_2} the set $SHP_t(X)$ of superhedging portfolios of X at time $t$ described in equation \eqref{eq_superhedging_acceptance} can also be written in the form
\begin{equation}
\label{eq_superhedging_cpp}
SHP_t(X) = \bigcap_{Z \in CPP_t} \lrcurly{u \in \LdpF{t}: \E{\trans{Z_T}X} \leq \E{\trans{Z_t}u}}
\end{equation}
where $CPP_t$ is the set of consistent pricing processes starting at time $t$ such that for all $s \geq t$ $Z_s \in \LdpK{q}{s}{\plus{K_s}} \backslash \lrcurly{0}$.  Equation \eqref{eq_superhedging_cpp} is equivalent to \eqref{eq_superhedging_dual} as there is a one-to-one relationship between the set $CPP_t$ and $\mathcal{W}^q_{\lrcurly{t,...,T}}$:  Given a consistent pricing process $Z$, we can create a pair $(\mathbb{Q},w) \in \mathcal{W}^q_{\lrcurly{t,...,T}}$ by defining $w := \Et{Z_T}{t} = Z_t \in \LdpK{q}{t}{\plus{K_t}} \backslash \lrcurly{0}$ and
\begin{equation*}
\dQidP := \frac{(Z_T)_i}{\E{(Z_T)_i}}.
\end{equation*}
Conversely, a pair $(\mathbb{Q},w) \in \mathcal{W}^q_{\lrcurly{t,...,T}}$ yields a consistent pricing process $Z$ starting from time $t$ by letting $Z_T = \diag{w}\diag{\Et{\dQdP}{t}}^{-1}\dQdP$ and $Z_s = \Et{Z_T}{s}$ for all $s = t,...,T$.

Finally, conditional coherence follows from corollary~\ref{cor_conditional_dual} and the fact that if $(\mathbb{Q},w) \in \mathcal{W}^q_{\lrcurly{t,...,T}}$ and $\lambda \in \LiF{t}_{++}$ then $(\mathbb{Q}, \lambda w) \in \mathcal{W}^q_{\lrcurly{t,...,T}}$ trivially.
\end{proof}

The probability measures $\mathbb{Q}$ with $(\mathbb{Q},w) \in \mathcal{W}^q_{\lrcurly{t,...,T}}$ can be seen as equivalent martingale measures.  Indeed, the component $\mathbb{Q}_i$, $i = 1,...,d$ is a martingale measure if asset $i$ is chosen as num\'{e}raire.

It remains to show that the dynamic superhedging set (as a coherent risk measure) is multi-portfolio time consistent.
\begin{lemma}
\label{lemma_SHPmptc}
Under the (NA$^{\text{r}}$) condition, the set-valued function $R_t(X):=SHP_t(-X)$ defined in corollary~\ref{cor_shp} is a normalized multi-portfolio time consistent dynamic risk measure.
\end{lemma}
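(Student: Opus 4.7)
The plan is to prove normalization first, then multi-portfolio time consistency via the acceptance set characterization in theorem~\ref{thm_equiv_tc}~(iv), which is available because here $M_t = \LdpF{t}$ so trivially $M_t \subseteq M_{t+1}$.

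For normalization, by corollary~\ref{cor_shp} the dynamic risk measure $R_t(\cdot) = SHP_t(-\cdot)$ is closed and coherent; as noted in the discussion following definition~\ref{defn_properties_conditional}, any closed coherent risk measure is normalized, so $R_t(X) = R_t(X) + R_t(0)$ holds for every $X \in \LdpF{T}$ and every $t \in \{0,1,\dots,T\}$.

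For multi-portfolio time consistency I would identify the acceptance sets explicitly as
\[
A_t = \{X \in \LdpF{T} \colon 0 \in R_t(X)\} = -C_{t,T} = \sum_{s=t}^{T} \LdpK{p}{s}{K_s},
\]
so that $A_{t+1} = \sum_{s=t+1}^{T}\LdpK{p}{s}{K_s}$ and $A_{t,t+1}^{M_{t+1}} = A_t \cap \LdpF{t+1}$ contains in particular $\LdpK{p}{t}{K_t}$. Then I would verify the identity $A_t = A_{t+1} + A_{t,t+1}^{M_{t+1}}$ by two inclusions. The inclusion $\supseteq$ is immediate since $A_t$ is a convex cone (as $-C_{t,T}$ is) and both summands are contained in $A_t$. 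For the nontrivial inclusion $\subseteq$, take $X \in A_t$ and decompose $X = k_t + \sum_{s=t+1}^{T} k_s$ with $k_s \in \LdpK{p}{s}{K_s}$; the tail $\sum_{s=t+1}^{T} k_s$ lies in $A_{t+1}$, while $k_t \in \LdpK{p}{t}{K_t} \subseteq \LdpF{t+1} \cap A_t = A_{t,t+1}^{M_{t+1}}$ because $\Ft{t} \subseteq \Ft{t+1}$. Once this identity is established, theorem~\ref{thm_equiv_tc} applied to the normalized risk measure $\seq{R}$ delivers multi-portfolio time consistency.

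The only subtle point is making sure that we may invoke theorem~\ref{thm_equiv_tc} (the equivalence of (i) and (iv)), which requires normalization together with $M_t \subseteq M_{t+1}$; both conditions are immediate here. The additive decomposition $A_t = A_{t+1} + A_{t,t+1}^{M_{t+1}}$ is essentially the defining feature of $C_{t,T}$ as a telescoping sum, so no deep estimate is needed, and robust no-arbitrage only enters implicitly through corollary~\ref{cor_shp} (closedness of $-C_{t,T}$ and finiteness of the risk measure). Thus no real obstacle arises beyond carefully stringing these pieces together.
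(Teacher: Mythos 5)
Your proposal is correct and follows essentially the same route as the paper: identify $A_t = -C_{t,T} = \sum_{s=t}^{T}\LdpK{p}{s}{K_s}$, verify the additivity $A_t = A_{t+1} + A_{t,t+1}^{M_{t+1}}$, and invoke theorem~\ref{thm_equiv_tc} together with normalization. The only (harmless) differences are cosmetic: you argue normalization directly from the general fact that closed coherent risk measures are normalized rather than via the weak normalization of $SHP_t(0)$ and property~3.1 of \cite{JMT04}, and your two-inclusion argument for the acceptance-set identity avoids the paper's explicit computation of $A_{t,t+1}$ by only using $\LdpK{p}{t}{K_t} \subseteq A_{t,t+1}^{M_{t+1}}$ and the convex-cone property of $A_t$.
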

\begin{proof}
We have already shown that the acceptance set of $R_t(X)=SHP_t(-X)$ is $ A_t = -C_{t,T} = \sum_{s = t}^T \LdpK{p}{s}{K_s}$.  The one step acceptance set is given by $ A_{t,t+1}= A_t\cap\LdpF{t+1} = \LdpK{p}{t}{K_t}+\lrparen{\sum_{s = t+1}^T \LdpK{p}{s}{K_s}}\cap\LdpF{t+1}$. Since it holds
\[
\sum_{s = t+2}^T \LdpK{p}{s}{K_s}+\lrparen{\sum_{s = t+2}^T \LdpK{p}{s}{K_s}}\cap\LdpF{t+1}=\sum_{s = t+2}^T \LdpK{p}{s}{K_s},
\]
it can easily be seen that $A_t =  A_{t,t+1} +  A_{t+1}$ is satisfied for any time $t$.  Since $M_t = \LdpF{t} \subseteq \LdpF{t+1} = M_{t+1}$ for all times $t$, theorem \ref{thm_equiv_tc} implies that $\seq{R}$ is a multi-portfolio time consistent dynamic risk measure if it is normalized.  Note that $SHP_t(0)= A_t\cap\LdpF{t}$ and since the solvency cones contain $\mathbb{R}^d_+$, it holds $SHP_t(0) \supseteq \LdpF{t}_+$, and by (NA$^{\text{r}}$) we have $SHP_t(0) \cap \LdpF{t}_{--} = \emptyset$ for every time $t$, which implies for coherent risk measures that $\seq{R}$ is normalized (as mentioned in section~\ref{sec_dynamic_defn} and shown in property~3.1 in \cite{JMT04}).
\end{proof}

In the frictionless case the no arbitrage condition implies $ A_{t,t+1}= A_t\cap\LdpF{t+1} = \LdpK{p}{t}{K_t}+\LdpK{p}{t+1}{K_{t+1}}$ (see e.g. section~4.2 in \cite{P07}), but in markets with transaction costs this is not necessarily true and the one step acceptance sets are in general $ A_{t,t+1}= A_t\cap\LdpF{t+1} = \LdpK{p}{t}{K_t}+\LdpK{p}{t+1}{K_{t+1}}+\lrparen{\sum_{s = t+2}^T \LdpK{p}{s}{K_s}}\cap\LdpF{t+1}$, i.e. it is possible to benefit from $\mathcal F_{t+1}$-measurable trades carried out at $t\in\{t+2,...,T\}$ without creating a (robust) arbitrage. This also means that $SHP_t$ restricted to the space $\LdpF{t+1}$ is no longer equal to the set of superhedging portfolios at time $t$ allowing trading until time $t+1$ as it is in the frictionless case.

\begin{remark}
Since multi-portfolio time consistency is equivalent to recursiveness (theorem~\ref{thm_equiv_tc}), the set of superhedging portfolios satisfies under (NA$^{\text{r}}$)
\begin{equation}
\label{recursive SHP}
SHP_t(X) = \bigcup_{Z \in SHP_{t+1}(X)} SHP_t(Z)=:SHP_t(SHP_{t+1}(X)).
\end{equation}
The recursiveness leads in a straight forward manner to the recursive algorithm given by theorem~3.1 in \cite{LR11} and thus simplifies the proof of that theorem significantly.
Direct observation or using lemma~\ref{lemma_market_comp} for $\seq{R}$ defined by $R_t(X):=SHP_t(-X)$ being market-compatible (corollary~\ref{cor_shp}), normalized and multi-portfolio time consistent (lemma~\ref{lemma_SHPmptc}), yields
$A_{t+1} = A_{t+1} + \sum_{s = t+1}^{T}\LdpK{p}{s}{K_s}$ for each $t$, and in particular
\[A_{t+1} \supseteq A_{t+1} + \lrparen{\sum_{s = t+1}^{T}\LdpK{p}{s}{K_s}}\cap\LdpF{t+1}.\]
Then, proposition~\ref{propo A_t-R_t} implies $SHP_{t+1}(X)=SHP_{t+1}(X)+\lrparen{\sum_{s = t+1}^{T}\LdpK{p}{s}{K_s}}\cap\LdpF{t+1}$ which leads to
$SHP_{t+1}(X)+A_{t,t+1}=SHP_{t+1}(X)+\LdpK{p}{t}{K_t}$. Thus, the recursive form \eqref{recursive SHP} reads as
\begin{align*}
SHP_t(X) &=\bigcup_{Z \in SHP_{t+1}(X)} \{u\in\LdpF{t}:-Z+u\in A_t\}
\\
&=\bigcup_{Z \in SHP_{t+1}(X)} \{u\in\LdpF{t}:-Z+u\in A_{t,t+1}\}
\\
&=\lrcurly{u\in\LdpF{t}:u\in SHP_{t+1}(X)+ A_{t,t+1}}
\\
&=\lrcurly{u \in \LdpF{t}: u\in SHP_{t+1}(X)+\LdpK{p}{t}{K_t}}=SHP_{t+1}(X)\cap\LdpF{t}+\LdpK{p}{t}{K_t},
\end{align*}
for $t\in\{T-1,...,0\}$. Together with $SHP_T(X)=X+K_T$ one obtains a recursive algorithm,
which is shown in \cite{LR11} to be equivalent to a sequence of linear vector optimization problems that can be solved by Benson's algorithm.
\end{remark}

\subsection{Average value at risk}
\label{section_avar}
In this section we will discuss a dynamic set-valued average value at risk with time $t$ parameters $\lambda^t \in \LdiF{t}$, $0 < \lambda^t_i \leq 1$, that is defined by the following dual representation
\begin{equation*}
AV@R_t^{\lambda}(X) := \bigcap_{(\Q,w) \in \W^{\lambda}_{t}} \lrparen{\EQt{-X}{t} + G_t(w)} \cap M_t
\end{equation*}
for any $X \in \LdoF{T}$ and set of eligible portfolios given by a closed subspace $M_t \subseteq \LdoF{t}$. We denote \[\W^{\lambda}_{t} := \lrcurly{(\Q,w) \in \W_t^{\infty}: \diag{w}\lrparen{\diag{\lambda^t}^{-1} \vec{\1} - \diag{\Et{\dQdP}{t}}^{-1} \dQdP} \in \LdiF{T}_+}\] for the regulator version and \[\W^{\lambda}_{t} := \lrcurly{(\Q,w) \in \W_{t,K}^{\infty}: \diag{w}\lrparen{\diag{\lambda^t}^{-1} \vec{\1} - \diag{\Et{\dQdP}{t}}^{-1} \dQdP} \in \LdiF{T}_+}\] for market compatible average value at risk, where $\vec{\1} := \transp{1,...,1} \in \R^d$.
A primal definition of the static set-valued average value at risk can be found in \cite{HRY12}.

\begin{proposition}
\label{propoavar}
$\seq{AV@R^{\lambda}}$ is a normalized closed conditionally coherent dynamic risk measure.
\end{proposition}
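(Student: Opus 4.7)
The plan is to invoke corollary~\ref{cor_conditional_dual}: it asserts that a function $R_t:\LdoF{T}\to\mathcal{G}(M_t;(M_t)_+)$ is a closed conditionally coherent conditional risk measure if and only if it admits the dual form \eqref{coherent_dual} with a nonempty dual set $\mathcal{W}_{t,R_t}^q\subseteq \mathcal{W}_t^q$ that is conditionally conical in the second variable. Since $AV@R_t^\lambda$ is defined directly through such a formula with dual set $\W^\lambda_t$, it suffices to verify that $\W^\lambda_t$ is (i) nonempty and (ii) conditionally conical in the second variable. Once this is checked for every $t\in\{0,1,\ldots,T\}$, each $AV@R_t^\lambda$ is a closed conditionally coherent conditional risk measure, and definition~\ref{defn_properties_dynRM} packages them into a closed conditionally coherent dynamic risk measure. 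Normalization is then automatic: as noted in section~\ref{sec_dynamic_defn}, every closed coherent risk measure is normalized, and conditionally coherent clearly implies coherent (take constant $\lambda$).

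For the conditional conicality (step (ii)), fix $(\Q,w)\in\W^\lambda_t$ and $\lambda'\in\LiF{t}_{++}$ and verify each defining condition for $(\Q,\lambda'w)$. The inclusion $\lambda'w\in\plusp{(M_t)_+}\setminus\prp{M_t}$ is immediate since $\lambda'>0$ a.s. preserves the positive dual cone and does not annihilate $M_t$. Multiplying the $\LdiF{T}_+$-membership condition $\diag{w}\diag{\Et{\dQdP}{t}}^{-1}\dQdP\in\LdiF{T}_+$ by the nonnegative, essentially bounded $\Ft{t}$-measurable factor $\lambda'$ keeps the resulting product in $\LdiF{T}_+$; the same applies to the inequality involving $\diag{\lambda^t}^{-1}\vec{\1}-\diag{\Et{\dQdP}{t}}^{-1}\dQdP$. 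The market-compatible variant is identical.

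For nonemptiness (step (i)), I would exhibit an explicit witness. In the regulator case take $(\P,\lambda^t)$: with $\Q=\P$ one has $\dQdP=\vec{\1}$ and $\Et{\dQdP}{t}=\vec{\1}$, so $\diag{\lambda^t}\diag{\Et{\dQdP}{t}}^{-1}\dQdP=\lambda^t\in\LdiF{T}_+$ and $\diag{\lambda^t}\bigl(\diag{\lambda^t}^{-1}\vec{\1}-\vec{\1}\bigr)=\vec{\1}-\lambda^t\in\LdiF{T}_+$ since $0<\lambda^t_i\leq 1$. Moreover $\lambda^t\in\plusp{(M_t)_+}$ (as $\lambda^t\geq 0$) and $\lambda^t\notin\prp{M_t}$, because for any nonzero $u\in(M_t)_+$ one has $\E{\trans{(\lambda^t)}u}=\E{\sum_i\lambda^t_i u_i}>0$, using $\lambda^t_i>0$ a.s. and the nontriviality of $(M_t)_+$. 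For the market-compatible version the extra requirement is $w\in\plusp{K_t^{M_t}}\setminus\prp{M_t}$; since $\R^d_+\subseteq K_t(\omega)\subsetneq\R^d$, $\plus{K_t(\omega)}$ is a nontrivial closed convex cone in $\R^d_+$, and a measurable selection argument yields a bounded $\Ft{t}$-measurable $w$ with $w(\omega)\in\plus{K_t(\omega)}\setminus\{0\}$, not annihilating $M_t$, and such that $w/\lambda^t$ remains essentially bounded.

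The only mildly subtle ingredient is the $\LdiF{T}$-boundedness condition in the definition of $\W^\lambda_t$: because the parameter $\lambda^t$ is not assumed bounded away from zero, one cannot simply take $w=\vec{\1}$ as the witness. The tailored choice $w=\lambda^t$ precisely cancels the $\diag{\lambda^t}^{-1}$ factor and produces a bounded element of $\LdiF{T}_+$; in the market-compatible case the same cancellation is achieved by scaling a measurable selector of $\plus{K_t}$ appropriately against $\lambda^t$. Everything else is routine verification from the properties of conditional expectations and positive scalings.
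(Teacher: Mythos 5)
Your proposal is correct and follows the paper's main line: both you and the paper obtain closedness and conditional coherence directly from corollary~\ref{cor_conditional_dual} (resp.\ corollary~\ref{cor_dual}) by noting that $AV@R_t^{\lambda}$ is \emph{defined} in the dual form \eqref{coherent_dual} with a dual set that is conditionally conical in $w$. You actually do more work than the paper here, in a useful way: the paper simply asserts conditional conicality and never checks that $\W^{\lambda}_{t}\neq\emptyset$ (which the corollaries do require), whereas your witness $(\P,\lambda^t)$ — chosen precisely so that $\diag{\lambda^t}$ cancels $\diag{\lambda^t}^{-1}$ and boundedness is preserved even when $\lambda^t$ is not bounded away from zero — closes that gap cleanly for the regulator version. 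The one genuine divergence is normalization: you route it through the general remark in section~\ref{sec_dynamic_defn} that any closed coherent risk measure is normalized (via $R_t(0)$ being a closed convex cone contained in the recession cone of each $R_t(X)$), while the paper gives a two-line direct verification, showing $u_X+u_0\in\lrparen{\EQt{-X}{t}+G_t(w)}\cap M_t$ for each dual pair and using $0\in AV@R_t^{\lambda}(0)$ for the reverse inclusion. Both are valid; the paper's version is self-contained and does not lean on a claim that is only stated informally in the text, whereas yours is shorter and applies verbatim to any risk measure built from corollary~\ref{cor_dual}. Your measurable-selection sketch for nonemptiness in the market-compatible case is plausible but is the least detailed step; the paper avoids the issue entirely by not addressing nonemptiness.
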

\begin{proof}
$\seq{AV@R^{\lambda}}$ is a closed coherent dynamic risk measure by definition, see corollary~\ref{cor_dual}.  It is conditionally coherent by corollary~\ref{cor_conditional_dual} since $\W_t^{\lambda}$ is conditionally conical in the second variable.

To prove normalization, take $u_0 \in AV@R_t^{\lambda}(0)$ and $u_X \in AV@R_t^{\lambda}(X)$. Then, for every $(\Q,w) \in \W^{\lambda}_{t}$ it holds that $u_0 \in G_t(w) \cap M_t$ and $u_X \in \lrparen{\EQt{-X}{t} + G_t(w)} \cap M_t$.  It follows that $u_X + u_0 \in G_t(w) \cap M_t + \lrparen{\EQt{-X}{t} + G_t(w)} \cap M_t \subseteq \lrparen{\EQt{-X}{t} + G_t(w)} \cap M_t$, and therefore $u_X + u_0 \in AV@R_t^{\lambda}(X)$.  The other direction trivially follows from $0 \in AV@R_t^{\lambda}(0)$.
\end{proof}

\begin{remark}
If $\lambda^t = 1$ with $M_t = \LdoF{t}$ then it can be seen that \[\W^{\lambda}_{t} = \lrcurly{(\Q,w) \in \mathcal{M}_d(\P) \times \LdiF{t}_+ \backslash \{0\}: \dQdP \in \LdiF{t}_+}\] and thus $AV@R_t^{\lambda}(X) = \Et{-X}{t} + \LdoF{t}_+$ for any $X \in \LdoF{T}$.  Therefore, for any choice of $M_t$ we have $AV@R_t^{\lambda}(X) = (\Et{-X}{t} + \LdoF{t}_+) \cap M_t$ for any $X \in \LdoF{T}$.
\end{remark}

If $\P(\lambda^t < 1) > 0$ then it can easily be seen that $\seq{AV@R}$ is not recursive since there is no relation between the set of dual variables through time.  Therefore, there is no relation between the acceptance sets between time $t$ and time $t+1$.  Thus by proposition~\ref{propoavar} and theorem~\ref{thm_equiv_tc} the set-valued average value at risk is not a multi-portfolio time consistent risk measure.

By the procedure described in proposition~\ref{lemma_gen_tc} a multi-portfolio time consistent version of the set-valued average value at risk can be obtained, analogous to the scalar version defined in~\cite{CK10}. To obtain a nice dual representation of this composed AV@R further research concerning stability properties are necessary, which are left for further research at that time.

\section*{Acknowledgements}
We would like to thank Patrick Cheridito, Andreas Hamel, Frank Heyde, and Frank Riedel for helpful comments and discussions.
Birgit Rudloff's research was supported by NSF award DMS-1007938 and Zachary Feinstein was supported by the NSF RTG grant 0739195.

\bibliographystyle{rQUF}
\bibliography{biblio}

\begin{thebibliography}{42}
\providecommand{\natexlab}[1]{#1}

\bibitem[\protect\citeauthoryear{Acciaio and Penner}{2011}]{AP10}
Acciaio, B. and Penner, I., Dynamic risk measures. In {\itshape Advanced
  Mathematical Methods for Finance}, edited by G.~Di~Nunno and B.~\"{O}ksendal,
  pp. 1--34, 2011, Springer.

\bibitem[\protect\citeauthoryear{Artzner {\itshape{et~al.}}}{1999}]{AD99}
Artzner, P., Delbaen, F., Eber, J.M. and Heath, D., Coherent measures of risk.
  {\itshape Mathematical Finance}, 1999, \textbf{9}, 203--228.

\bibitem[\protect\citeauthoryear{Artzner {\itshape{et~al.}}}{2007}]{AD07}
Artzner, P., Delbaen, F., Eber, J.M., Heath, D. and Ku, H., Coherent
  multiperiod risk adjusted values and Bellman's principle. {\itshape Annals
  OR}, 2007, \textbf{152}, 5--22.

\bibitem[\protect\citeauthoryear{Ben~Tahar and L\'{e}pinette}{2012}]{TL12}
Ben~Tahar, I. and L\'{e}pinette, E., Vector-valued risk measure processes.
  {\itshape Preprint}, 2012.

\bibitem[\protect\citeauthoryear{Bielecki
  {\itshape{et~al.}}}{2012{\natexlab{a}}}]{BC12}
Bielecki, T.R., Cialenco, I., Iyigunler, I. and Rodriguez, R., Dynamic Conic
  Finance: Pricing and Hedging in Market Models with Transaction Costs via
  Dynamic Coherent Acceptability Indices. {\itshape ArXiv e-prints},
  2012{\natexlab{a}}.

\bibitem[\protect\citeauthoryear{Bielecki
  {\itshape{et~al.}}}{2012{\natexlab{b}}}]{BCZ11}
Bielecki, T.R., Cialenco, I. and Zhang, Z., Dynamic Coherent Acceptability
  Indices and their Applications to Finance. {\itshape Mathematical Finance},
  2012{\natexlab{b}} Preprint.

\bibitem[\protect\citeauthoryear{Bion-Nadal}{2004}]{BN04}
Bion-Nadal, J., Conditional risk measures and robust representation of convex
  risk measures. {\itshape Ecole Polytechnique, CMAP, preprint no. 557}, 2004.

\bibitem[\protect\citeauthoryear{Bion-Nadal}{2008}]{BN08}
Bion-Nadal, J., Dynamic risk measures: Time consistency and risk measures from
  BMO martingales. {\itshape Finance and Stochastics}, 2008, \textbf{12},
  219--244.

\bibitem[\protect\citeauthoryear{Bion-Nadal}{2009}]{BN09}
Bion-Nadal, J., Time consistent dynamic risk processes. {\itshape Stochastic
  Processes and their Applications}, 2009, \textbf{119}, 633 -- 654.

\bibitem[\protect\citeauthoryear{Cheridito {\itshape{et~al.}}}{2006}]{CDK06}
Cheridito, P., Delbaen, F. and Kupper, M., Dynamic monetary risk measures for
  bounded discrete-time processes. {\itshape Electronic Journal of
  Probability}, 2006, \textbf{11}, 57--106.

\bibitem[\protect\citeauthoryear{Cheridito and Kupper}{2011}]{CK10}
Cheridito, P. and Kupper, M., Composition of Time-Consistent Dynamic Monetary
  Risk Measures in Discrete Time. {\itshape International Journal of
  Theoretical and Applied Finance}, 2011, \textbf{14}, 137--162.

\bibitem[\protect\citeauthoryear{Cheridito and Stadje}{2009}]{CS09}
Cheridito, P. and Stadje, M., Time-inconsistency of VaR and time-consistent
  alternatives. {\itshape Finance Research Letters}, 2009, \textbf{6}, 40--46.

\bibitem[\protect\citeauthoryear{Delbaen}{2002}]{D02}
Delbaen, F., Coherent risk measures on general probability spaces. In {\itshape
  Proceedings of the }{\itshape Advances in finance and stochastics}, pp.
  1--37, 2002  (Springer: Berlin).

\bibitem[\protect\citeauthoryear{Delbaen}{2006}]{D06}
Delbaen, F., The Structure of m-Stable Sets and in Particular of the Set of
  Risk Neutral Measures. In {\itshape In Memoriam Paul-Andr\'{e} Meyer}, edited
  by M.~\'{E}mery and M.~Yor, Vol.  1874 of {\itshape Lecture Notes in
  Mathematics}, pp. 215--258, 2006, Springer Berlin / Heidelberg.

\bibitem[\protect\citeauthoryear{Delbaen {\itshape{et~al.}}}{2010}]{DPRG10}
Delbaen, F., Peng, S. and Gianin, E.R., Representation of the penalty term of
  dynamic concave utilities. {\itshape Finance and Stochastics}, 2010,
  \textbf{14}, 449--472.

\bibitem[\protect\citeauthoryear{Detlefsen and Scandolo}{2005}]{DS05}
Detlefsen, K. and Scandolo, G., Conditional and Dynamic Convex Risk Measures.
  {\itshape Finance and Stochastics}, 2005, \textbf{9}, 539--561.

\bibitem[\protect\citeauthoryear{Filipovi{\'c}
  {\itshape{et~al.}}}{2009}]{FKV09}
Filipovi{\'c}, D., Kupper, M. and Vogelpoth, N., Separation and duality in
  locally {$L\sp 0$}-convex modules. {\itshape J. Funct. Anal.}, 2009,
  \textbf{256}, 3996--4029.

\bibitem[\protect\citeauthoryear{Filipovi{\'c}
  {\itshape{et~al.}}}{2011}]{FKV11}
Filipovi{\'c}, D., Kupper, M. and Vogelpoth, N., Approaches to Conditional
  Risk. {\itshape SIAM Journal of Financial Mathematics}, 2011 Preprint.

\bibitem[\protect\citeauthoryear{F\"{o}llmer and Penner}{2006}]{FP06}
F\"{o}llmer, H. and Penner, I., Convex risk measures and the dynamics of their
  penalty functions. {\itshape Statistics and decisions}, 2006, \textbf{24},
  61--96.

\bibitem[\protect\citeauthoryear{F{\"o}llmer and Schied}{2002}]{FS02}
F{\"o}llmer, H. and Schied, A., Convex measures of risk and trading
  constraints. {\itshape Finance and Stochastics}, 2002, \textbf{6}, 429--447.

\bibitem[\protect\citeauthoryear{F{\"o}llmer and Schied}{2011}]{FS11}
F{\"o}llmer, H. and Schied, A., {\itshape Stochastic finance}, extended  2011
  (Walter de Gruyter \& Co.: Berlin).

\bibitem[\protect\citeauthoryear{Frittelli and Rosazza~Gianin}{2004}]{FG04}
Frittelli, M. and Rosazza~Gianin, E., Dynamic convex risk measures. In
  {\itshape New Risk Measures for the 21th Century}, edited by G.~Szeg\"{o},
  pp. 227--248, 2004, John Wiley \& Sons.

\bibitem[\protect\citeauthoryear{Hadwiger}{1950}]{H50}
Hadwiger, H., Minkowskische {A}ddition und {S}ubtraktion beliebiger
  {P}unktmengen und die {T}heoreme von {E}rhard {S}chmidt. {\itshape
  Mathematische Zeitschrift}, 1950, \textbf{53}, 210--218.

\bibitem[\protect\citeauthoryear{Hamel}{2009}]{H09}
Hamel, A.H., A Duality Theory for Set-Valued Functions {I}: Fenchel Conjugation
  Theory. {\itshape Set-Valued and Variational Analysis}, 2009, \textbf{17},
  153--182.

\bibitem[\protect\citeauthoryear{Hamel and Heyde}{2010}]{HH10}
Hamel, A.H. and Heyde, F., Duality for Set-Valued Measures of Risk. {\itshape
  SIAM J. on Financial Mathematics}, 2010, \textbf{1}, 66--95.

\bibitem[\protect\citeauthoryear{Hamel {\itshape{et~al.}}}{2011}]{HHR10}
Hamel, A.H., Heyde, F. and Rudloff, B., Set-valued risk measures for conical
  market models. {\itshape Mathematics and Financial Economics}, 2011,
  \textbf{5}, 1--28.

\bibitem[\protect\citeauthoryear{Hamel and Rudloff}{2008}]{HR08}
Hamel, A.H. and Rudloff, B., Continuity and Finite-Valuedness of Set-Valued
  Risk Measures. In {\itshape Festschrift in Celebration of Prof. Dr. Wilfried
  Grecksch�s 60th Birthday}, edited by C.~Tammer and F.~Heyde, pp. 46--64,
  2008, Shaker Verlag.

\bibitem[\protect\citeauthoryear{Hamel {\itshape{et~al.}}}{2012}]{HRY12}
Hamel, A.H., Rudloff, B. and Yankova, M., Set-valued average value at risk and
  its computation. {\itshape ArXiv e-prints}, 2012.

\bibitem[\protect\citeauthoryear{Jacka and Berkaoui}{2011}]{JB08}
Jacka, S. and Berkaoui, A., On representing claims for coherent risk measures.
  {\itshape Preprint}, 2011.

\bibitem[\protect\citeauthoryear{Jouini {\itshape{et~al.}}}{2004}]{JMT04}
Jouini, E., Meddeb, M. and Touzi, N., Vector-valued coherent risk measures.
  {\itshape Finance and Stochastics}, 2004, \textbf{8}, 531--552.

\bibitem[\protect\citeauthoryear{Kabanov}{1999}]{K99}
Kabanov, Y.M., Hedging and liquidation under transaction costs in currency
  markets. {\itshape Finance and Stochastics}, 1999, \textbf{3}, 237--248.

\bibitem[\protect\citeauthoryear{Kabanov and Safarian}{2009}]{KS09}
Kabanov, Y.M. and Safarian, M., {\itshape {Markets with Transaction Costs:
  Mathematical Theory}}, Springer Finance 2009, Springer.

\bibitem[\protect\citeauthoryear{Kl\"{o}ppel and Schweizer}{2007}]{KS05}
Kl\"{o}ppel, S. and Schweizer, M., Dynamic indifference valuation via convex
  risk measures. {\itshape Mathematical Finance}, 2007, \textbf{17}, 599--627.

\bibitem[\protect\citeauthoryear{Kovacevic}{2012}]{K11}
Kovacevic, R., Conditional Risk and Acceptability Mappings as Banach-Lattice
  Valued Mappings. {\itshape Statistics and Risk Modeling}, 2012, \textbf{29},
  1--18.

\bibitem[\protect\citeauthoryear{Kovacevic and Pflug}{2009}]{KPf09}
Kovacevic, R. and Pflug, G.C., Time consistency and information monotonicity of
  multiperiod acceptability functionals. In {\itshape Advanced financial
  modelling}, Vol. ~8 of {\itshape Radon Ser. Comput. Appl. Math.}, pp.
  347--369, 2009, Walter de Gruyter, Berlin.

\bibitem[\protect\citeauthoryear{L{\"o}hne}{2011}]{L11}
L{\"o}hne, A., {\itshape Vector Optimization With Infimum and Supremum}, Vector
  Optimization 2011, Springer.

\bibitem[\protect\citeauthoryear{{L{\"o}hne} and Rudloff}{2011}]{LR11}
{L{\"o}hne}, A. and Rudloff, B., {An algorithm for calculating the set of
  superhedging portfolios in markets with transaction costs}. {\itshape ArXiv
  e-prints}, 2011.

\bibitem[\protect\citeauthoryear{Penner}{2007}]{P07}
Penner, I., Dynamic convex risk measures: time consistency, prudence, and
  sustainability. PhD thesis, Humboldt-Universit\"{a}t zu Berlin, 2007.

\bibitem[\protect\citeauthoryear{Pontrjagin}{1981}]{P81}
Pontrjagin, L., Linear differential games of pursuit. {\itshape Mathematics of
  the USSR-Sbornik}, 1981, \textbf{40}, 285--303.

\bibitem[\protect\citeauthoryear{Riedel}{2004}]{R04}
Riedel, F., {Dynamic coherent risk measures}. {\itshape Stochastic Processes
  and Their Applications}, 2004, \textbf{112}, 185--200.

\bibitem[\protect\citeauthoryear{Ruszczynski and Shapiro}{2006}]{RS05}
Ruszczynski, A. and Shapiro, A., Conditional Risk Mappings. {\itshape
  Mathematics of Operations Research}, 2006, \textbf{31}, 544--561.

\bibitem[\protect\citeauthoryear{Schachermayer}{2004}]{S04}
Schachermayer, W., The Fundamental Theorem of Asset Pricing under Proportional
  Transaction Costs in Finite Discrete Time. {\itshape Mathematical Finance},
  2004, \textbf{14}, 19--48.

\end{thebibliography}
\end{document}